\documentclass[onecolumn,11pt]{IEEEtran}
\usepackage{amsmath, amsthm, amssymb}

\usepackage{hyperref}
\newcommand\myshade{70}
\hypersetup{
	linkcolor  = red!\myshade!black,
	citecolor  = blue!\myshade!black,
	urlcolor   = blue!\myshade!black,
	colorlinks = true,
}

\usepackage{latexsym}
\usepackage{graphicx}
\usepackage{verbatim}
\usepackage{mathrsfs}
\usepackage{float}
\usepackage[lofdepth, lotdepth]{subfig}
\usepackage{array}
\usepackage{url}
\usepackage{algorithm}
\usepackage[noend]{algorithmic}
\usepackage[table]{xcolor}
\usepackage{enumerate}
\usepackage{enumitem}
\usepackage{eucal}
\usepackage{stmaryrd}
\usepackage{mathtools}
\usepackage{pgf}
\usepackage{tikz}
\usetikzlibrary{arrows,automata,positioning}
\usepackage[latin1]{inputenc}
\usetikzlibrary{automata}
\usepackage{subfig}

\usepackage{setspace}
\usepackage{adjustbox}
\onehalfspacing
\usepackage{mathtools}

\newcommand{\floor}[1]{\left\lfloor #1 \right\rfloor}

\newcommand{\ras}[1]{\renewcommand{\arraystretch}{#1}}

\theoremstyle{plain}
\newtheorem{theorem}{Theorem}
\newtheorem{proposition}[theorem]{Proposition}
\newtheorem{lemma}[theorem]{Lemma}

\newtheorem{corollary}[theorem]{Corollary}

\theoremstyle{definition}
\newtheorem{definition}{Definition}
\newtheorem{example}[definition]{Example}
\newtheorem{remark}[definition]{Remark}

\hyphenation{op-tical net-works semi-conduc-tor}
\bibliographystyle{alphaurl}

%\makeatletter
%\let\@@pmod\pmod
%\DeclareRobustCommand{\pmod}{\@ifstar\@pmods\@@pmod}
%\def\@pmods#1{\mkern4mu({\operator@font mod}\mkern 6mu#1)}
%\makeatother

%%%%%%%%%%%%%%%%%%%%%%%%%%%%%%%%%%%%%%%%%%%%%%%%%%%%%
%\mathcal{}

%\begin{comment}
%\newcommand{\A}{{\mathcal A}}

%\newcommand{\C}{{\mathcal C}}

%\newcommand{\G}{{\mathcal G}}

%\newcommand{\cL}{{\mathcal L}}
%\newcommand{\Q}{{\mathcal Q}}

\newcommand{\cS}{{\mathcal S}}
%\end{comment}

%\mathcal{}
%%%%%%%%%%%%%%%%%%%%%%%%%%%%%%%%%%%%%%%%%%%%%%%%%%%%%

%%%%%%%%%%%%%%%%%%%%%%%%%%%%%%%%%%%%%%%%%%%%%%%%%%%%%
%\mathbfsl{}

\DeclareMathAlphabet{\mathbfsl}{OT1}{ppl}{b}{it} %{OT1}{cmr}{bx}{it}

\newcommand{\bu}{{\mathbfsl u}}
\newcommand{\bv}{{\mathbfsl v}}

\newcommand{\by}{{\mathbfsl y}}

\newcommand{\bk}{{\mathbfsl{k}}}
\newcommand{\bj}{{\mathbfsl{j}}}

\newcommand{\bw}{{\mathbfsl{w}}}

\newcommand{\bx}{{\mathbfsl{x}}}
\newcommand{\bz}{{\mathbfsl{z}}}

%\mathbfsl
%%%%%%%%%%%%%%%%%%%%%%%%%%%%%%%%%%%%%%%%%

%%%%%%%%%%%%%%%%%%%%%%%%%%%%%%%%%%%%%%%%%%%%%%%%%%%%%%%%
%\mathbb

\newcommand{\bbZ}{{\mathbb Z}}

%\mathbb
%%%%%%%%%%%%%%%%%%%%%%%%%%%%%%%%%%%%%%%%%%%%%%%%%%%%%%%%

%%%%%%%%%%%%%%%%%%%%%%%%%%%%%%%%%%%%%%%%%%%%%%%%%%%%%%%
%\mathfrak

%\mathfrak
%%%%%%%%%%%%%%%%%%%%%%%%%%%%%%%%%%%%%%%%%%%%%%%%%%%%%%%

%%%%%%%%%%%%%%%%%%%%%%%%%%%%%%%%%%%%%%%%%%%%%%%%%%%%%%%%
%Arrows

%Arrows
%%%%%%%%%%%%%%%%%%%%%%%%%%%%%%%%%%%%%%%%%%%%%%%%%%%%%%%%

%%%%%%%%%%%%%%%%%%%%%%%%%%%%%%%%%%%%%%%%%%%%%%%%%%%%%%%%
%Greek characters

%Greek characters
%%%%%%%%%%%%%%%%%%%%%%%%%%%%%%%%%%%%%%%%%%%%%%%%%%%%%%%%

%%%%%%%%%%%%%%%%%%%%%%%%%%%%%%%%%%%%%%%%%%%%%%%%%%%%%%%%
%Miscellaneous

%%%%%%%%%%%%%%%%%%%%%%%%%%%%%%%%%%%%%%%%%%%%%%%%%%%%%
%\mathscr
%\newcommand{\cC}{{\mathscr{C}}}
%\newcommand{\cG}{{\mathscr{G}}}
%\newcommand{\cV}{{\mathscr{V}}}
%\newcommand{\cE}{{\mathscr{E}}}
%\newcommand{\cH}{{\mathscr{H}}}
%\mathscr
%%%%%%%%%%%%%%%%%%%%%%%%%%%%%%%%%%%%%%%%%%%%%%%%%%%%%

\renewcommand{\ge}{\geqslant}
\renewcommand{\le}{\leqslant}

\newcommand{\todo}[1]{{\color{red}(TODO: #1)}}

%%%% Tu added
\newcommand{\bd}{\mathtt{d}}

\newcommand{\HH}{\mathbb{H}}

\newcommand{\bCap}{{\rm Cap}}
\newcommand{\vz}{\mathbfsl{z}}
\newcommand{\vk}{\mathbfsl{k}}
\newcommand{\vm}{\mathbfsl{m}}
\newcommand{\vzero}{\mathbf{0}}
\newcommand{\Hessian}{\mathtt{H}}
\newcommand{\bbS}{\mathbb{S}}
\newcommand{\cT}{\mathcal{T}}

%Miscellaneous
%%%%%%%%%%%%%%%%%%%%%%%%%%%%%%%%%%%%%%%%%%%%%%%%%%%%%%%%

\begin{document}

\title{Gilbert--Varshamov Bound for Codes in $L_1$ Metric using Multivariate Analytic Combinatorics}
%\title{Evaluation of the Gilbert--Varshamov Bound using Multivariate Analytic Combinatorics\\[0mm]}
\author{
		\IEEEauthorblockN{
			Keshav Goyal\IEEEauthorrefmark{1},
			Duc Tu Dao\IEEEauthorrefmark{1},
			Mladen Kova{\v{c}}evi{\'c}\IEEEauthorrefmark{2}, and 
			Han Mao Kiah\IEEEauthorrefmark{1}}
		
		\IEEEauthorblockA{
			\IEEEauthorrefmark{1}\small School of Physical and Mathematical Sciences, Nanyang Technological University, Singapore\\
			\IEEEauthorrefmark{2}\small Faculty of Technical Sciences, University of Novi Sad, Serbia\\
			Emails: \{keshav002, daoductu001, hmkiah\}@ntu.edu.sg, kmladen@uns.ac.rs\\[-5mm]
		}
  \thanks{The paper was presented in part at the 2023 IEEE International Symposium on Information Theory (ISIT) \cite{Keshav2023}.
  
%  KG, DTD, and HMK were supported by the Ministry of Education, Singapore, under its MOE AcRF Tier 2 Award MOE-T2EP20121-0007. 
MK was supported by the Secretariat for Higher Education and Scientific Research of the Autonomous Province of Vojvodina (project number 142-451-2686/2021).}
	}
	
\maketitle

\begin{abstract}
Analytic combinatorics in several variables refers to a suite of tools that
provide sharp asymptotic estimates for certain combinatorial quantities.
In this paper, we apply these tools to determine the Gilbert--Varshamov lower
bound on the rate of optimal codes in $L_1$ metric.
Several different code spaces are analyzed, including the simplex and the
hypercube in $\bbZ^n$, all of which are inspired by concrete data storage and
transmission models such as the sticky insertion channel, the permutation channel,
the adjacent transposition (bit-shift) channel, the multilevel flash memory channel, etc.
\end{abstract}

%\begin{IEEEkeywords}
%Analytic combinatorics, GV bound, Manhattan distance, discrete simplex,
%sticky insertion, tandem duplication, multiset code, adjacent transposition, bit-shift, flash memory.
%\end{IEEEkeywords}
	
	%%%%%%%%%%

\section{Introduction}
\label{sec:intro}

Codes in $L_1$ (or Manhattan)\footnote{Recall that the $L_1$ distance between two vectors $\bu=(u_1, \ldots, u_n), \bv=(v_1, \ldots, v_n) \in\mathbb{R}^n$ is defined as $D(\bu,\bv)=\sum_{i=1}^n|u_i - v_i|$.} metric arise as appropriate constructs for error correction in a surprisingly diverse set of applications, in models that at first glance do not have much in common.
We list below only a few examples that served as motivation for the present work.
\begin{itemize}%[leftmargin=*]
\item
The sticky insertion (or repetition, or duplication) channel, as well as the $0$-insertion channel that it is equivalent to, were introduced as models for communication in the presence of certain types of synchronization errors.
Codes for these channels \cite{levenshtein1965, dolecek2010repetition, tallini2010, mahdavifar2017} can equivalently be described in the space \cite{kovavcevic2018multi, Sole2018}
\begin{equation}
\label{eq:simplex1}
\Delta^{\!+}_{n,r} = \left\{(u_1,u_2, \ldots, u_r)  \in \mathbb{Z}^r : u_i \ge 1, \sum_{i=1}^r u_i = n\right\}
\end{equation}
under $L_1$ metric, meaning that every code correcting up to $t$ sticky insertions can be obtained from codes in the simplex $\Delta^{\!+}_{n,r}$ having minimum $L_1$ distance $>\!2t$.
In this description, the parameter $n$ corresponds to the length of the input sequences and the parameter $r$ to the number of runs of identical symbols in those sequences.
A very similar characterization holds also in more general channels with uniform tandem duplication errors/mutations that are relevant for in vivo DNA-based data storage systems \cite{jain2017, kovavcevic2018, lenz2018}.
\item
The permutation channel, which randomly reorders the transmitted symbols, has been studied as a model for networks that do not guarantee in-order delivery of packets, as well as for unordered data storage systems, in particular those based on DNA.
Multiset codes \cite{kovavcevic2015, kovavcevic2018multi} that are appropriate for some channels of this kind can equivalently be described in the space
\begin{equation}
\label{eq:simplex2}
\Delta_{n,r} = \left\{(u_1,u_2, \ldots, u_r)  \in \mathbb{Z}^r : u_i \ge 0, \sum_{i=1}^r u_i = n\right\}
\end{equation}
under $L_1$ metric, meaning that a multiset code correcting (e.g.) up to $t$ symbol deletions is equivalent to a code in $\Delta_{n,r}$ having minimum $L_1$ distance $>\!2t$.
Here the parameter $n$ represents the number of transmitted symbols, i.e., code length, and the parameter $r$ the size of the alphabet the symbols take values from.
\item
The binary channel in which input sequences may be impaired by adjacent transpositions (or bit-shifts) was analyzed quite extensively as a model of some magnetic recording devices \cite{shamai1991}.
Codes for such a channel \cite{Vasilev1990, levenshtein1993, Kuznetsov1991, Kolesnik1994} can equivalently be described in the space \cite{kovavcevic2019RLL}
\begin{equation}
\label{eq:nabla}
\nabla_{n,r} = \big\{(u_1,u_2,\ldots, u_r) \in \bbZ^r : 1 \le u_1 < u_2 < \cdots < u_r \le n \big\}
\end{equation}
under $L_1$ metric, meaning that every code correcting up to $t$ adjacent transpositions can be obtained from codes in $\nabla_{n,r}$ having minimum $L_1$ distance $>\!2t$.
In this description, the parameter $n$ corresponds to the length of the input sequences and the parameter $r$ to the Hamming weight of those sequences.
\item
Various types of channels in which the symbols are represented by different voltage/charge levels are of interest in practice, e.g., in digital communication systems employing Pulse Amplitude Modulation, in multilevel flash memories, etc.
The set of all possible inputs in such channels can be represented as
\begin{equation}
\label{eq:cube}
\bbZ_{q}^{n} = \big\{(u_1,u_2,\ldots, u_n) \in \bbZ^n : 0 \le u_i \le q-1 \big\} ,
\end{equation}
where the parameter $n$ corresponds to the code length and the parameter $q$ to the size of the alphabet, i.e., the number of different voltage levels.
It is easy to see that a code correcting a total voltage change of up to $t$ is equivalently described as a code in $\bbZ_q^n$ having a minimum $L_1$ distance $>\!2t$.
See for example \cite{bitouze2010} for an application of the ternary case $\bbZ_{3}^{n}$.
\end{itemize}

In this paper, we study the highest attainable asymptotic rates of codes in the spaces \eqref{eq:simplex1}--\eqref{eq:cube} having minimum $L_1$ distance $d=\delta n$, for any fixed $\delta$ and $n\to\infty$.
In particular, our main contributions include lower bounds on these rates, which, as mentioned in the above examples, can be directly translated into lower bounds on the rates of optimal codes correcting a $\frac{\delta}{2}$-fraction of sticky insertions, transpositions, voltage jumps/drops, etc.
Almost all prior works that discussed bounds on codes for relevant channel models focused on the regime $d=\textnormal{const}$, $n\to\infty$.
To the best of our knowledge, only the work \cite{Kolesnik1994} studied the problem under the assumption $d=\delta n$.
As we shall demonstrate in Section~\ref{sec:nabla}, our bound significantly improves upon the bound from \cite{Kolesnik1994}.

The bounds we derive are versions of the well-known Gilbert--Varshamov (GV) bound \cite{Gilbert1952, Varshamov1957} or, more precisely, of the generalization thereof obtained by Gu and Fuja \cite{gu1993generalized}, which states that the maximum cardinality of a code of minimum distance $d$ is lower-bounded by the ratio of the size of the input space and the \emph{average} volume of a ball of radius $d-1$ in that space.
We shall also follow the approach suggested by Marcus and Roth \cite{marcus1992improved} for further improving this bound.
In order to compute the average ball volume, we employ the tools of multivariate analytic combinatorics (see \cite{Melczer2021} for an introductory text and \cite{pemantle2008} for a survey of combinatorial applications).
We remark that the use of generating functions in determining the GV bound, and in coding theory more generally, is not new.
For example, in one of the pioneering papers, Kolesnik and Krachkovsky \cite{kolesnik1991generating} employed generating functions to compute the GV bound for runlength-limited codes.
However, studies of the multivariate case are very recent \cite{ kovavcevic2019RLL,kovavcevic2021asymptotic,lenz2021multivariate}.
The present paper continues this line of work and we hope it will contribute to inspiring further research on the applications of multivariate analytic combinatorics in coding theory.

\pagebreak
The paper is organized as follows.
In Section~\ref{sec:prelim} we recall the statement of the GV bound and its improvements for general code spaces.
In Section~\ref{sec:ACSV} we give an overview of the results from multivariate analytic combinatorics, and state several refinements thereof, that will be needed in the derivations that follow.
Our main results, GV bounds for codes in the $L_1$ metric, are given in Section~\ref{sec:simplex} for the spaces \eqref{eq:simplex1}--\eqref{eq:nabla} and in Section~\ref{sec:hypercube} for the space \eqref{eq:cube}.
For easier reference, the notation used throughout the paper is summarized in Table~\ref{Notationtable}.

\begin{table}[t]
	\ras{1.5}
	\setlength{\tabcolsep}{4pt}	
	\begin{center}
		\begin{adjustbox}{width=1\textwidth}
			\begin{tabular}{|c|c|c|}
				\hline
				Notation & Remark & Formula\\
				\hline\hline
				$\Sigma$ & Alphabet &  \\
				\hline
				$\cS$ & Constrained (Ambient) Space & \\
				\hline
				$\cS_n$& & $\cS \cap \Sigma^n$  \\
				\hline
				$\bd$ & Metric defined on $\cS$ & \\
				\hline
				$d$ & Distance &  $d= \floor{\delta n}$ \\
				\hline
				$A(\cS_n,d)$ & Maximum code size & \\
				\hline
				$\alpha(\cS,\delta)$ & Highest attainable rate& $\limsup_{n \to \infty} \frac{\log A(\cS_n,\floor{\delta n})}{n}$ \\
				\hline
				$\bCap(\cS)$ & Capacity of $\cS$ &$\limsup_{n \rightarrow \infty} \frac{\log |\cS_n|}{n}$\\
				\hline
				$T(\cS_n,d-1)$ & Total ball  & $\{(\bu,\bv)\in\cS_n \times \cS_n : \bv\in V(\bu,d-1)\}$  \\
				\hline
				$\widetilde{T}(\cS,\delta)$ &  &$\limsup_{n\rightarrow \infty} \frac{\log |T(\cS_n,\floor{\delta n})|}{n}$ \\
				\hline
				$R_\textsc{gv}(\cS, \delta)$ & Standard GV bound  & $2\bCap(\cS) - \widetilde{T}(\cS,\delta)$\\
				\hline
				\hline
				$\cS(\tau)$ & A subset of the Constrained Space & \\
				\hline
				$R_\textsc{mr}(\cS, \delta)$ & GV-MR bound & $\max_{\tau\in I} \big[2\bCap(\cS(\tau))-\widetilde{T}(\cS(\tau), \delta )\big]$\\
				\hline
				\hline
				$\bu, \bv, \bx, \by, \bz, \bw, \bk$ & Vectors & \\
				\hline
				$\bz^\bk$ & Monomial in $\bz, \; \bk \in \bbZ_{\ge 0}^\ell$ & $\prod_{i=1}^\ell z_i^{k_i}$ \\
				\hline 
				$F(\bz), G(\bz), H(\bz)$ & Functions of the vector $\bz$ & \\
				$\bz^*$ & Root of a function & \\
				\hline\hline
				$D(\bu,\bv)$ & $L_1$ distance between $\bu, \bv \in \bbZ^n$ & $\sum_{i=1}^n |u_i - v_i|$ \\
				\hline
				$ \Delta_{n,r}$ & Standard simplex & $\{ (u_1,u_2,\ldots, u_r) \in \bbZ^r: u_i \ge 0, \sum_{i=0}^{r} u_i= n\}$  \\
				\hline
				$ \Delta_{n,r}^{\!+}$ & Positive simplex & $\{ (u_1,u_2,\ldots, u_r) \in \bbZ^r: u_i \ge 1, \sum_{i=0}^{r} u_i= n\}$  \\
				\hline
				$ \nabla_{n,r}$ & Inverted simplex &$\{(u_1,u_2,\ldots, u_r) \in \bbZ^r: 1 \le u_1 < u_2 \ldots < u_r \le n\}$ \\
				\hline
				$ \bbZ_q^n$ & Hypercube & $\{(u_1,u_2,\ldots, u_n) \in \bbZ^n: 0 \le u_i \le q-1 \}$ \\
				\hline\hline
                $\HH(p)$ & Binary entropy function & $-p \log p -(1-p)\log(1-p)$ \\
                \hline
			\end{tabular}	
		\end{adjustbox}
	\end{center}
\caption{}
\label{Notationtable}
\end{table}

\section{Preliminaries}
\label{sec:prelim}
Let $\Sigma$ be an alphabet, 
$\Sigma^n$ the set of all words of length $n$ over $\Sigma$, and 
$\Sigma^*=\bigcup_{n=1}^\infty\Sigma^n$ the set of all finite-length words over $\Sigma$.
Let $\cS\subseteq \Sigma^*$ be a constrained space and $\cS_n=\cS\cap \Sigma^n$.
%Let $\cS_n$ be a subset of $\Sigma^n$ and 
Let $\bd: \cS \times \cS \rightarrow \bbZ_{\ge 0}\cup\{\infty\}$ be a metric defined on $\cS$. 
A subset $C \subseteq \cS_n$ such that $\bd(\mathbfsl{c}_1, \mathbfsl{c}_2) \ge d$ for all distinct $\mathbfsl{c}_1, \mathbfsl{c}_2 \in C$ is called an $(\cS_n,d)$-code. The maximum cardinality of a code having a given minimum distance, denoted
\begin{align}
A(\cS_n,d) \triangleq \max\big\{|C|: C \subseteq \cS_n,
\bd(\mathbfsl{c}_1,\mathbfsl{c}_2) \ge\,d \text{ for all } \mathbfsl{c}_1, \mathbfsl{c}_2 \in C, \mathbfsl{c}_1 \neq \mathbfsl{c}_2\big\} ,
\end{align}
is the quantity of central importance in coding theory. In particular, one is interested in determining the highest attainable asymptotic rate,
\begin{align}
\alpha(\cS,\delta)=\limsup_{n \to \infty} \frac{\log A(\cS_n,\floor{\delta n})}{n} ,
\end{align}
for any fixed $\delta$.
An exact characterization of this rate remains elusive in all nontrivial models.
We next describe the best known general lower bound on $\alpha(\cS, \delta)$, which is the main object of study in this paper.

\subsection*{The Gilbert--Varshamov Bound}

For $\bu\in \cS_n$, denote by $V(\bu,r) = \{\bv \in \cS_n: \bd(\bu,\bv) \le r\}$ the ball of radius $r$ centered at $\bu$. 
If $|V(\bu,r)|$ is constant over all $\bu \in \cS_n$, the GV bound states that $A(\cS_n,d) \ge |\cS_n|/|V(\bu,d-1)|$. % for any $\bu\in \cS_n$. 
In non-uniform spaces, however, in which $|V(\bu,r)|$ depends on $\bu$, the bound needs to be adapted. 
Kolesnik and Krachkovsky~\cite{kolesnik1991generating} showed that 
the GV lower bound can be generalized to $|\cS|/4\overline{V(d-1)}$, where $\overline{V(d-1)} = \frac{1}{|\cS_n|}\sum_{\bu\in\cS_n} |V(\bu,d-1)| $ is the average ball volume. %\cite{kolesnik1991generating}. 
This was further improved by Gu and Fuja~\cite{gu1993generalized} to $|\cS_n|/\overline{V(d-1)}$. 
%For simplicity, we define the total ball size $|T(\cS_n,d-1)| = \sum_{\bu\in\cS_n} |V(\bu,d-1)|$. 
For convenience, we consider the collection of word pairs
\begin{align}
T(\cS_n,d-1) \triangleq \big\{(\bu,\bv)\in\cS_n \times \cS_n : \bv\in V(\bu,d-1)\big\} .
\end{align}
Hence, $|T(\cS_n,r)|=|\cS_n|\cdot\overline{V(r)}$ represents the ``total ball size'', and the above-mentioned result of Gu and Fuja can be restated as \begin{align}
\label{eq:GVtotalball}
A(\cS_n,d) \ge \frac{|\cS_n|^2}{|T(\cS_n,d-1)|} .
\end{align}

In terms of asymptotic rates when $n\to\infty$, the bound \eqref{eq:GVtotalball} asserts that there exists a family of $(\cS_n,\floor{\delta n})$-codes such that their rates approach 
\begin{equation}
\label{eq:gv}
	R_\textsc{gv}(\cS, \delta) = 2\bCap(\cS) - \widetilde{T}(\cS,\delta)
\end{equation}
where
\begin{equation}
\bCap(\cS) \triangleq \limsup_{n \rightarrow \infty} \frac{\log |\cS_n|}{n}
\end{equation}
and
\begin{equation}
\widetilde{T}(\cS,\delta) \triangleq \limsup_{n\rightarrow \infty} \frac{\log |T(\cS_n,\floor{\delta n})|}{n} .
\end{equation}
Note that $\bCap(\cS) = \widetilde{T}(\cS,0)$.
Therefore, in order to determine $R_\textsc{gv}(\cS, \delta)$, we need to compute  $\widetilde{T}(\cS,\delta)$.

Later, Marcus and Roth \cite{marcus1992improved} improved the GV bound \eqref{eq:gv} by considering certain subsets of the constrained space $\cS$, which are denoted $\cS(\tau)$ for some parameter $\tau$ in a bounded interval $I$; 
we shall refer to this bound as the GV-MR bound. Let $\cS_n(\tau)$ be the set of all words of length $n$ in $\cS(\tau)$ and define $\bCap(\cS(\tau)) = \limsup_{n \rightarrow \infty} \frac{1}{n}\log |\cS_n (\tau)|$. 
Similar to before, define also $\widetilde{T}(\cS(\tau), \delta) = \limsup_{n \rightarrow \infty} \frac{1}{n}\log |T ( \cS_n(\tau),\floor{\delta n})|$. 
Since $\cS_n(\tau)$ is a subset of $\cS_n$, it follows from the usual GV argument that there exists a family of $(\cS_n,\floor{\delta n})$-codes whose rates approach $2\bCap(\cS(\tau))-\widetilde{T}(\cS(\tau), \delta )$ for every $\tau$. 
Therefore, we have the following lower bound on achievable asymptotic code rates:
\begin{equation}
\label{eq:GVMR}
R_\textsc{mr}(\cS, \delta) =  \max_{\tau\in I} \big[2\bCap(\cS(\tau))-\widetilde{T}(\cS(\tau), \delta )\big].
\end{equation}
A key remark from \cite{marcus1992improved} is that both $\bCap(\cS(\tau))$ and $\widetilde{T}(\cS(\tau), \delta )$ can be obtained via different optimization problems.

We refer the reader to \cite{keshav2022evaluating} for a discussion of efficient numerical procedures for solving the optimization problems appearing in the evaluation of the GV and GV-MR bounds.

\section{Analytic Combinatorics in Several Variables (ACSV)}
\label{sec:ACSV}

In many cases of interest, generating functions provide a concise description of the combinatorial quantity $|T(\cS_n,\floor{\delta n})|$ that is needed for determining the GV bound. 
As most of these generating functions involve several variables, 
we will need to borrow tools from multivariate analytic combinatorics to provide the required asymptotic estimates.

Let the number of variables be $\ell$ and let $\vz$ denote the $\ell$-tuple $(z_1,\ldots, z_\ell)$. For $\vk \in \bbZ_{\ge 0}^\ell$, let $\vz^\vk$ denote the monomial $\prod_{i=1}^\ell z_i^{k_i}$. 
Consider a multivariate array of non-negative integers $\{a_{\vk}\}_{\vk \in \bbZ_{\ge 0}^\ell}$ with the generating function $F(\vz) = \sum_{\vk}a_{\vk} \vz^\vk$.
%
%$\{a_{\vk} \in \bbZ_{\geq 0} : \vk \in \bbZ_{\geq 0}^\ell\}$
%We will introduce how to apply $\it{AC}$ to consider the asymptotic behavior of $a_\vk$.  
%In this paper, we always assume $F(\vz) = \sum_{\vk}a_{\vk} \vz^\vk= \frac{G(\vz)}{H(\vz)}$ where $G$ and $H$ are both analytic and $H(\vzero) \neq 0$.
%We assume that $a_{\vk}$ is eventually positive.
%
The following theorem is crucial for this paper.

\begin{theorem}[{\cite[Theorem 1.3]{pemantle2008}}]
\label{thm:acsv}
Let $F(\vz) = \sum_{\vk}a_{\vk} \vz^\vk= \frac{G(\vz)}{H(\vz)}$ where $G$ and $H$ are both analytic, $H(\vzero) \neq 0$, and $a_{\vk} > 0$.
	%Suppose $a_\vk \geq 0$.
	For each $\vk= (k_1,k_2,\ldots k_\ell)>\vzero$  %in the positive orthant, 
	there is a unique solution $\vz^* = (z_1^*, z_2^*, \ldots, z_\ell^*) > \vzero$ %in the positive orthant 
	satisfying the equations
	\begin{equation}\label{eq:partial}
	\begin{split}
		H(\vz) &= 0\\
		k_\ell z_j \frac {\partial H(\vz)}{\partial z_j}&= k_j z_\ell \frac{\partial H(\vz)}{\partial z_\ell}  \;\text{ for }\; 1 \le j \le \ell -1.
	\end{split}
	\end{equation}
	Furthermore, if $G(\vz^*) \neq 0$,
	\begin{equation}\label{eq:approx}
		a_\vk \sim (2\pi)^{-(\ell-1)/2}\Hessian^{-1/2}\frac{G(\vz^*)}{-z_\ell \frac{\partial H(\vz)}{ \partial z_\ell} \rvert_{\vz =\vz^*}} k_\ell^{-(\ell-1)/2} (\vz^*)^{-\vk},
	\end{equation}
	\noindent {where $\Hessian$ is the determinant of the Hessian of the function parametrizing the hypersurface $\{H=0\}$ in logarithmic coordinates.}
\end{theorem}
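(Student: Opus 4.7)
The plan is to follow the standard residue-plus-saddle-point strategy developed by Pemantle and Wilson. First I would represent $a_\vk$ through the multivariate Cauchy integral formula: for any polytorus $\TT_\br = \{|z_j| = r_j\}$ lying inside the domain of convergence of $F$,
\begin{equation*}
a_\vk \;=\; \frac{1}{(2\pi i)^\ell}\oint_{\TT_\br} F(\vz)\,\vz^{-\vk}\,\frac{dz_1 \cdots dz_\ell}{z_1 \cdots z_\ell}.
\end{equation*}
Since $a_\vk > 0$, the natural move is to inflate the radii until the torus abuts the singular variety $\{H=0\}$ and to localize all subsequent analysis near a smooth point $\vz^*$ of that variety which is ``minimal'' for the direction $\vk$.

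Second, I would reduce the dimension by a residue in the innermost variable. Because $\partial H/\partial z_\ell \neq 0$ at a smooth point, the implicit function theorem identifies a simple pole of $1/H$ in $z_\ell$; evaluating the inner integral by residues peels off exactly the factor $G(\vz)/\bigl(-z_\ell\,\partial H/\partial z_\ell\bigr)\big|_{H=0}$ that appears in \eqref{eq:approx}, leaving an $(\ell-1)$-dimensional integral of $\vz^{-\vk}$ along a real torus inside the pole hypersurface. Passing to logarithmic coordinates $z_j = e^{w_j}$ turns this integrand into an oscillatory exponential with phase $-\vk\cdot\mathbf{w}$ constrained to $H(e^{w_1},\ldots,e^{w_\ell}) = 0$.

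Third, a Lagrange-multiplier analysis locates the critical point. Stationarity of the phase subject to the constraint yields $k_j = \lambda\,z_j\,\partial_{z_j}H$ for some $\lambda$; eliminating $\lambda$ via the $j=\ell$ equation recovers precisely the system \eqref{eq:partial}, whose unique positive real solution is $\vz^*$. A standard stationary-phase (Gaussian) estimate of the residual $(\ell-1)$-dimensional integral then contributes $(2\pi)^{-(\ell-1)/2}\,\Hessian^{-1/2}\,k_\ell^{-(\ell-1)/2}$ times $(\vz^*)^{-\vk}$; multiplied by the residue prefactor this gives \eqref{eq:approx}.

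The principal obstacle, to which Pemantle and Wilson devote the bulk of their work, is the global justification of the contour deformation: one has to verify that $\vz^*$ is genuinely a \emph{minimal} critical point, in the sense that no other singularity of $F$ contributes an exponential factor of equal or larger size, and that the amoeba boundary is smooth at $\vz^*$ so that the implicit parametrization and its Hessian behave as expected. Positivity of the coefficients is what prevents oscillatory cancellations from relocating the dominant saddle, and the hypothesis $G(\vz^*)\neq 0$ ensures that the leading term does not vanish. Once these conditions are in place, the Gaussian integration and the residue identity combine to yield \eqref{eq:approx} with an error term suppressed by a further factor of $k_\ell^{-1}$.
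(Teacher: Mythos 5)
The paper does not prove this theorem; it imports it verbatim from Pemantle and Wilson \cite[Theorem~1.3]{pemantle2008} and builds on it as a black box. Your outline is an accurate pr\'ecis of the Pemantle--Wilson proof strategy (Cauchy integral, residue in the innermost variable, Lagrange-multiplier derivation of the critical-point system \eqref{eq:partial}, stationary-phase Gaussian estimate producing the $\Hessian^{-1/2}k_\ell^{-(\ell-1)/2}$ factor), and you correctly flag that the substantive technical burden in the original lies in justifying the contour deformation and the minimality/smoothness of $\vz^*$; since the paper offers no proof of its own, there is nothing further to reconcile.
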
 
For a detailed calculation of the Hessian matrix, we refer the reader to \cite[Lemma 5]{Melczer2021}. 
More general asymptotic results are available in \cite[Theorems 5.1--5.4]{Melczer2021}. In this paper, we are interested in the case when all coordinates of $\vk$ grow linearly with $n$, i.e., $k_i = nr_i$ for fixed $r_i$, $1\le i\le \ell$. 
In this case all terms in \eqref{eq:approx} tend to constants except $k_\ell^{-(\ell-1)/2} (\vz^*)^{-\vk}$,
and the asymptotic behavior of the sequence $a_\vk$ can be simplified to:
\begin{align}
	a_{\vk} = a_{nr_1,nr_2,\ldots, nr_\ell} &= \Theta\!\left( n^{(\ell-1)/2}\prod_{i=1}^{\ell} (z_i^*)^{-nr_i}\right) \label{eq:asym_approx-0}\\
	%+1+o(1) \label{eq:asym_approx-0}\\
	\lim_{n \rightarrow \infty} \frac{\log a_{nr_1, nr_2, \ldots, nr_\ell}}{n}  &= -\sum_{i=1}^{\ell} r_i\log z_i^* \label{eq:asym_approx}
\end{align}
To illustrate the theorem, we modify an example from \cite{pemantle2008}. %Pemantle and Wilson.

\begin{example}[{Binomial coefficients~\cite[Section 4.1]{pemantle2008}}]\label{exa:binomial}
	Consider the bivariate ($\ell=2$) sequence $a_{n,k} = \binom{n}{k}$. The following recursion holds $a_{n,k} = a_{n-1,k-1}+a_{n-1,k}$ for all $(n,k) \neq (0,0)$, from which the generating function $F(\vz) = \frac{1}{1-z_1-z_1z_2}$ can be derived, namely:	
	\begin{align*}
		F(\vz) &= \sum_{n,k \ge 0} a_{n,k}z_1^{n} z_2^{k} \\
%		&= \sum_{k_1,k_2 > 0} (a_{n-1,k_2}+a_{k_1,k_2-1})z_1^{k_1} z_2^{k_2}\\
		&= 1 + \sum_{(n,k)\ne (0,0)} (a_{n-1,k-1}+a_{n-1,k}) z_1^{n} z_2^{k}\\
		&= 1 + \sum_{(n,k)\ne (0,0)} a_{n-1,k-1} z_1^{n} z_2^{k}+ \sum_{(n,k)\ne (0,0)} a_{n-1,k} z_1^{n} z_2^{k}\\
		&= 1 + (z_1z_2+z_1)F(\vz) .
	\end{align*}
Hence, $G(\vz)=1$ and $H(\vz)= 1-z_1-z_1z_2$.
Further, we can explicitly solve the system of equations \eqref{eq:partial}, which in this example has the form:
	\begin{align*}
		1-z_1-z_1z_2 &= 0, \\ %&\Leftrightarrow && z_1+z_1z_2&= 1 
		k z_1 \frac{\partial H}{\partial z_1} &= n z_2 \frac{\partial H}{\partial z_1} \,.
	\end{align*}
	From the first equation, we have $z_1+z_1z_2= 1$. Since $z_1 \frac{\partial H}{\partial z_1}= -z_1-z_1z_2 = -1$ and $z_2 \frac{\partial H}{\partial z_1} =-z_1z_2$, we have $k=nz_1z_2$.
	Hence, the solution is $\vz^* = \big(1-\frac{k}{n},\frac{k/n}{1-k/n}\big)$.
	When we fix $k = rn$, we obtain from \eqref{eq:asym_approx} the well-known fact that
	\begin{equation*}
	\lim_{n\rightarrow \infty} \frac{\log a_{n,rn}}{n} = - \log (1-r) - r \log \frac{r}{1-r} 
	=- (1-r) \log (1-r) - r \log r =  \HH(r) .
	\end{equation*}
%	\begin{equation*}
%		a_{k_1,k_2} \sim \left(\frac{k_1+k_2}{k_1}\right)^{k_1} \left(\frac{k_1+k_2}{k_2}\right)^{k_2} \sqrt{\frac{k_1+k_2}{2 \pi k_1 k_2 }}.
%	\end{equation*}
	%which is identical to the result from Stirling's approximation.
	%Furthermore, when $k_1 = p n$ and $k_2 = (1-p)n $ for $0<p<1$, then $\vz^* = (p,1-p)$.
	%From (\ref{eq:asym_approx}), we obtain the asymptotic rate $\lim_{n \rightarrow \infty} \frac{\log a_{pn, (1-p)n}}{n}= -p\log p -(1-p)\log (1-p)=\HH(p)$. 
\end{example}

{%\color{blue}

Further, we note that it is possible to reduce the number of variables $\ell$ involved in Theorem~\ref{thm:acsv} using a symmetry argument. We say two dummy variables $z_1$ and $z_2$ are symmetric if $k_1 = k_2$ and interchanging $z_1$ and $z_2$ will have no effect on the generating function $F(\vz)$. Sometimes the symmetric variables can be identified before even finding the generating function just by observing certain symmetry properties of the quantity of interest. Specifically, for the computation of total ball size, it is possible to reduce the number of variables almost by a factor of $1/2$, as will be demonstrated in the next section.

\begin{proposition}\label{prop:barpartial}
	Consider a multivariate polynomial $H(\vz)$ such that in each monomial $\vz^{\vm}$
	%$z_1^{m_1}z_2^{m_2}\cdots z_{\ell}^{m_\ell}$, 
	we have that $m_1=m_2$. Consider $\vk\in\bbZ_{>0}^\ell$ with $k_1=k_2$. 
	If $\vz^*$ is a solution to \eqref{eq:partial},
	then we have $z_1^* = z_2^*$. 
	We define another multivariate polynomial $\bar{H}$ that involves $\ell-1$ variables. Specifically, we set $\bar{H}(\bar{\vz})=\bar{H}(\bar{z_1},\bar{z_3},\ldots,\bar{z_\ell} )\triangleq H(\bar{z_1},\bar{z_1},\bar{z_3},\ldots,\bar{z_\ell})$.
	 If $\vz^*$ is a solution to \eqref{eq:partial}, 
	 then $\bar{\bz}^*\triangleq(z_1^*,z_3^*,\ldots, z_\ell^*)$ is a solution to the following set of equations	
	\begin{align*}\label{eq:barpartial}
		\bar{H}(\bar{\vz}) &= 0\\
		k_\ell \bar{z_1} \frac{\partial \bar{H}(\bar{\vz})}{\partial \bar{z_1}}&= 2k_1 \bar{z_\ell} \frac{\partial \bar{H}(\bar{\vz})}{\partial \bar{z_\ell}} \\
		k_\ell \bar{z_j} \frac {\partial \bar{H}(\bar{\vz})}{\partial \bar{z_j}}&= k_j \bar{z_\ell} \frac{\partial \bar{H}(\bar{\vz})}{\partial \bar{z_\ell}}  \;\text{ for }\; 3 \le j \le \ell -1.
	\end{align*}
\end{proposition}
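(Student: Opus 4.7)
The plan has two parts, corresponding to the two assertions in the statement: first that $z_1^*=z_2^*$, and second that $\bar\bz^*$ solves the reduced system.

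For the first assertion, I would exploit a symmetry argument. The hypothesis that every monomial $\vz^{\vm}$ appearing in $H$ satisfies $m_1=m_2$ means that $H$ is invariant under the transposition $z_1\leftrightarrow z_2$, i.e.\ $H(z_1,z_2,z_3,\ldots,z_\ell)=H(z_2,z_1,z_3,\ldots,z_\ell)$. When $k_1=k_2$, the full system \eqref{eq:partial} is likewise invariant under this swap: the equation for $j=1$ becomes the equation for $j=2$ and vice versa, while the remaining equations and $H=0$ are unchanged. Hence if $\vz^*=(z_1^*,z_2^*,z_3^*,\ldots,z_\ell^*)$ is the unique positive solution guaranteed by Theorem~\ref{thm:acsv}, then $(z_2^*,z_1^*,z_3^*,\ldots,z_\ell^*)$ is another positive solution, and uniqueness forces $z_1^*=z_2^*$.

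For the second assertion, I would translate the original system at $\vz^*$ into a system for $\bar H$ at $\bar\bz^*$ via the chain rule. Differentiating $\bar H(\bar z_1,\bar z_3,\ldots,\bar z_\ell)=H(\bar z_1,\bar z_1,\bar z_3,\ldots,\bar z_\ell)$ gives
\begin{equation*}
\frac{\partial \bar H}{\partial \bar z_1}\bigg|_{\bar\bz^*} = \frac{\partial H}{\partial z_1}\bigg|_{\vz^*}+\frac{\partial H}{\partial z_2}\bigg|_{\vz^*} = 2\,\frac{\partial H}{\partial z_1}\bigg|_{\vz^*},
\end{equation*}
where the second equality uses the $z_1\leftrightarrow z_2$ symmetry of $H$ together with $z_1^*=z_2^*$ to conclude that $\partial H/\partial z_1$ and $\partial H/\partial z_2$ agree at $\vz^*$. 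For $j\ge 3$, the chain rule simply gives $\partial \bar H/\partial \bar z_j|_{\bar\bz^*}=\partial H/\partial z_j|_{\vz^*}$. Also $\bar H(\bar\bz^*)=H(\vz^*)=0$.

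Substituting these identities into the claimed reduced system recovers exactly the original equations: the identity $k_\ell\bar z_1\partial\bar H/\partial\bar z_1=2k_1\bar z_\ell\partial\bar H/\partial\bar z_\ell$ at $\bar\bz^*$ becomes $2k_\ell z_1^*\partial H/\partial z_1|_{\vz^*}=2k_1z_\ell^*\partial H/\partial z_\ell|_{\vz^*}$, which is the $j=1$ equation of \eqref{eq:partial}, while for $3\le j\le \ell-1$ the identity is tautologically the $j$-th equation of \eqref{eq:partial}. Hence $\bar\bz^*$ satisfies the reduced system. I do not anticipate any genuine obstacle here — it is a straightforward symmetry plus chain-rule bookkeeping — but the one step that deserves care is invoking the uniqueness clause of Theorem~\ref{thm:acsv} correctly, since it is exactly uniqueness (not mere existence) that yields $z_1^*=z_2^*$ from the symmetry of the system.
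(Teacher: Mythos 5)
Your proof is correct and takes essentially the same approach as the paper's; the differences are cosmetic for the second assertion and actually a slight improvement for the first. For the second assertion you use the chain rule to write $\partial\bar H/\partial\bar z_1=\partial H/\partial z_1+\partial H/\partial z_2$ and then invoke symmetry on the diagonal to obtain the factor $2$, whereas the paper differentiates the power-series form $\bar H(\bar\vz)=\sum_\vm c_\vm\bar z_1^{2m_1}\prod_{i\ge 3}\bar z_i^{m_i}$ directly and reads the $2m_1$ off the exponent; these are the same calculation in two dresses. For the first assertion ($z_1^*=z_2^*$), the paper merely displays the identity $k_2 z_1 H_{z_1}=k_1 z_2 H_{z_2}$ and writes ``Hence $z_1^*=z_2^*$,'' but that identity holds for \emph{all} $\vz$ and so cannot by itself single out the diagonal; the uniqueness clause of Theorem~\ref{thm:acsv} applied to the swapped point, which you state explicitly, is precisely the missing link the paper leaves unspoken.

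One remark on the step you flagged as deserving care: you were right to flag it. Strictly, the hypothesis $a_\vk>0$ of Theorem~\ref{thm:acsv} is not met in this setting, because the assumption that every monomial of $H$ has $m_1=m_2$ makes $F=G/H$ a function of $z_1z_2$ (when $G$ shares the property), so $a_\vk=0$ whenever $k_1\ne k_2$. Consequently the uniqueness clause is being invoked slightly outside its stated conditions --- by you and by the paper alike. This is a pre-existing looseness in the paper's framing of the proposition, not a defect your argument introduces; within the paper's conventions both proofs are equally valid.
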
 \begin{proof}
	Since $H(\vz)$ is a multivariate polynomial, we have $H(\vz) = \sum_{\vm}c_{\vm}\prod_{i=1}^{\ell}z_i^{m_i}$. We differentiate $H(\vz)$ with respect to $z_1$ and $z_2$ and multiply by $k_{2}z_{1}$ and $k_{1}z_{2}$, respectively. Further using $m_1 = m_2$ and $k_1 = k_2$, we have
	\begin{align*}
		k_2z_1\frac{\partial H(\vz)}{\partial z_1} = k_2\sum_{\vm}m_1c_{\vm}\prod_{i=1}^{\ell}z_i^{m_i} = k_1\sum_{\vm}m_2c_{\vm}\prod_{i=1}^{\ell}z_i^{m_i} = k_1z_2\frac{\partial H(\vz)}{\partial z_2} .
	\end{align*}
Hence, $z_{1}^* = z_{2}^*$. Furthermore, since  we set $\bar{H}(\bar{\vz})=\bar{H}(\bar{z_1},\bar{z_3},\ldots,\bar{z_\ell} )\triangleq H(\bar{z_1},\bar{z_1},\bar{z_3},\ldots,\bar{z_\ell})$, we have that $\bar{H}(\bar{\vz}) = \sum_{\vm}c_{\vm}\bar{z_{1}}^{2m_1}\prod_{i=3}^{\ell}\bar{z_i}^{m_i}$. Note that, since $\bz^*$ is a solution to \eqref{eq:partial}, we have $H(\bz^*) = 0$ or equivalently $\bar{H}(\bar{\bz^*}) = 0$. Now, we differentiate $\bar{H}(\bar{\vz})$ with respect to $\bar{z_1}$ and $\bar{z_{\ell}}$ and multiply by $k_{\ell}\bar{z_1}$ and $k_1\bar{z_{\ell}}$ respectively:
    \begin{align*}
    k_{\ell} \bar{z_{1}} \frac{\partial \bar{H}(\bar{\vz})}{\partial \bar{z_1}}&= k_{\ell}\sum_{\vm}2m_1 c_{\vm}\bar{z_1}^{2m_1}\prod_{i=3}^{\ell}\bar{z_i}^{m_i} \\ & = 2k_{\ell}\sum_{\vm}m_1 c_{\vm}\bar{z_1}^{2m_1}\prod_{i=3}^{\ell}\bar{z_i}^{m_i} \\ & = 2k_{\ell}\bar{z_1}\frac{\partial H(\bar{z_1},\bar{z_1},\bar{z_3},\ldots,\bar{z_\ell})}{\partial \bar{z_1}} 
     \\ & = 2k_1\bar{z_{\ell}}\frac{\partial \bar{H}(\bar{\vz})}{\partial \bar{z_\ell}}\,.
    \end{align*}
Recall that from \eqref{eq:partial} we have
 \begin{align*}
 	k_{\ell}\bar{z_1}\frac{\partial H(\bar{z_1},\bar{z_1},\bar{z_3},\ldots,\bar{z_\ell})}{\partial \bar{z_1}} & = k_{1}\bar{z_\ell}\frac{\partial H(\bar{z_1},\bar{z_1},\bar{z_3},\ldots,\bar{z_\ell})}{\partial \bar{z_\ell}}
 \end{align*}
and hence
\begin{align*}
  k_{\ell} \bar{z_{1}} \frac{\partial \bar{H}(\bar{\vz})}{\partial \bar{z_1}}&= 2k_1\bar{z_{\ell}}\frac{\partial \bar{H}(\bar{\vz})}{\partial \bar{z_\ell}} .
\end{align*}
The rest of the proof follows directly from \eqref{eq:partial}. 
%Further since for $i \ge 3$, we have that $z_{i} = \bar{z}_{i}$, it can be easily verified that \begin{align*}
%		k_\ell z_j \frac {\partial H^*(\bar{\vz})}{\partial z_j}&= k_j z_\ell \frac{\partial H^*(\bar{\vz})}{\partial z_\ell}\,.
%\end{align*}
\end{proof}

 Once we determine the unique solution $\bar{\vz}^* = (z^*,z^*,z_3^*,z_4^*, \ldots, z_{\ell}^*)$ from Proposition~\ref{prop:barpartial}, we have that
 \begin{equation}
   - r_1\log z_1^* - r_2\log z_2^* = -2r\log z_1^* \,.
 \end{equation}
 and therefore, when $k_1=k_2$ or $r_1=r_2$, \eqref{eq:asym_approx} becomes 
 \begin{equation}
 \lim_{n \rightarrow \infty} \frac{\log a_{nr_1, nr_1, nr_3, \ldots, nr_\ell}}{n}  %\lim_{n \to \infty} \frac{\log a_{\vk}}{n}  
 =  -\sum_{i=1}^{\ell} r_i\log z_i^* = -2r\log z_1^*-\sum_{i=3}^{\ell} r_i\log z_i^*.  	
 \end{equation}
	
Notice that the unique solution $\vz^*$
in \eqref{eq:partial} is completely represented by $\vk = (k_1, \ldots, k_\ell) = n(r_1, \ldots, r_\ell) = n\boldsymbol{r} $. 
%Hence, for each fixed tuple $\boldsymbol{r} =(r_1,\ldots, r_\ell)$, the right hand side of \eqref{eq:asym_approx} is a real value. 
Hence,  the right hand side of \eqref{eq:asym_approx} is a function that depends on $\boldsymbol{r}$.
In what follows, we fix $r_1,\ldots, r_{\ell-1}$ and optimize the expression
with respect to  $r_\ell$. Let $\Phi(\vz^*,r_\ell)$ be the corresponding function.
%, while $r_i$ is assumed to be independent of $r_\ell$ for $i\in [\ell-1]$. 
Let $\vz^*(r_\ell) = (z^*_i(r_\ell))_{i\in[\ell]}$ be the corresponding unique solution in \eqref{eq:partial} while $r_\ell$ varies.
The next theorem characterizes the $\ell$-th component of $\vz^*$ when $\Phi(\vz^*,r_\ell)$ is maximized.
\begin{theorem}\label{thm:maxlog}
	Set $\Phi(\vz^*,r_\ell) \triangleq -\sum_{i=1}^\ell r_i\log z_i^*(r_\ell)$.
	Then $\frac{\partial}{\partial r_\ell} \Phi(\vz^*,r_\ell)=- \log(z_{\ell}^*)$.
	Therefore, the quantity $\Phi(\vz^*,r_\ell)$ is maximized if and only if  
	$z_\ell^*(r_\ell)=1$.
\end{theorem}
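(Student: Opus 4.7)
The plan is to compute $\partial \Phi/\partial r_\ell$ directly and show, via an envelope-theorem style argument, that the contributions coming from the variation of $\vz^*$ in $r_\ell$ cancel, leaving only the explicit term $-\log z_\ell^*$. First I would differentiate $\Phi(\vz^*,r_\ell) = -\sum_{i=1}^{\ell} r_i \log z_i^*(r_\ell)$ with respect to $r_\ell$, treating $r_1,\ldots,r_{\ell-1}$ as fixed constants. Using the product rule and the fact that only $r_\ell$ appears explicitly in the coefficient, I get
\begin{equation*}
\frac{\partial \Phi}{\partial r_\ell} \;=\; -\log z_\ell^*(r_\ell) \;-\; \sum_{i=1}^{\ell} \frac{r_i}{z_i^*(r_\ell)} \frac{\partial z_i^*(r_\ell)}{\partial r_\ell}.
\end{equation*}
The goal is then to show the sum on the right-hand side is zero.

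The key observation is that the second set of equations in \eqref{eq:partial}, after dividing through by $n$ (recall $k_i = n r_i$), is equivalent to saying that the quantity $z_i^* \partial_{z_i} H(\vz^*) / r_i$ takes the same value, say $\lambda$, for every $i \in [\ell]$. Equivalently, $\partial_{z_i} H(\vz^*) = \lambda\, r_i / z_i^*$ for all $i$. Note that $\lambda \ne 0$, since Theorem~\ref{thm:acsv} already implicitly assumes $z_\ell^* \partial_{z_\ell} H(\vz^*) \neq 0$ (this quantity appears in the denominator of \eqref{eq:approx}).

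Next I would exploit the first equation $H(\vz^*(r_\ell)) = 0$. Differentiating this identity in $r_\ell$ yields
\begin{equation*}
\sum_{i=1}^{\ell} \partial_{z_i} H(\vz^*)\, \frac{\partial z_i^*}{\partial r_\ell} \;=\; 0.
\end{equation*}
Substituting $\partial_{z_i} H(\vz^*) = \lambda\, r_i/z_i^*$ and dividing out $\lambda \ne 0$ gives exactly $\sum_{i=1}^{\ell} (r_i/z_i^*)\, \partial z_i^*/\partial r_\ell = 0$, which is the cancellation we need. Consequently $\partial \Phi / \partial r_\ell = -\log z_\ell^*$, and setting this to zero gives the first-order necessary condition $z_\ell^*(r_\ell) = 1$ for an interior maximum.

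The main obstacle, more conceptual than technical, is justifying the implicit differentiation $\partial \vz^*/\partial r_\ell$, i.e., ensuring $\vz^*(r_\ell)$ is a smooth function of $r_\ell$ near the point of interest. This follows from the implicit function theorem applied to the system \eqref{eq:partial} under the nondegeneracy hypotheses already built into Theorem~\ref{thm:acsv} (analyticity of $H$, simple root at $\vz^*$, and nonvanishing $\partial_{z_\ell} H$), so it is a routine check rather than a genuine difficulty. Everything else is a straightforward application of the chain rule combined with the proportionality relation $z_i^* \partial_{z_i} H(\vz^*) \propto r_i$ implied by \eqref{eq:partial}.
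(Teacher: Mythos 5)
Your argument is essentially the same as the paper's: differentiate the identity $H(\vz^*(r_\ell))=0$ in $r_\ell$, use the proportionality from \eqref{eq:partial} to identify $\partial_{z_i}H(\vz^*)$ with $r_i/z_i^*$ up to a common nonzero scalar, cancel, and then read off $\partial\Phi/\partial r_\ell = -\log z_\ell^*$ from the product rule. Your framing via the scalar $\lambda$ and the remark about the implicit function theorem are cosmetic additions; the substance matches the paper's proof.
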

\begin{proof}
	As we vary $r_\ell$, since $r_i$ are assumed to be fixed, 
	we have from Theorem~\ref{thm:acsv} that the function $H(\vz^*(r_\ell))$ is identically zero. 
	This implies that its partial derivative with respect to $r_\ell$ is also zero.
	Then applying Chain Rule and \eqref{eq:partial}, we have that  
	\begin{align*}
	\frac{\partial H(\vz^*(r_\ell))}{\partial r_\ell} 
	 =  \sum_{i = 1}^{\ell}\frac{\partial H(\vz^*(r_\ell))}{\partial z_{i}^*}\frac{\partial  z_{i}^*}{\partial r_\ell} 
	 =  \sum_{i = 1}^{\ell}\frac{r_iz_\ell^*}{r_\ell z_i^*}\frac{\partial H(\vz^*(r_\ell))}{\partial z_\ell^*}\frac{\partial z_{i}^*}{\partial r_\ell} 
	 =\frac{z_\ell^*}{r_\ell}\frac{\partial H(\vz^*(r_\ell))}{\partial z_\ell^*}\sum_{i = 1}^{\ell}\frac{r_i}{z_i^*}\frac{\partial z_{i}^*}{\partial r_\ell}
	 =0\,.
	\end{align*}
	Now, since $z_\ell*>0$ and \eqref{eq:partial} implies that the partial derivative with respect to $z^*_\ell$ is nonzero, we have that
	\[	\sum_{i = 1}^{\ell}\frac{r_i}{z_i^*}\frac{\partial z_{i}^*}{\partial r_\ell}
		=0\,.
	\]	
	By taking the partial derivative of $\Phi(\vz^*,r_\ell)$ with respect to $r_\ell$, we have
	\begin{equation*}
		\frac{\partial \Phi(\vz^*,r_\ell)}{\partial r_\ell}  = -\sum_{i=1}^{\ell}\frac{r_i}{z_i^*} \frac{\partial (z_i^*)}{\partial r_\ell} - \log(z_{\ell}^*) =  - \log(z_{\ell}^*).
	\end{equation*}
	%Therefore, to optimize $A(r_\ell)$, we set $\frac{\partial A(r_\ell)}{\partial r_\ell}=0$ and this leads to $\log(z_{\ell}^*) =0$. Hence, $z_{\ell}^*(r_\ell)=1$. The converse follows from the fact that the solution $\vz^*$ is unique. 
\end{proof}

\begin{example}[Example~\ref{exa:binomial} continued] 
	Recall that $a_{n,k}=\binom{n}{k}$. Suppose we fix $r_1=1$ and set $k=r_2n$ with $r_2$ varying.
	Then the quantity $\Phi(r_2)=\HH(r_2)$. Theorem~\ref{thm:maxlog} states that $\Phi(r_2)$ is maximized when the solution $z_2^* = r_2/(1-r_2) =1$, i.e., when $r_2=1/2$, as expected.
\end{example}

In the previous section we mentioned that the computation of the GV-MR bound~\cite{marcus1992improved} involves different optimization problems. 
Nevertheless, when we have the explicit generating functions for the constrained space and corresponding ball sizes, we are able to characterize certain component for the solution in~\eqref{eq:partial}.
%The main concern is whether the solutions of various problems coincide. Theorem \ref{thm:maxlog} provides a clear condition for the optimal solution. Basing on that, we are able to use similar arguments of ACSV while doing GV-MR. 

%In the previous section, we mentioned that computation of GV rate through methods of Marcus and Roth \cite{marcus1992improved} involves different optimization problems. The main concern is whether the solutions to various problems coincide. Theorem \ref{thm:maxlog} provides a clear condition for the optimal solution. Based on that, we are able to use similar arguments of ACSV while doing GV-MR. 

\begin{corollary}
\label{cor:MRsupport}
Consider a constrained system $\cS$ and its subsets parameterized by $\tau$.
Let $\cS_n(\tau)$ and $T(\cS_n(\tau),\delta n)$ be defined as before, and
suppose that the generating functions
\[
	F_\cS(\vz) = \sum_{\vk\ge 0} |\cS_n(\tau)| \vz^\vk , \quad
	F_T(\by)   = \sum_{\bj\ge 0} |T(\cS_n(\tau),\delta n)| \by^\bj ,
\]
have $\ell$ and $m$ variables, respectively.
Further, suppose that $k_\ell = j_m = \tau n$ and that the other components $k_i$ ($i\in [\ell-1]$) and $j_i$ ($i\in[m-1]$) are all independent of $\tau$.
Then when the optimization program \eqref{eq:GVMR} is maximized, we have that either $(z^*_\ell)^2 = y^*_m$ or $\tau$ is at the end-points of $I$.
%The GV(MR) rate is optimized with respect to the dependent parameter $\tau$. The optimal $\tau$, we can set $w_1 = w_2$, where $w_1$ and $w_2$ corresponds to the dummy variable for parameter $\tau$ in Capacity and Total Ball Size expressions. 
\end{corollary}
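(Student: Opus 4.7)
The plan is to differentiate the GV-MR objective
$g(\tau) := 2\bCap(\cS(\tau)) - \widetilde{T}(\cS(\tau),\delta)$
with respect to $\tau$ via Theorem~\ref{thm:maxlog}, and then apply the first-order optimality condition. Because the hypothesis singles out $k_\ell$ and $j_m$ as the unique components proportional to $\tau n$ while all other components are independent of $\tau$, Theorem~\ref{thm:maxlog} applies separately to $F_\cS$ and $F_T$ with $\tau$ playing the role of the ``varied'' coordinate $r_\ell$ (respectively $s_m$) in each generating function.

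Concretely, setting $r_i = k_i/n$ and $s_i = j_i/n$, the ACSV asymptotic \eqref{eq:asym_approx} gives
\[
\bCap(\cS(\tau)) = -\sum_{i=1}^{\ell} r_i \log z_i^*(\tau), \qquad \widetilde{T}(\cS(\tau),\delta) = -\sum_{i=1}^{m} s_i \log y_i^*(\tau),
\]
with $r_\ell = s_m = \tau$ and all remaining $r_i, s_i$ constant in $\tau$. Theorem~\ref{thm:maxlog} then yields
\[
\frac{\partial \bCap(\cS(\tau))}{\partial \tau} = -\log z_\ell^*(\tau), \qquad \frac{\partial \widetilde{T}(\cS(\tau),\delta)}{\partial \tau} = -\log y_m^*(\tau),
\]
so that
\[
g'(\tau) = -2\log z_\ell^*(\tau) + \log y_m^*(\tau) = \log\!\left(\frac{y_m^*(\tau)}{(z_\ell^*(\tau))^2}\right).
\]
At any interior maximizer $\tau^* \in \mathrm{int}(I)$, the condition $g'(\tau^*) = 0$ forces $y_m^* = (z_\ell^*)^2$; otherwise, the maximum is attained at an endpoint of $I$, which is exactly the claim.

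The main obstacle is not the algebra but ensuring that Theorem~\ref{thm:maxlog} genuinely applies uniformly in $\tau \in I$ to both $F_\cS$ and $F_T$. One must verify the standing ACSV hypotheses (existence and uniqueness of the positive critical point, non-vanishing of the numerator $G$ at that point, analyticity of $G$ and $H$) for the particular constrained system $\cS$, and in addition confirm that the implicit functions $z_\ell^*(\tau)$ and $y_m^*(\tau)$ are smooth enough in $\tau$ to justify the differentiation step. Given the clean decoupling in which $\tau$ appears only as the ``last coordinate'' of each generating function, once these analytic conditions are in place the remaining computation is purely mechanical.
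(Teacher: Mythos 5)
Your proposal is correct and takes essentially the same route as the paper: apply Theorem~\ref{thm:maxlog} to each of $\bCap(\cS(\tau))$ and $\widetilde{T}(\cS(\tau),\delta)$ to get the derivatives $-\log z_\ell^*$ and $-\log y_m^*$, and then impose the first-order condition $2(-\log z_\ell^*) - (-\log y_m^*) = 0$, i.e.\ $(z_\ell^*)^2 = y_m^*$, at any interior maximizer. The only difference is that you flag the standing ACSV regularity hypotheses (smoothness of the implicit critical point in $\tau$, nonvanishing of $G$, etc.), which the paper leaves implicit; that caution is appropriate but does not change the argument.
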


\begin{proof}
Recall from \eqref{eq:GVMR} that
$R_\textsc{mr}(\delta) = \max_{\tau\in I} [2\bCap(\cS(\tau))-\widetilde{T}(\cS(\tau), \delta)]$.
Furthermore, $\bCap(\cS(\tau)) = \Phi(\vz^*,\tau)$ and  $\widetilde{T}(\cS(\tau), \delta) = \Phi(\by^*,\tau)$, as defined in Theorem~\ref{thm:maxlog}. It follows from Theorem~\ref{thm:maxlog} that 
\begin{align*}
	\frac{\partial}{\partial \tau} \Phi(\vz^*,\tau) = - \log(z_\ell^*) , \quad
	\frac{\partial}{\partial \tau} \Phi(\by^*,\tau)   = - \log(y_m^*).
\end{align*}
Hence, to maximize the right-hand side of \eqref{eq:GVMR}, we need $2 \frac{\partial}{\partial \tau} \Phi(\vz^*,\tau) - \frac{\partial}{\partial \tau} \Phi(\by^*,\tau)= 0$ or, equivalently, $(z^*_\ell)^2 = y^*_m$.
\end{proof}

}

\section{GV Bounds for Simplex Spaces} \label{sec:simplex}
%in the $L_1$-metric}

\subsection{The Standard Simplex}
\label{subsec:US}

In this subsection, we consider the code space given by the {\em simplex} of dimension $r$ and weight $n$, namely
\begin{equation}
	\Delta_{n,r} \triangleq \left\{\bu= (u_1,u_2,\ldots, u_r) \in \bbZ^r : u_i \ge 0, \; \sum_{i=0}^{r} u_i= n\right\} .
\end{equation}
$\Delta_{n,r}$ can be understood as the set of all multisets of cardinality $n$ over an alphabet of size $r$. As noted in the Introduction, the motivation for studying codes in this space comes from certain types of permutation channels and unordered (e.g., DNA-based) data storage channels.

We shall consider the asymptotic regime where $n\to\infty$ and $r=\rho n$, for an arbitrary constant $\rho\ge0$.
Denote the family of simplices satisfying this relation by $\Delta_\rho$, i.e.,
\begin{equation}
\Delta_\rho = \bigcup_{n=1}^\infty \Delta_{n,\floor{\rho n}} .
\end{equation}
The following statement is well-known and follows directly from the fact that $|\Delta_{n,r}| = \binom{n+r-1}{r-1}$.

\begin{proposition}\label{prop:run-cap-L1}
For fixed $\rho \ge 0$,
\[
\bCap(\Delta_\rho) \triangleq \lim_{n\to\infty} \frac{\log |\Delta_{n,\floor{\rho n}}|}{n} = (1+\rho)\HH\Big(\frac{\rho}{1+\rho}\Big).\]
\end{proposition}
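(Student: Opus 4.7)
The plan is to reduce the claim to the binomial coefficient asymptotic already worked out in Example~\ref{exa:binomial}, via the classical stars-and-bars identity
\[
|\Delta_{n,r}| \;=\; \binom{n+r-1}{r-1}.
\]
Substituting $r=\floor{\rho n}$ gives an upper and lower binomial index that both grow linearly in $n$, so one is in precisely the regime in which the ACSV estimate \eqref{eq:asym_approx} gives an exact exponential rate.

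First I would set $N \triangleq n+\floor{\rho n}-1$ and $K \triangleq \floor{\rho n}-1$, and observe that as $n\to\infty$,
\[
\frac{N}{n}\longrightarrow 1+\rho,\qquad \frac{K}{N}\longrightarrow \frac{\rho}{1+\rho}.
\]
Example~\ref{exa:binomial} yields
\[
\lim_{N\to\infty}\frac{\log\binom{N}{K}}{N}=\HH\!\left(\frac{K}{N}\right)\bigg|_{\text{limit}}=\HH\!\left(\frac{\rho}{1+\rho}\right),
\]
using the continuity of $\HH$ on $[0,1]$. Multiplying numerator and denominator by $N/n$ converts the per-$N$ rate into the per-$n$ rate:
\[
\frac{\log |\Delta_{n,\floor{\rho n}}|}{n}=\frac{N}{n}\cdot\frac{\log\binom{N}{K}}{N}\longrightarrow (1+\rho)\,\HH\!\left(\frac{\rho}{1+\rho}\right).
\]
This establishes the claimed limit, and since the $\limsup$ in the definition of $\bCap(\Delta_\rho)$ is in fact a genuine limit, the proposition follows.

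The only mild subtlety is the floor function in $r=\floor{\rho n}$: one should note that the correction in both $N$ and $K$ is $O(1)$, which is negligible at the exponential scale and is absorbed by the continuity of $\HH$. No optimization over ACSV variables is required at this stage; the more substantive multivariate computations are deferred to the total-ball-size generating functions in the later subsections.
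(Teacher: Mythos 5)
Your proof is correct and follows exactly the route the paper intends: the paper simply remarks that the proposition ``follows directly from the fact that $|\Delta_{n,r}| = \binom{n+r-1}{r-1}$,'' and your argument fills in precisely that computation (binomial asymptotics plus a change of normalization from $N$ to $n$). No discrepancy to report.
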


%\todo{%The notation of GV Bound for $\Delta$ need to update. At this point, the largest code size is $$A(\Delta_{n,r}, s) \triangleq \max\{|C|:C\subseteq \Delta_{n,r}, D(\bu,\bv) \ge s) \text{ for all } \bu,\bv \in C\}.$$and the highest attainable rate is $$\alpha_\Delta(\delta) = \limsup_{n \rightarrow \infty}\frac{A(\Delta_{n,r}, \delta n)}{n}.$$
%Compare with Theorem 15 of \cite{kovavcevic2018multi} to see how the case with fixed $r$ is different. This result uses upper bound of Sidon sets $\rightarrow$ Consult Mladen.
%}
%}

\subsubsection{Evaluation of the GV Bound}\label{sec:TotalBallSize}

We denote the $L_1$-ball with center $\bu \in \Delta_{n,r}$ and radius $s$ by $V(\bu,s) \triangleq \{\bv \in \Delta_{n,r}: D(\bu,\bv) \le s\}$.
%Similarly, we also allow the distance $s=\delta n$ to grow linearly with $n$.
To determine the total ball size, we first consider the number of pairs $(\bu,\bv)$ with $L_1$-distance $s$, denoted by
\begin{equation}
N(n_1,n_2,r,s) = \left| \left\{(\bu,\bv) \in \Delta_{n_1,r}\times \Delta_{n_2,r}: D(\bu,\bv) =  s \right\} \right| .
\end{equation}
This quantity can be recursively characterized as follows.

\begin{lemma}
\label{lem:general_l1}
\begin{align*}
  N(n_1,n_2,r,s) &= \sum_{i \ge 0}N(n_1-i,n_2-i,r-1,s) \\ 
		&\hspace{4mm} + \sum_{i \ge 0}\sum_{j \ge 1}N(n_1-i,n_2-i-j,r-1,s-j) \\
		&\hspace{4mm} + \sum_{i \ge 0}\sum_{j \ge 1}N(n_1-i-j,n_2-i,r-1,s-j)\,.  
\end{align*}
\end{lemma}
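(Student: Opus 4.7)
The plan is to prove the recursion by conditioning on the last coordinate of each vector. Given a pair $(\bu,\bv)\in\Delta_{n_1,r}\times\Delta_{n_2,r}$, write $\bu=(\bu',u_r)$ and $\bv=(\bv',v_r)$, where $\bu'\in\Delta_{n_1-u_r,r-1}$ and $\bv'\in\Delta_{n_2-v_r,r-1}$. By the definition of the $L_1$ distance,
\[
D(\bu,\bv) = D(\bu',\bv') + |u_r-v_r|,
\]
so pairs contributing to $N(n_1,n_2,r,s)$ are in bijection with tuples $(\bu',\bv',u_r,v_r)$ such that $u_r,v_r\ge 0$ and $D(\bu',\bv')=s-|u_r-v_r|$.

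Next I would partition the sum $\sum_{u_r,v_r\ge 0}$ into the three cases $u_r=v_r$, $u_r<v_r$, and $u_r>v_r$, reparameterizing each case to match the terms on the right-hand side:
\begin{itemize}
\item When $u_r=v_r$, set $i=u_r=v_r\ge 0$; the contribution is $N(n_1-i,n_2-i,r-1,s)$, yielding the first sum.
\item When $u_r<v_r$, set $i=u_r\ge 0$ and $j=v_r-u_r\ge 1$, so $v_r=i+j$; the contribution is $N(n_1-i,n_2-i-j,r-1,s-j)$, yielding the second sum.
\item When $u_r>v_r$, set $i=v_r\ge 0$ and $j=u_r-v_r\ge 1$, so $u_r=i+j$; the contribution is $N(n_1-i-j,n_2-i,r-1,s-j)$, yielding the third sum.
\end{itemize}
Summing these three cases reproduces the recursion exactly.

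There is no real obstacle here beyond bookkeeping: the argument is a direct decomposition of the joint sample space of pairs according to the value of the last coordinate, followed by a change of index. The terms $N(n_1-i,n_2-i-j,\ldots)$ and $N(n_1-i-j,n_2-i,\ldots)$ vanish whenever the first arguments are negative, so the summation bounds can be left as written without loss. The recursion is valid for $r\ge 1$, with the base case being $N(n_1,n_2,0,s)=\mathbf{1}[n_1=n_2=s=0]$ corresponding to the empty tuple.
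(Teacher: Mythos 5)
Your argument is correct and follows the same route as the paper's proof: truncate the last coordinate, observe that $D(\bu,\bv) = D(\bu',\bv')+|u_r-v_r|$, and split into the cases $u_r=v_r$, $u_r<v_r$, $u_r>v_r$ with the reparameterization $i=\min(u_r,v_r)$, $j=|u_r-v_r|$. The only difference is that you additionally state the base case and the convention that out-of-range terms vanish, which the paper leaves implicit.
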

\begin{proof}
Let $\bu= (u_1,\ldots, u_r) \in \Delta_{n_1,r}$ and $\bv=(v_1,\ldots,v_r) \in \Delta_{n_2,r}$. Consider truncating the last coordinate $u_r$ and $v_r$. If $u_r=v_r=i$, we get the first sum where the distance between the truncated vectors remains the same. Otherwise, suppose $u_r = i$ and $v_r = i+j$, for $i \ge 0$ and $j \ge 1$. In this case, the truncated vectors belong to $\Delta_{n_1-i,r-1}$ and $\Delta_{n_2-i-j,r-1}$, respectively, and the distance between them is $s - |u_r-v_r| = s - j$, which gives the second term. The last term is obtained similarly when $u_r > v_r$.
\end{proof}

We are ready to derive the generating function $F(x_1,x_2,y,z)=\sum_{n_1,n_2,r,s \ge 0}N(n_1,n_2,r,s)x_{1}^{n_{1}}x_{2}^{n_{2}}y^rz^s$. In order to apply Theorem~\ref{thm:acsv}, we present $F(x_1,x_2,y,z)$ as a ratio of two multivariate polynomials.

\begin{lemma}
\label{lem:general_ge}
$F(x_1,x_2,y,z) = \frac{G(x_1,x_2,y,z)}{H(x_1,x_2,y,z)}$, where $G(x_1,x_2,y,z)$ is some multivariate polynomial and 
\begin{align*}
H(x_1,x_2,y,z) &=(1-x_1)(1-x_2)\big[(1-x_1x_2)(1-x_1z)(1-x_2z) - y(1-x_1x_2z^2)\big]\,.  
\end{align*}
\end{lemma}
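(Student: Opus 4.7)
My plan is to turn the recurrence of Lemma~\ref{lem:general_l1} into a functional equation for $F$ and then solve it algebraically. Multiplying the recurrence by $x_1^{n_1}x_2^{n_2}y^r z^s$ and summing over $r\ge 1$ and $n_1,n_2,s\ge 0$ will produce $F-F_0$ on the left, where $F_0$ is the contribution from $r=0$. Since $\Delta_{0,0}$ consists only of the empty tuple, $N(n_1,n_2,0,s)$ is nonzero only at $(0,0,0,0)$, giving $F_0=1$.

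Each of the three sums on the right becomes a convolution that I can evaluate in closed form. The summation variable $i\ge 0$ shifts $n_1$ and $n_2$ simultaneously, contributing a factor $\sum_{i\ge 0}(x_1x_2)^i = \frac{1}{1-x_1x_2}$, while the factor $y$ comes from the shift $r\mapsto r-1$. The second (respectively third) sum additionally has a summation variable $j\ge 1$ that shifts only $n_2$ (respectively $n_1$) and also $s$, contributing $\sum_{j\ge 1}(x_2 z)^j = \frac{x_2 z}{1-x_2 z}$ (respectively $\frac{x_1 z}{1-x_1 z}$). Putting the three contributions together yields
\begin{equation*}
F - 1 \;=\; \frac{yF}{1-x_1x_2}\left(1 + \frac{x_2 z}{1-x_2 z} + \frac{x_1 z}{1-x_1 z}\right).
\end{equation*}

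The main technical step is then the algebraic identity
\begin{equation*}
1 + \frac{x_2 z}{1-x_2 z} + \frac{x_1 z}{1-x_1 z} = \frac{1 - x_1x_2 z^2}{(1-x_1 z)(1-x_2 z)},
\end{equation*}
which I would verify by clearing the common denominator $(1-x_1z)(1-x_2 z)$ in the numerator and simplifying. Substituting this back gives a linear equation for $F$ whose solution is
\begin{equation*}
F(x_1,x_2,y,z) \;=\; \frac{(1-x_1x_2)(1-x_1 z)(1-x_2 z)}{(1-x_1x_2)(1-x_1 z)(1-x_2 z) - y(1-x_1x_2 z^2)}.
\end{equation*}
Multiplying numerator and denominator by $(1-x_1)(1-x_2)$ puts $F$ in the form $G/H$ with $H$ exactly as stated in the lemma (and $G=(1-x_1)(1-x_2)(1-x_1x_2)(1-x_1 z)(1-x_2 z)$).

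The main obstacle is bookkeeping rather than insight: one has to be careful to isolate the $r=0$ boundary term correctly (otherwise spurious constants creep in), and to recognize the algebraic identity above so that the sum of three rational functions collapses to a single compact factor. The final normalization by $(1-x_1)(1-x_2)$ is cosmetic but is the form most convenient for the subsequent ACSV analysis.
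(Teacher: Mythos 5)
Your derivation follows the same scheme as the paper's: multiply the recurrence of Lemma~\ref{lem:general_l1} by $x_1^{n_1}x_2^{n_2}y^r z^s$, sum, and solve the resulting functional equation for $F$. The one place you differ from the paper's own computation is the base term, and there you are actually the one who is right. Since $\Delta_{n,0}$ is nonempty only for $n=0$, the quantity $N(n_1,n_2,0,s)$ equals $1$ at $(n_1,n_2,s)=(0,0,0)$ and vanishes otherwise, so the $r=0$ contribution is $F_0 = 1$, exactly as you say. The paper's written proof instead takes $F_0 = 1 + \sum_{i\ge1}x_1^i + \sum_{i\ge1}x_2^i = \frac{1-x_1x_2}{(1-x_1)(1-x_2)}$, an overcount (for instance it would force $N(2,0,1,0)=1$, yet the unique pair $((2),(0))$ in $\Delta_{2,1}\times\Delta_{0,1}$ has $L_1$-distance $2$, not $0$). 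As a result, the closed form for $F$ displayed in the paper carries an extraneous factor $\frac{1-x_1x_2}{(1-x_1)(1-x_2)}$ relative to yours, i.e.\ the paper's numerator is $(1-x_1x_2)^2(1-x_1z)(1-x_2z)$ where yours is $(1-x_1)(1-x_2)(1-x_1x_2)(1-x_1z)(1-x_2z)$. Both share the same $H$, so both certify the lemma as stated (since $G$ is left unspecified), and the discrepancy has no downstream effect: the asymptotic exponent in~\eqref{eq:asym_approx} depends only on the singular variety $\{H=0\}$, and the extra factor neither vanishes at the relevant positive critical point nor enters the exponential rate. Your explicit $G$ is the one consistent with the correct base case, and your final normalization by $(1-x_1)(1-x_2)$ to hit the stated $H$ is a sound way to present it.
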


\begin{proof}
Applying Lemma~\ref{lem:general_l1}, we obtain
\begin{align*}
			F(x_1,x_2,y,z)
			=\ &\sum_{n_1,n_2,r,s \ge 0}N(n_1,n_2,r,s)x_{1}^{n_{1}}x_{2}^{n_{2}}y^rz^s \\ 
			=\ &\sum_{n_1,n_2,r,s \ge 0}\sum_{i \ge 0}N(n_1-i,n_2-i,r-1,s)x_{1}^{n_1}x_{2}^{n_2}y^rz^s \\ 
			& + \sum_{n_1,n_2,r,s \ge 0}\sum_{i \ge 0}\sum_{j \ge 1}N(n_1-i,n_2-i-j,r-1,s-j)x_{1}^{n_1}x_{2}^{n_2}y^rz^s \\ 
			& + \sum_{n_1,n_2,r,s \ge 0}\sum_{i \ge 0}\sum_{j \ge 1}N(n_1-i-j,n_2-i,r-1,s-j)x_{1}^{n_1}x_{2}^{n_2}y^rz^s
			\\ =\ & 1 + \sum_{i \ge 1} x_{1}^{i} + \sum_{i \ge 1} x_{2}^{i} + F(x_1,x_2,y,z)\bigg(y\sum_{i \ge 0}(x_1x_2)^i\bigg)\bigg(1 + \sum_{j \ge 1}(x_1z)^j + \sum_{j \ge 1}(x_2z)^j\bigg)\,.
\end{align*}
Hence,
\begin{align*}
			\hspace{-1mm} F(x_1,x_2,y,z)
			&= \frac{1 + \sum_{i \ge 1} x_{1}^{i} + \sum_{i \ge 1} x_{2}^{i}}{1-\big(y\sum_{i \ge 0}(x_1x_2)^i\big)\big(1 + \sum_{j \ge 1}(x_1z)^j + \sum_{j \ge 1}(x_2z)^j\big)} \\ &= \frac{(1 + \sum_{i \ge 1} x_{1}^{i} + \sum_{i \ge 1} x_{2}^{i})(1-x_1x_2)(1-x_1z)(1-x_2z)}{(1-x_1x_2)(1-x_1z)(1-x_2z) - y(1-x_1x_2z^2)}\,.\\
			& = \frac{(1-x_1x_2)^2(1-x_1z)(1-x_2z)}{(1-x_1)(1-x_2)[(1-x_1x_2)(1-x_1z)(1-x_2z) - y(1-x_1x_2z^2)]} \,.
\end{align*}
\end{proof}

As we are interested in the pairs residing in the simplex with the same weight $n$ and dimension $\rho n$, and having distance $\delta n$, the quantity of interest is  $N(n,n,\rho n, \delta n)$. Since $n_1 = n_2 = n$, we can reduce the number of variables in the generating function using Proposition~\ref{prop:barpartial} and set $x_1 = x_2 = x$.
More specifically, we define the generating function
%So, we will do it in the following.
\begin{equation}
  F(x,x,y,z) = \sum_{n,r,s \ge 0} N(n,n,r,s) x^{2n}y^rz^s
             = \frac{(1+x)^2(1-xz)}{(1-x^2)(1-xz) - y(1+xz)} .
\end{equation}
%We recall that this work consider simplex of same weight $n$ with dimension and distance increase linearly with $n$. Hence, the interest is $N(n,n,\rho n, \delta n)$.
By Theorem~\ref{thm:acsv} and Proposition~\ref{prop:barpartial}, with $ r= \rho n, s = \delta n$, we need to solve the following system of equations: %like the system of equation (\ref{eq:partial}).
\begin{equation}\label{eq:L1-H}
	%\begin{split}
	H = 0 \text{\quad and\quad}
	\frac{xH_x}{2} = \frac{yH_{y}}{\rho} = \frac{zH_{z}}{\delta}, \;\;\text{where}\;\; H= (1-x^2)(1-xz) - y(1+xz) .
	%\end{split}
\end{equation}
Here $H_x$ denotes the partial derivative $\frac{\partial H}{\partial x}$, and the function $H(x,y,z)$ is written as $H$ for simplicity.
%where $H = (1-x_1x_2)(1-x_1z)(1-x_2z) - yx_1x_2(1-x_1x_2z^2)$.

\begin{lemma}
\label{lem:solveH_L1_1}
	For fixed $\rho$, the solution of \eqref{eq:L1-H} with respect to $\delta$ is
		\begin{align*}
			x^*(\delta) &= \sqrt{1 - \frac{2\rho}{2+2\rho-\delta}},\\
			z^*(\delta) &= \frac{\sqrt{\rho^2 + \delta^2} - \rho}{x^*\delta},\\
			y^*(\delta) &= 2\frac{\sqrt{\rho^2 + \delta^2}-\delta}{2-\delta + 2\rho} \,. 
		\end{align*}
\end{lemma}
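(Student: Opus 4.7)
The plan is to expand the partial derivatives of $H$, extract two scalar equations in $x$ and $z$ from the ratio conditions, and solve these in closed form after using $H=0$ to eliminate $y$.

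First I would compute $H_x = -2x(1-xz) - z(1-x^2) - yz$, $H_y = -(1+xz)$, and $H_z = -x(1-x^2) - xy$, so that $xH_x = -2x^2(1-xz) - xz(1-x^2) - xyz$, $yH_y = -y(1+xz)$, and $zH_z = -xz\bigl[(1-x^2)+y\bigr]$. A key simplification comes from $H=0$, which gives $y(1+xz) = (1-x^2)(1-xz)$ and, equivalently,
\begin{equation*}
(1-x^2) + y \;=\; (1-x^2)\!\left(1 + \frac{1-xz}{1+xz}\right) \;=\; \frac{2(1-x^2)}{1+xz}.
\end{equation*}

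Next, I would exploit the identity $xH_x = zH_z - 2x^2(1-xz)$ to turn the relation $\tfrac{xH_x}{2}=\tfrac{zH_z}{\delta}$ into $(2-\delta)xz\bigl[(1-x^2)+y\bigr] = 2\delta x^2(1-xz)$, which I will label (A). Similarly, $\tfrac{yH_y}{\rho}=\tfrac{zH_z}{\delta}$ becomes $\delta\, y(1+xz) = \rho xz\bigl[(1-x^2)+y\bigr]$; using $y(1+xz)=(1-x^2)(1-xz)$ this reduces to $\delta(1-x^2)(1-xz) = \rho xz\bigl[(1-x^2)+y\bigr]$, labeled (B). Substituting the simplified form of $(1-x^2)+y$ into (B) cancels the factor $1-x^2$ cleanly and gives
\begin{equation*}
\delta(1-x^2z^2) \;=\; 2\rho \,xz,
\end{equation*}
a quadratic in $u = xz$ whose positive root is $u = \bigl(\sqrt{\rho^2+\delta^2}-\rho\bigr)/\delta$. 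Substituting the same simplified form into (A) and then using the quadratic relation above cancels the $z$ and $(1-xz)$ factors and yields $(2-\delta)(1-x^2) = 2\rho x^2$, which rearranges to $x^{*2} = (2-\delta)/(2+2\rho-\delta)$, matching the stated $x^*$. Dividing $u$ by $x^*$ then gives the stated $z^*$.

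Finally, I would recover $y^*$ by plugging $x^*, z^*$ into $y = (1-x^2)(1-xz)/(1+xz)$. Writing $A=\sqrt{\rho^2+\delta^2}$, the factor $1-xz$ equals $(\rho+\delta - A)/\delta$ and $1+xz$ equals $(A + \delta - \rho)/\delta$, and one checks $(\rho+\delta-A)(A+\rho-\delta) = 2\delta(A-\delta)$ and $(A+\delta-\rho)(A+\rho-\delta) = 2\rho\delta$; combined with $1-x^{*2} = 2\rho/(2+2\rho-\delta)$ this produces $y^* = 2(A-\delta)/(2+2\rho-\delta)$, as claimed. The main obstacle is purely algebraic: choosing the right order of substitutions so that the factors $(1-x^2)$ and $(1-xz)$ cancel and the system collapses to a single quadratic in $xz$ and a single linear equation in $x^2$; the identity $(1-x^2)+y = 2(1-x^2)/(1+xz)$ coming from $H=0$ is the crucial observation that makes everything separate.
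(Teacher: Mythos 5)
Your proposal is correct and follows essentially the same route as the paper's Appendix-A proof: eliminate $y$ via $H=0$, equate the $zH_z$ and $yH_y$ ratios to get a quadratic in $xz$, then equate the $xH_x$ and $zH_z$ ratios (using the already-found $xz$) to get $x^2$ linearly, and finally back-substitute for $y^*$. Your identities $xH_x = zH_z - 2x^2(1-xz)$ and $(1-x^2)+y = 2(1-x^2)/(1+xz)$ just package the same cancellations a bit more cleanly than the paper's direct substitution, but the underlying computation is the same.
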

\begin{proof}
Appendix~\ref{appen_solve_L1}.
\end{proof}

Applying \eqref{eq:asym_approx}, we have that
%The asymptotic estimation of total ball size is given below as a direct outcome of equation \ref{eq:asym_approx}.
%The lemmas \ref{lem:recur_stick}, \ref{lem:generateN} and \ref{eq:stickyH} are true for different $n_1, n_2$. 
%This work aims to words of the same length $n$, so from (\ref{eq:asym_approx}) we obtain that 
\begin{equation}
\label{eq:asym_ball}
	\lim_{n \to \infty} \frac{\log N(n,n,\rho n,\delta n)}{n} = -2\log x^*(\delta) - \rho \log y^*(\delta) - \delta \log z^*(\delta)\,.  
\end{equation}
The total ball size is
\begin{equation*}
|T(\Delta_{n, \rho n}, \delta n)| = \sum_{s=0}^{\delta n} N(n,n,\rho n, s)
\end{equation*}
and hence 
\begin{align*}
\widetilde{T}(\Delta_\rho,\delta) 
&\triangleq \lim_{n\rightarrow \infty} \frac{\log |T(\Delta_{n, \rho n}, \delta n)|}{n} \\
	& = \max_{0\le \delta_1 \le \delta} -2\log x^*(\delta_1) - \rho \log y^*(\delta_1) - \delta_1 \log z^*(\delta_1) \\
	& = \begin{cases}
		-2\log x^*(\delta) - \rho \log y^*(\delta) - \delta \log z^*(\delta)\,, & \delta\le \delta_{\rm{max}}\,,\\
		2(1+\rho)\HH(\frac{\rho}{1+\rho})\,, & \delta\ge \delta_{\rm{max}}\,.
	\end{cases}
\end{align*}
Here, $\delta_{\rm{max}}= 2(1+\rho)/(2+\rho)$.
Details of the derivations for $\delta_{\rm{max}}$ are also given in Appendix~\ref{appen_solve_L1}.

In conclusion, we have the following explicit formulas for the asymptotic total ball size and the GV lower bound.

\begin{proposition}
\label{cor:rate_L1}
For fixed $\rho\ge0$, set $\delta_{\rm{max}}= 2(1+\rho)/(2+\rho)$.
When $0\le\delta\le\delta_{\rm{max}}$, we have
\begin{align*}
	\widetilde{T}(\Delta_\rho,\delta) = &-\rho + \delta \log(\delta) - \rho \log\!\big(\sqrt{\rho^2+\delta^2} - \delta\big)
	- \delta \log\!\big(\sqrt{\rho^2+\delta^2}-\rho\big) \\
	& +(1+\rho-\delta/2) \log(2+2\rho-\delta) - (1-\delta/2)\log(2-\delta) \,.
	%\widetilde{T}(n,\rho,2\delta) 
	%				&= 2\delta\log(2\delta) - \rho \log(\sqrt{\rho^2+4\delta^2} - 2\delta) \\
	%				&\hspace{4mm}- 2\delta \log(\sqrt{\rho^2+4\delta^2}-\rho) \\
	%				&\hspace{4mm}+(-1+\rho+\delta) \log(1-\rho-\delta) \\
	%				&\hspace{4mm}+ (1-\delta)\log(1-\delta)
\end{align*}
%
%\begin{align*}
%	\widetilde{T}(n,\rho,2\delta) &= \delta \log \delta + 
%	\frac{(2-\delta)}{2} \log (2- \delta)\\
%	&\hspace{4mm} - \frac{(2-\delta - 2\rho)}{2} \log (2 - \delta - 2\rho)  \\
%	&\hspace{4mm}-\rho(\log 2(\sqrt{\delta^2 + \rho^2} - \delta)) \\
%	&\hspace{4mm}-\delta \log (\sqrt{\rho^2 + \delta^2} - \rho).
%\end{align*}
Otherwise, when  $\delta\ge\delta_{\rm{max}}$, we have $\widetilde{T}(\Delta_\rho,\delta)=2(1+\rho)\HH(\frac{\rho}{1+\rho})$.
\end{proposition}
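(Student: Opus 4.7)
The plan is to prove the formula for $\widetilde{T}(\Delta_\rho,\delta)$ in the regime $0 \le \delta \le \delta_{\max}$ by direct substitution of the closed-form expressions for $x^*(\delta), y^*(\delta), z^*(\delta)$ from Lemma~\ref{lem:solveH_L1_1} into $-2\log x^*(\delta) - \rho\log y^*(\delta) - \delta\log z^*(\delta)$, and then to simplify. The case $\delta \ge \delta_{\max}$ and the value $\delta_{\max} = 2(1+\rho)/(2+\rho)$ have already been established in the discussion preceding the proposition: both follow from Theorem~\ref{thm:maxlog} applied with $r_\ell = \delta$, since the critical point of $\Phi$ occurs precisely where $z^*(\delta)=1$, and a short algebraic manipulation of $z^*(\delta)=1$ using Lemma~\ref{lem:solveH_L1_1} yields the quadratic whose relevant root is $2(1+\rho)/(2+\rho)$.

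The first step is to observe that $(x^*(\delta))^2 = 1 - \frac{2\rho}{2+2\rho-\delta} = \frac{2-\delta}{2+2\rho-\delta}$, so
$$-2\log x^*(\delta) = \log(2+2\rho-\delta) - \log(2-\delta).$$
Next, using the expression for $z^*(\delta)$ from Lemma~\ref{lem:solveH_L1_1},
$$-\delta \log z^*(\delta) = -\delta\log\!\big(\sqrt{\rho^2+\delta^2}-\rho\big) + \delta \log x^*(\delta) + \delta\log\delta,$$
and the auxiliary piece $\delta\log x^*(\delta) = \tfrac{\delta}{2}\bigl[\log(2-\delta)-\log(2+2\rho-\delta)\bigr]$ will be absorbed into the coefficients of $\log(2-\delta)$ and $\log(2+2\rho-\delta)$. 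Finally,
$$-\rho\log y^*(\delta) = -\rho\log 2 - \rho\log\!\big(\sqrt{\rho^2+\delta^2}-\delta\big) + \rho\log(2+2\rho-\delta).$$
Since $\log$ denotes the base-$2$ logarithm throughout the paper, $-\rho\log 2 = -\rho$, which accounts for the leading constant $-\rho$ in the claimed formula.

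Adding these three contributions and collecting like terms, the coefficient of $\log(2+2\rho-\delta)$ becomes $1 + \rho - \delta/2$ and the coefficient of $\log(2-\delta)$ becomes $-(1-\delta/2)$, while the remaining pieces $\delta\log\delta$, $-\rho\log(\sqrt{\rho^2+\delta^2}-\delta)$, and $-\delta\log(\sqrt{\rho^2+\delta^2}-\rho)$ reproduce themselves verbatim. This is precisely the expression in the statement. The main obstacle is nothing conceptual but rather careful bookkeeping of the four log-contributions; a useful sanity check is to substitute $\delta = \delta_{\max}$, use the identity $\sqrt{\rho^2+\delta_{\max}^2} = (\rho^2+2\rho+2)/(2+\rho)$ together with $2+2\rho-\delta_{\max} = 2(1+\rho)^2/(2+\rho)$ and $2-\delta_{\max} = 2/(2+\rho)$, and verify that the closed-form expression collapses to $2(1+\rho)\HH(\rho/(1+\rho))$, thereby confirming continuity at the boundary and matching the second case of the proposition.
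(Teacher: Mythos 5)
Your proof is correct and follows exactly the route the paper implies: substituting the closed-form solutions from Lemma~\ref{lem:solveH_L1_1} into $-2\log x^* - \rho\log y^* - \delta\log z^*$ (which the text establishes equals $\widetilde{T}(\Delta_\rho,\delta)$ for $\delta\le\delta_{\max}$) and collecting logarithms, with the constant $-\rho$ coming from $-\rho\log 2$ in base~$2$. The bookkeeping is right, and the boundary check at $\delta_{\max}$ is a nice verification.
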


\begin{theorem}
\label{prop:gv-L1}
	For fixed $\rho, \delta \ge 0$, we have $\alpha(\Delta_{\rho},\delta)\ge R_{\textsc{gv}}(\Delta_\rho, \delta)$, where
	$$R_{\textsc{gv}}(\Delta_\rho, \delta) \triangleq 2(1+\rho)\HH\Big(\frac{\rho}{1+\rho}\Big) - \widetilde{T}(\Delta_\rho,\delta) .$$
\end{theorem}

\subsubsection{Evaluation of Marcus and Roth's Improvement of the GV Bound}
%Marcus and Roth \cite{marcus1992improved} improve the GV bound (\ref{eq:gv}) by considering certain subsets of the constrained space $\Delta_{n,r}$. Specially, they considered certain subsets $\Delta_{n,r}(p)$ where all vectors in the space depend on parameter $p$. 
In this subsection, we further improve the bound obtained above by introducing an additional parameter that constrains the code space, then deriving the GV bound for the resulting space, and finally optimizing this bound over all values of the introduced parameter.
This approach was first suggested in \cite{marcus1992improved}.

The additional parameter we introduce, denoted $p$, is the number of zeros in each vector. More precisely, we set
\begin{equation}
\Delta_{n,r}(p) = \left\{\bu=(u_1,u_2,\ldots, u_r) \in \bbZ^r: u_i \ge 0, \sum_{i=0}^r u_i=n, |\{i : u_i=0\}| =p \right\} .
\end{equation}
We allow $p$ to grow linearly with $n$ and set  $p=\tau n$.
Let $\Delta_\rho(\tau)$ denote the family of constrained spaces satisfying this relation, i.e., $\Delta_\rho(\tau)=\bigcup_n \Delta_{n,\floor{\rho n}}(\lfloor \tau n \rfloor)$.

\begin{proposition}
\label{prop:run-capMR-L1}
For fixed $\rho \ge 0$, $0 \le \tau \le \rho$, we have $$\bCap(\Delta_\rho(\tau)) \triangleq \lim_{n\to\infty}\frac{\log |\Delta_{n,\floor{\rho n}}(\lfloor \tau n \rfloor)|}{n}=\rho \HH(\tau/\rho) + \HH(\rho-\tau).$$
\end{proposition}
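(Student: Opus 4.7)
The plan is to evaluate $|\Delta_{n,r}(p)|$ exactly by an elementary counting argument and then pass to the limit via the standard binomial asymptotic $\tfrac{1}{n}\log\binom{n}{\lfloor \alpha n\rfloor}\to\HH(\alpha)$.

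First, I would decouple the choice of zero positions from the mass distribution on the surviving coordinates. Any $\bu\in\Delta_{n,r}(p)$ is uniquely specified by selecting which $p$ of the $r$ coordinates vanish, in $\binom{r}{p}$ ways, and then assigning strictly positive integer values to the remaining $r-p$ coordinates so that they sum to $n$, which by stars-and-bars gives $\binom{n-1}{r-p-1}$ possibilities. This yields the closed-form identity
\[
|\Delta_{n,r}(p)| \;=\; \binom{r}{p}\binom{n-1}{r-p-1}.
\]

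Second, I would substitute $r=\lfloor\rho n\rfloor$ and $p=\lfloor\tau n\rfloor$, take $\log$, divide by $n$, and apply the binomial asymptotic to each factor separately. The first factor contributes $\rho\HH(\tau/\rho)$ in the limit, and the second contributes $\HH(\rho-\tau)$; adding them gives the claimed capacity. Observe that for $\Delta_{n,r}(p)$ to be nonempty we need $r-p\le n$, i.e.\ $\rho-\tau\le 1$, which is exactly the domain on which $\HH(\rho-\tau)$ makes sense.

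The only mild technical wrinkle, namely handling the floor functions and the $\pm 1$ shifts in the binomial arguments, is absorbed into the $o(n)$ error term of Stirling's approximation and does not affect the limit. I do not anticipate any genuine obstacle: the statement reduces to a two-fold application of a standard combinatorial asymptotic, paralleling the proof of Proposition~\ref{prop:run-cap-L1} for the unrestricted simplex. A generating-function derivation via Theorem~\ref{thm:acsv} applied to $\sum_{n,r,p}|\Delta_{n,r}(p)|x^n y^r w^p = \bigl(1-wy-xy/(1-x)\bigr)^{-1}$ is also available, but the direct combinatorial computation is cleaner and more transparent.
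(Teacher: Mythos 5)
Your proof is correct and takes essentially the same route as the paper, which simply states that the claim ``follows directly from $|\Delta_{n,r}(p)| = \binom{r}{p}\binom{n-1}{r-p-1}$''; you supply the stars-and-bars derivation of that identity and the two-fold application of the standard binomial entropy asymptotic that the paper leaves implicit.
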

\begin{proof}
Follows directly from $|\Delta_{n,r}(p)| = \binom{r}{p}\binom{n-1}{r-p-1}$.
\end{proof}

We perform the same analysis for total ball size to obtain the bound.
We consider balls centered at $\bu \in \Delta_{n,r}(p)$ and having radius $s$, that is 
$V(\bu,s,p) = \{\bv \in \Delta_{n,r}(p): D(\bu,\bv) \le s\}$.

\begin{comment}
Since in the previous analysis, we had that $x_1 = x_2$, it is natural to assume that the dummy variable corresponds to different lengths $n_1$ and $n_2$ must be always equal in the unique solution. In general, it is true for any dummy variable that corresponds to same parameter. Therefore to avoid the long proof, we will simpify our notations. 
\end{comment}

To estimate the ball size, we consider the number of pairs $(\bu,\bv)$ with $L_1$-distance exactly $s$, denoted by $N(n_1,n_2,r,s,p_1,p_2) = |\{(\bu,\bv) \in \Delta_{n_1,r}(p_1)\times \Delta_{n_2,r}(p_2): D(\bu,\bv) = s, |\{i : u_i = 0\}| = p_1, |\{i : v_i = 0\}| = p_2\}|$. The following lemma gives a recursive expression for this quantity.

\begin{lemma}
\label{lem:general_l1_1}
\begin{align*}
N(n_1,n_2,r,s,p_1,p_2) =\ & N(n_1,n_2,r-1,s,p_1-1,p_2-1) \\ 
		&+ \sum_{j \ge 1}N(n_1,n_2-j,r-1,s-j,p_1-1,p_2) \\
		&+ \sum_{j \ge 1}N(n_1-j,n_2,r-1,s-j,p_1,p_2-1) \\				
		&+ \sum_{i \ge 1}N(n_1-i,n_2-i,r-1,s,p_1,p_2) \\ 
		&+ 2\sum_{i \ge 1}\sum_{j \ge 1}N(n_1-i,n_2-i-j,r-1,s-j,p_1,p_2) \\
		&+ 2\sum_{i \ge 1}\sum_{j \ge 1}N(n_1-i-j,n_2-i,r-1,s-j,p_1,p_2) \,. 
\end{align*}
\end{lemma}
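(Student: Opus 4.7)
The plan is to extend the truncation argument from Lemma~\ref{lem:general_l1} by additionally tracking the zero-coordinate counts $p_1, p_2$. Given a pair $(\bu,\bv) \in \Delta_{n_1,r}(p_1) \times \Delta_{n_2,r}(p_2)$ with $D(\bu,\bv)=s$, I strip off the last coordinates $(u_r,v_r)$. The resulting truncated pair lives in $\Delta_{n_1-u_r,r-1}(p_1') \times \Delta_{n_2-v_r,r-1}(p_2')$ at $L_1$ distance $s-|u_r-v_r|$, where $p_1'=p_1-[u_r=0]$ and $p_2'=p_2-[v_r=0]$. Unlike the unadorned recursion of Lemma~\ref{lem:general_l1}, the zero counts shift precisely when a last coordinate vanishes, so the previously single sum starting at $i\ge 0$ must be split into a ``$u_r=0$'' branch and a ``$u_r\ge 1$'' branch, and similarly for $v_r$.

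I would then partition the admissible values of $(u_r,v_r) \in \bbZ_{\ge 0}^2$ into six mutually exclusive and exhaustive cases: (i) $u_r=v_r=0$, yielding the first term $N(n_1,n_2,r-1,s,p_1-1,p_2-1)$; (ii) $u_r=0,\; v_r=j\ge 1$, yielding the second sum; (iii) $u_r=j\ge 1,\; v_r=0$, yielding the third sum by symmetry; (iv) $u_r=v_r=i\ge 1$, yielding the fourth sum; (v) $u_r=i\ge 1,\; v_r=i+j$ with $j\ge 1$; and (vi) $u_r=i+j,\; v_r=i\ge 1$ with $j\ge 1$, which together give the last two sums via the substitutions $(u_r,v_r)=(i,i+j)$ and $(u_r,v_r)=(i+j,i)$ respectively. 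In each case, the reductions of $n_1, n_2, s, p_1, p_2$ under truncation can be read off directly from the case description, so summing over all cases reproduces the right-hand side of the stated identity.

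The routine part is carrying out these substitutions and summing the resulting reduced counts. The main obstacle, and the step that warrants the most care, is to confirm that the six cases together exhaust all possibilities without overlap, and to justify the multiplicity factors in front of the fifth and sixth sums. As a sanity check I would verify a small base case (e.g.\ $r=1$ with small $n_1, n_2$) to validate the tally before handing the lemma off to the subsequent generating-function derivation, which will proceed in the same manner as Lemma~\ref{lem:general_ge}.
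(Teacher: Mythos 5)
Your truncation argument and six-way case split coincide exactly with the paper's own proof: the paper likewise decomposes $(u_r,v_r)$ into the cases $(0,0)$, $(0,j)$, $(j,0)$, $(i,i)$, $(i,i+j)$, $(i+j,i)$ with $i,j\ge 1$ and reads off the reductions in $n_1,n_2,s,p_1,p_2$ directly. However, the ``obstacle'' you flagged concerning the multiplicity factors is worth acting on rather than papering over: a partition of $(u_r,v_r)\in\bbZ_{\ge 0}^2$ into six \emph{disjoint} cases cannot produce coefficients of $2$ on the last two sums, and those $2$'s appear to be a typo in the lemma as printed. Confirmation comes from how the recursion is actually used: in the proof of Lemma~\ref{lem:general_ge_mr} all six terms enter with coefficient $1$, and the factors of $2$ in $H(x,y,z,w)$ arise only afterwards, when the symmetric pairs of terms (cases (2) and (3), and cases (5) and (6)) collapse into one upon specializing to the diagonal $n_1=n_2$, $p_1=p_2$, $x_1=x_2=x$, $w_1=w_2=w$. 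So the identity you actually derived---with unit coefficients throughout---is the correct one and is what the downstream generating-function computation uses; the difficulty you identified is not a gap in your argument but an error in the statement to be proved.
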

\begin{proof}
Let $\bu= (u_1,\ldots, u_r) \in \Delta_{n_1,r}(p_1)$ and $\bv=(v_1,\ldots,v_r) \in \Delta_{n_2,r}(p_2)$. Truncate the last coordinate $u_r$ and $v_r$, and consider the following possibilities:
(1) $u_r=v_r=0$,
(2) $u_r = 0$, $v_r = j$,
(3) $u_r = j$, $v_r = 0$,
(4) $u_r = i$, $v_r = i$,
(5) $u_r = i$, $v_r = i+j$,
(6) $u_r = i+j$, $v_r = i$.
%If $u_r=v_r=i$ for $i \ge 0$, we get the first sum where the distance remains the same. Otherwise, $u_r = i$ and $v_r = i+j$ for $i \ge 0$ and $j \ge 1$. Here, the length of $\bu,\bv$ become $n_1-i$ and $n_2-i-j$ respectively. Their distance decreases by $|u_r-v_r|=j$. Hence, we get the second term. The last one is obtained similarly when $u_r > v_r$.
\end{proof}
%We remark that Lemma~\ref{lem:general_l1_1} is correct even when $\bu,\bv$ belong to simplex of different weight. 

As before, the quantity of interest is  $N(n,n,\rho n, \delta n, \tau n, \tau n)$. Therefore, we define the generating function $F(x,y,z,w)=\sum_{n,r,s,p \ge 0}N(n,n,r,s,p,p)x^{2n}y^rz^sw^{2p}$ with four instead of six variables (see Proposition~\ref{prop:barpartial}).
We next show that $F$ is rational and that Theorem~\ref{thm:acsv} can be applied to it.

\begin{lemma}
\label{lem:general_ge_mr}
	$F(x,y,z,w) = \frac{G(x,y,z,w)}{H(x,y,z,w)}$, where $G(x,y,z,w)$ is some multivariate polynomial and
	{%\small 
	    \begin{align*}
	    	H(x,y,z,w) &= (1-x^2)(1-xz) - y(w^2(1-xz)(1-x^2) + 2wxz(1-x^2) + x^2(1+xz))\,.  
	    \end{align*}
	}
\end{lemma}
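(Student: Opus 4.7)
My plan is to mirror the derivation of Lemma~\ref{lem:general_ge} while carrying along the two new zero-counter variables. First, I would introduce the full six-variable generating function
\[
F(x_1, x_2, y, z, w_1, w_2) \triangleq \sum_{n_1, n_2, r, s, p_1, p_2 \ge 0} N(n_1, n_2, r, s, p_1, p_2)\, x_1^{n_1} x_2^{n_2} y^r z^s w_1^{p_1} w_2^{p_2}.
\]
Substituting the recursion of Lemma~\ref{lem:general_l1_1} into this definition translates each of the six cases (depending on whether the truncated coordinates $u_r, v_r$ are zero or positive, and on which is larger) into a single product of $yF$ with a geometric series. For example, the case $u_r = v_r = 0$ contributes $y w_1 w_2 F$; the case $u_r = 0$, $v_r = j \ge 1$ contributes $y w_1 \cdot \tfrac{x_2 z}{1-x_2 z}\cdot F$; and the case $u_r = i \ge 1$, $v_r = i + j$ with $j \ge 1$ contributes $y \cdot \tfrac{x_1 x_2}{1 - x_1 x_2} \cdot \tfrac{x_2 z}{1 - x_2 z} \cdot F$. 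Together with the base terms for small words, this yields a linear functional equation $F = G_0 + \Psi \cdot F$, which is solved in one step as $F = G_0/(1-\Psi)$.

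Next, since $N$ is invariant under the simultaneous swap $(n_1, p_1) \leftrightarrow (n_2, p_2)$, the generating function is invariant under $(x_1, w_1) \leftrightarrow (x_2, w_2)$. Applying Proposition~\ref{prop:barpartial} twice---once for each symmetric pair---justifies collapsing to four variables by setting $x_1 = x_2 = x$ and $w_1 = w_2 = w$, producing the $F(x, y, z, w)$ of the statement. After multiplying numerator and denominator by the common factor $(1-x^2)(1-xz)$ to clear the geometric-series denominators appearing in $\Psi$, the resulting $1 - \Psi$ becomes
\[
(1-x^2)(1-xz) - y\,\bigl[\,w^2 (1-xz)(1-x^2) \,+\, 2 w x z (1 - x^2) \,+\, x^2(1-xz) \,+\, 2 x^3 z\,\bigr],
\]
and the elementary simplification $x^2(1-xz) + 2 x^3 z = x^2 + x^3 z = x^2 (1 + x z)$ produces the claimed closed form for $H$. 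The numerator $G$ is the polynomial consisting of the initial-condition part of $G_0$ times the clearing factor $(1-x^2)(1-xz)$; we do not need its explicit form, since Theorem~\ref{thm:acsv} only requires $G(\vz^*) \ne 0$.

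The main difficulty lies in step one, namely the bookkeeping: matching each of the six recursion cases to the correct product of geometric series in the right subset of $\{x_1, x_2, z, w_1, w_2\}$, and in particular keeping careful track of whether the truncated coordinate contributes a factor of $w_1$, $w_2$, both, or neither. Once those six contributions are correctly assembled, the symmetry reduction via Proposition~\ref{prop:barpartial} and the remaining algebraic consolidation into the closed form for $H$ are purely mechanical.
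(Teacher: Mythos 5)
Your proposal is correct and reaches the same functional equation, and hence the same $H$, by essentially the same route as the paper: translate the six cases of Lemma~\ref{lem:general_l1_1} into geometric-series contributions, collapse the paired variables via the symmetry argument of Proposition~\ref{prop:barpartial}, solve the linear equation, and clear denominators by $(1-x^2)(1-xz)$. The only cosmetic difference is that you write out the six-variable generating function $\tilde{F}(x_1,x_2,y,z,w_1,w_2)$ explicitly before collapsing, whereas the paper carries the collapse implicitly by working directly with $F(x,y,z,w)$ and the weights $x^{2n}$, $w^{2p}$ from the start.
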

\begin{proof}
	{%\footnotesize
		\begin{align*}
			F(x,y,z,w)
			&= \sum_{n,r,s,p \ge 0}N(n,n,r,s,p,p)x^{2n}y^rz^sw^{2p} \\ 
			&= \sum_{n,r,s,p \ge 0}N(n,n,r-1,s,p-1,p-1)x^{2n}y^rz^sw^{2p} \\ 
			& \hspace{4mm} + \sum_{n,r,s,p \ge 0}\sum_{j \ge 1}N(n,n-j,r-1,s-j,p-1,p)x^{2n}y^rz^sw^{2p} \\
			& \hspace{4mm} + \sum_{n,r,s,p \ge 0}\sum_{j \ge 1}N(n-j,n,r-1,s-j,p,p-1)x^{2n}y^rz^sw^{2p} \\  
			& \hspace{4mm} + \sum_{n,r,s,p \ge 0}\sum_{i \ge 1}N(n-i,n-i,r-1,s,p,p)x^{2n}y^rz^sw^{2p} \\ 
			& \hspace{4mm} + \sum_{n,r,s,p \ge 0}\sum_{i \ge 1}\sum_{j \ge 1}N(n-i,n-i-j,r-1,s-j,p,p)x^{2n}y^rz^sw^{2p} \\
			& \hspace{4mm} + \sum_{n,r,s,p \ge 0}\sum_{i \ge 1}\sum_{j \ge 1}N(n-i-j,n-i,r-1,s-j,p,p)x^{2n}y^rz^sw^{2p} \\  
		 &=  1 + 2\sum_{i \ge 1} x^{i} + F(x,y,z,w)y\Bigg(w^2 + 2w\sum_{j \ge 1}(xz)^j + \sum_{i \ge 1}(x^{2i}) + 2\sum_{i \ge 1}\sum_{j \ge 1}(x^{2i})(xz)^j\Bigg)\,.
		\end{align*}
	}
	Hence,
	{%\footnotesize
		\begin{align*}
			\hspace{-1mm} F(x,y,z,w)
			&= \frac{1 + 2\sum_{i \ge 1} x^{i}}{1-y\big(w^2 + 2w\sum_{j \ge 1}(xz)^j + \sum_{i \ge 1}(x^{2i}) + 2\sum_{i \ge 1}\sum_{j \ge 1}(x^{2i})(xz)^j\big)} \\ 
			&= \frac{(1-xz)(1+x)^2}{(1-x^2)(1-xz) - y\big(w^2(1-xz)(1-x^2) + 2wxz(1-x^2) + x^2(1+xz)\big)} \,.
		\end{align*}
	}
\end{proof}
%\todo{Completed the system of equations. Compare with results for the same space in Mladen's paper}

%We recall that this work consider simplex of same weight $n$ with dimension $r$ and distance $s$ increase linearly with $n$. Hence, the interest is $N(n,n,\rho n, \delta n, \tau n, \tau n)$. 
From Theorem~\ref{thm:acsv} and Proposition~\ref{prop:barpartial}, we need to solve the following system of equations: %like the system of equation (\ref{eq:partial}).
\begin{equation}\label{eq:L1MR-H}
	%\begin{split}
	H = 0 \text{\quad and\quad}
	\frac{xH_{x}}{2} = \frac{yH_{y}}{\rho} = \frac{zH_{z}}{\delta}  = \frac{wH_{w}}{2\tau} ,
	%\end{split}
\end{equation}
where the function $H(x,y,z,w)$ is written as $H$ for simplicity. 
%where $H = (1-x_1x_2)(1-x_1z)(1-x_2z) - yx_1x_2(1-x_1x_2z^2)$.

Let $\lambda_1 = xz$ and $\lambda_2 = x^2$.

\begin{lemma}
\label{lem:solveH_L1}
	When $\rho$ is fixed, the solution of (\ref{eq:L1MR-H}) with respect to $\delta$ is:
	{\begin{align*}
		x^*(\delta)  &=  \sqrt{\lambda_2^*},\\
		z^*(\delta) &= \frac{\lambda_1^*}{x^*},\\
		w^*(\delta) &= \frac{\lambda_2^*(\delta(1-(\lambda_1^*)^2)-2\lambda_1^*(\rho-\tau))}{\lambda_1^*(1-\lambda_2^{*})(2(\rho - \tau) - \delta(1-\lambda_1^*))}                       , \\
		y^*(\delta) &= \frac{(1-\lambda_1^*)(1-\lambda_2^*)}{(w^*)^2(1-\lambda_1^*)(1-\lambda_2^*) + 2w^*\lambda_1^*(1-\lambda_2^*) + \lambda_2^*(1+\lambda_1^*)}  
		\end{align*}
		where $\lambda_1^*$ is the root of the equation
		\small{
		\begin{align*}
			& \delta^2(1+\rho)\lambda_1^5 + (-\delta^2(1+\tau)+4\delta(\rho-\tau)(1+\rho))\lambda_1^4 + (-\delta^2(\rho+2)+2\delta(\rho-\tau)(\rho-\tau-2)+4(\rho-\tau)^2(1+\rho))\lambda_1^3 \\ & + (-\delta^2(\rho-2\tau-2)-2\delta(\rho-\tau)(\rho+\tau+2)-4(\rho-\tau)^2(1-\rho))\lambda_1^2 
			+ (\delta^2 - 4\delta(\rho-\tau)(\rho-\tau-1))\lambda_1 + \delta^2(\rho-\tau-1) = 0 \,\,.
		\end{align*}
	}
		and $\lambda_2^*$ is given by $$\lambda_2^*(\delta) = 1 - (1+\lambda_1^*)\frac{2(\rho - \tau) - \delta(1-\lambda_1^*)}{(2-\delta)}.$$
		
	\begin{comment}			
		\begin{align*}
			\lambda_1^*(\delta) &= \frac{\sqrt{\rho^2(2-\delta)^2 + 4\delta^2} - \rho(2-\delta)}{2\delta} \\
			\lambda_2^*(\delta) &= 1 - (1-(\lambda_1^*)^2)\frac{2\rho - \delta(1+\rho-\lambda_1^*)}{(2-\delta)(1+\rho-\lambda_1^*)} \\
			x^*(\delta)  &=  \sqrt{\lambda_2^*},\\
			z^*(\delta) &= \frac{\lambda_1^*}{x^*},\\
			w^*(\delta) &= \frac{(1+\rho-\lambda_1^*)(2-\delta(\lambda_1^{*})^2)-2\rho(1-(\lambda_1^{*})^2)}{\lambda_1^*(1-(\lambda_1^{*})^2)(2\rho - \delta(1+\rho-\lambda_1^*))}                       , \\
			y^*(\delta) &= \frac{(1-\lambda_1^*)(1-\lambda_2^*)}{(w^*)^2(1-\lambda_1^*)(1-\lambda_2^*) + 2w^*\lambda_1^*(1-\lambda_2^*) + \lambda_2^*(1+\lambda_1^*)}  
		\end{align*}
	\end{comment}
	}
\end{lemma}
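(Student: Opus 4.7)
My plan is to parallel the proof of Lemma~\ref{lem:solveH_L1_1}, exploiting the fact that $H(x,y,z,w)$ depends on $x$ and $z$ only through the products $\lambda_1 = xz$ and $\lambda_2 = x^2$, so that
$$H = (1-\lambda_1)(1-\lambda_2) - y\bigl(w^2(1-\lambda_1)(1-\lambda_2) + 2w\lambda_1(1-\lambda_2) + \lambda_2(1+\lambda_1)\bigr).$$
Since $H$ is linear in $y$, setting $H=0$ solves linearly for $y$ and immediately produces the stated formula for $y^*$; a further useful identity obtained along the way is $yH_y = -(1-\lambda_1)(1-\lambda_2)$ on the hypersurface $\{H=0\}$.

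Next I would compute the remaining logarithmic derivatives via the chain rule, obtaining $zH_z = \lambda_1 H_{\lambda_1}$ and $xH_x = \lambda_1 H_{\lambda_1} + 2\lambda_2 H_{\lambda_2}$, while $wH_w$ is computed directly from the $w$-dependence of $H$. Substituting these into the three ratio equations in \eqref{eq:L1MR-H}, the condition $\frac{zH_z}{\delta} = \frac{xH_x}{2}$ reduces to a single linear relation between $\lambda_1$ and $\lambda_2$, which rearranges to give the stated formula for $\lambda_2^*$ as an affine function of $\lambda_1^*$ with coefficients depending only on $\rho, \delta, \tau$. The condition $\frac{wH_w}{2\tau} = \frac{zH_z}{\delta}$ is linear in $w$ once $y$ has been eliminated via $H=0$, and solving it yields the stated expression for $w^*$ in terms of $\lambda_1^*$ and $\lambda_2^*$.

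At this point three of the four unknowns are expressed in terms of $\lambda_1^*$. The remaining ratio $\frac{yH_y}{\rho} = \frac{zH_z}{\delta}$ becomes, after substituting the expressions for $\lambda_2^*$, $w^*$ and $y^*$, a rational equation in the single unknown $\lambda_1^*$ (with $\rho, \delta, \tau$ as parameters). Clearing denominators and collecting like powers of $\lambda_1^*$ must produce precisely the quintic stated in the lemma. The relevant root $\lambda_1^* \in (0,1)$ is singled out by the positivity requirement $\vz^* > \vzero$ in Theorem~\ref{thm:acsv}, which in turn forces $x^*, y^*, z^*, w^* > 0$, and then $x^* = \sqrt{\lambda_2^*}$ and $z^* = \lambda_1^*/x^*$ follow immediately.

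The main obstacle is computational rather than conceptual: unlike Lemma~\ref{lem:solveH_L1_1}, where the analogous elimination collapses to a low-degree closed form, the extra $w$ variable inflates the resultant and the bookkeeping required to collect the coefficients of $\lambda_1^k$ for $k=0,\ldots,5$ is delicate. I would carry out this step symbolically and double-check it with a computer algebra system to exclude sign mistakes. As sanity checks I would verify two limiting cases: (i) $\tau \to 0$, where the $w$-dependent terms should drop out and the remaining system should reduce to a relation consistent with the positive-simplex analogue of Lemma~\ref{lem:solveH_L1_1}; and (ii) $\delta \to 0$, where $\lambda_1^* \to 0$ and the formulas should recover the capacity-level solution obtained from Proposition~\ref{prop:run-capMR-L1}.
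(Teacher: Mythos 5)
Your high-level framework is the right one and matches the paper: rewrite $H$ in terms of $\lambda_1=xz$, $\lambda_2=x^2$, use $H=0$ to eliminate $y$ linearly (which does give the stated $y^*$ and the identity $yH_y=-(1-\lambda_1)(1-\lambda_2)$ on the hypersurface), and compute the remaining logarithmic derivatives by the chain rule ($zH_z=\lambda_1H_{\lambda_1}$, $xH_x=\lambda_1H_{\lambda_1}+2\lambda_2H_{\lambda_2}$). However, the specific assignment of equations to unknowns in your third paragraph is wrong, and the claimed simplicity of each individual equation does not hold.

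First, the condition $\tfrac{zH_z}{\delta}=\tfrac{xH_x}{2}$ does \emph{not} give a relation between $\lambda_1$ and $\lambda_2$ alone. After eliminating $y$ one finds
$\lambda_1 H_{\lambda_1}=-\tfrac{2\lambda_1(1-\lambda_2)\big(w(1-\lambda_2)+\lambda_2\big)}{D'}$ and
$\lambda_2 H_{\lambda_2}=-\tfrac{\lambda_2(1-\lambda_1^2)}{D'}$ with
$D'=w^2(1-\lambda_1)(1-\lambda_2)+2w\lambda_1(1-\lambda_2)+\lambda_2(1+\lambda_1)$, so $\tfrac{zH_z}{\delta}=\tfrac{xH_x}{2}$ simplifies (after the $D'$ cancels) to
\begin{equation*}
(2-\delta)\,\lambda_1(1-\lambda_2)\big(w(1-\lambda_2)+\lambda_2\big)=\delta\,\lambda_2(1-\lambda_1^2),
\end{equation*}
which is linear in $w$ but still explicitly contains $w$; it cannot, on its own, produce the stated affine formula for $\lambda_2^*$ with coefficients depending only on $\rho,\delta,\tau$. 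Second, the condition $\tfrac{wH_w}{2\tau}=\tfrac{zH_z}{\delta}$ is \emph{not} linear in $w$ after eliminating $y$: the numerator of $wH_w$ is $-2w(1-\lambda_1)(1-\lambda_2)^2\big(w(1-\lambda_1)+\lambda_1\big)$, so after the $D'$'s cancel the equation retains a genuine $w^2$ term, and a linear solve cannot yield the given rational expression for $w^*$.

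The actual route, which is what the paper does in Appendix~\ref{appen_solve_L1_MR}, requires \emph{combining} two ratio equations to kill the $w^2$ terms. Subtracting $\tfrac{zH_z}{\delta}=\tfrac{yH_y}{\rho}$ from $\tfrac{zH_z}{\delta}=\tfrac{wH_w}{2\tau}$ cancels the quadratic-in-$w$ piece and yields precisely the stated formula for $w^*$ as a rational function of $\lambda_1,\lambda_2$; substituting that $w^*$ into the (linear-in-$w$) equation $\tfrac{zH_z}{\delta}=\tfrac{xH_x}{2}$ then collapses to the affine relation $\lambda_2^*=1-(1+\lambda_1^*)\tfrac{2(\rho-\tau)-\delta(1-\lambda_1^*)}{2-\delta}$; and the remaining independent combination of the ratio equations, after substituting $y^*,w^*,\lambda_2^*$, produces the quintic in $\lambda_1^*$. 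Your write-up misses the need for this subtraction step and misattributes the role of each pairwise equality, so the derivation as planned would stall at the $w^*$ step. The limiting sanity checks you propose at the end are fine and worth doing, but they do not repair the elimination order.
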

\begin{proof}
Appendix~\ref{appen_solve_L1_MR}. 	
\end{proof}

Applying \eqref{eq:asym_approx}, we have that
%The asymptotic estimation of total ball size is given below as a direct outcome of equation \ref{eq:asym_approx}.
%The lemmas \ref{lem:recur_stick}, \ref{lem:generateN} and \ref{eq:stickyH} are true for different $n_1, n_2$. 
%This work aims to words of the same length $n$, so from (\ref{eq:asym_approx}) we obtain that 
\begin{equation}\label{eq:L1-asym_ball}
	\lim_{n \to \infty} \frac{\log N(n,n,\rho n,\delta n,\tau n, \tau n)}{n} = -2\log x^*(\delta) - \rho \log y^*(\delta) - \delta \log z^*(\delta) - 2\tau \log w^*(\delta) \,.  
\end{equation}
The total ball size is $|T(\Delta_{n,\rho n}(\tau n), \delta n)| = \sum_{s=0}^{\delta n} N(n,n,\rho n, s,\tau n, \tau n)$. Hence, we have

\begin{proposition}
For fixed $\rho\ge0$, $0 \le \tau \le \rho$, and $0\le\delta\le2$,
\begin{align*}
	\widetilde{T}(\Delta_\rho(\tau),\delta) 
	& = \max_{0\le \delta_1 \le \delta} -2\log x^*(\delta_1) - \rho \log y^*(\delta_1) - \delta_1 \log z^*(\delta_1) - 2\tau \log w^*(\delta_1) \\
	& = 
		-2\log x^*(\delta) - \rho \log y^*(\delta) - \delta \log z^*(\delta) -2\tau \log w^*(\delta) \,.
\end{align*}
\end{proposition}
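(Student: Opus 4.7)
The plan is to combine the ACSV estimate for the coefficient $N(n,n,\rho n, \delta n, \tau n, \tau n)$ with a monotonicity argument, following the same template that leads to Proposition~\ref{cor:rate_L1}. First I would apply Theorem~\ref{thm:acsv} to the generating function $F(x,y,z,w)$ supplied by Lemma~\ref{lem:general_ge_mr}, invoking Proposition~\ref{prop:barpartial} to justify the reduction of the two length variables and the two zero-count variables to single variables $x$ and $w$ (this reduction is legitimate since the diagonal $n_1=n_2$, $p_1=p_2$ forces the matching components of the critical point to coincide). The critical point $(x^*(\delta),y^*(\delta),z^*(\delta),w^*(\delta))$ of the system~\eqref{eq:L1MR-H} is the one computed in Lemma~\ref{lem:solveH_L1}, and formula~\eqref{eq:asym_approx} specializes exactly to~\eqref{eq:L1-asym_ball}.

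Next, since
\[
|T(\Delta_{n,\rho n}(\tau n), \delta n)| \;=\; \sum_{s=0}^{\lfloor \delta n \rfloor} N(n,n,\rho n, s, \tau n, \tau n)
\]
is a sum of at most $\delta n + 1$ nonnegative terms, its exponential growth rate equals the supremum of the per-term rates. Passing $s/n$ to the continuous parameter $\delta_1$, this yields
\[
\widetilde{T}(\Delta_\rho(\tau),\delta) \;=\; \max_{0\le \delta_1 \le \delta} \Phi(\delta_1),
\]
where $\Phi(\delta_1) \triangleq -2\log x^*(\delta_1) - \rho \log y^*(\delta_1) - \delta_1 \log z^*(\delta_1) - 2\tau \log w^*(\delta_1)$. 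This establishes the first equality.

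For the second equality I would show that $\Phi$ is nondecreasing on $[0,\delta]$. By Theorem~\ref{thm:maxlog} applied with the distance variable $z$ playing the role of $z_\ell$, we have $\Phi'(\delta_1) = -\log z^*(\delta_1)$, so the claim reduces to verifying $z^*(\delta_1)\le 1$ throughout the range $\delta_1 \in [0,\delta] \subseteq [0,2]$. This is the main obstacle, since $z^* = \lambda_1^*/x^* = \lambda_1^*/\sqrt{\lambda_2^*}$ is defined only implicitly through the degree-five polynomial for $\lambda_1^*$ in Lemma~\ref{lem:solveH_L1}. My plan is to substitute $z^*=1$ into the critical equations~\eqref{eq:L1MR-H} and verify that the resulting constraint forces $\delta_1 \ge 2$, so that inside the admissible regime $\delta_1 \le 2$ one always has $z^*(\delta_1)\le 1$; equivalently, to check that $\lambda_1^* \le \sqrt{\lambda_2^*}$ along the branch selected in Lemma~\ref{lem:solveH_L1}. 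Together with Theorem~\ref{thm:maxlog}, this monotonicity shows that the maximum is attained at $\delta_1=\delta$, completing the proof.
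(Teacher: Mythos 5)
Your framework reproduces the paper's implicit argument: the paper does not give a standalone proof of this proposition but simply derives \eqref{eq:L1-asym_ball} from Theorem~\ref{thm:acsv}, Proposition~\ref{prop:barpartial} and Lemma~\ref{lem:solveH_L1}, and then writes the total ball size as a sum over $s$. Your first equality — passing from the sum to the maximum of per-term exponents — is standard and fine, and your use of Theorem~\ref{thm:maxlog} to reduce the second equality to the sign of $\log z^*(\delta_1)$ is exactly the right move.

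The gap is in the final step. You propose to substitute $z^* = 1$ into \eqref{eq:L1MR-H} and ``verify that the resulting constraint forces $\delta_1 \ge 2$,'' but it does not: setting $z^*=1$ in \eqref{eq:omega2star_L1MR} forces $\lambda_2^*=(\lambda_1^*)^2$ and hence $\lambda_1^* = 1 - (\rho-\tau)$ (this is the computation in Appendix~\ref{appen_solve_L1_MR}), which is a relation between $\lambda_1^*$ and the \emph{fixed} parameters $\rho,\tau$, not a constraint on $\delta_1$. The value of $\delta_1$ at which this occurs is then determined by the quintic in Lemma~\ref{lem:solveH_L1}, and it is in general less than $2$ — for instance, at $\rho=2$, $\tau=3/2$, substituting $\lambda_1^*=1/2$ into the quintic gives $3\delta_1^2-11\delta_1+10=0$, whose non-degenerate root is $\delta_1=5/3<2$, with a valid positive solution $(x^*,y^*,z^*,w^*)=(\tfrac12,\tfrac1{16},1,3)$. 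So $z^*(\delta_1)\le 1$ fails beyond $\delta_1=5/3$ for that $\tau$, and the second equality as stated cannot be proved without a case distinction analogous to the one in Proposition~\ref{cor:rate_L1} (i.e., a $\tau$-dependent $\delta_{\max}$). The paper sidesteps this by only ever invoking the proposition at $\tau=\tau_{\rm opt}$, where Appendix~\ref{appen_solve_L1_MR} explicitly shows $\delta_{\max}=2$; your proof should either restrict to $\delta\le\delta_{\max}(\tau)$ or restrict to $\tau=\tau_{\rm opt}$, and in either case the ``forces $\delta_1\ge2$'' step needs to be replaced by the actual algebra that determines $\delta_{\max}(\tau)$.
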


Note that setting $w^* = 1$, or equivalently $\tau = \tau_{\rm initial} = \frac{\rho^2}{1+\rho}$, to evaluate the expression $2\bCap(\Delta_\rho(\tau)) - \widetilde{T}(\Delta_\rho(\tau),\delta)$, we obtain a similar expression as for the GV bound. But to obtain a better lower bound than the GV bound, we need to evaluate $R_\textsc{mr}(\Delta_\rho,\delta) =  \max_{\tau\in I} \big[2\bCap(\Delta_\rho(\tau))-\widetilde{T}(\Delta_\rho(\tau),\delta)\big]$. For this purpose, we will be considering $\delta$ as a parametric function in terms of $\lambda_1$ and we iterate over $\lambda_1$ to evaluate the GV-MR bound in the following proposition. 

\begin{theorem}
\label{prop:gvmr-L1}
	For fixed $\rho\ge0$, $0 \le \lambda_1^* \le 1$, we have $\alpha(\Delta_{\rho},\delta(\lambda_1^*))\ge R_{\textsc{mr}}(\Delta_{\rho},\delta(\lambda_1^*))$, where
	$$R_{\textsc{mr}}(\Delta_{\rho},\delta(\lambda_1^*)) \triangleq 2\bCap(\Delta_{\rho}(\tau_{\rm opt})) - \widetilde{T}(\Delta_{\rho}(\tau_{\rm opt}),\delta(\lambda_1^*))$$ and
	\begin{align*}
	\tau_{\rm opt} &= \frac{\rho^2}{1+\rho-\lambda_1^*}, \\
	\delta(\lambda_1^*) &= \frac{2\lambda_1^*\rho}{1-(\lambda_1^*)^2+\lambda_1^*\rho}.
	\end{align*}
\end{theorem}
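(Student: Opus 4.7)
The plan is to characterize the maximizer of the GV--MR objective $2\bCap(\Delta_\rho(\tau))-\widetilde{T}(\Delta_\rho(\tau),\delta)$ via a first-order condition in $\tau$, and then verify that the claimed pair $(\tau_{\rm opt},\delta(\lambda_1^*))$ satisfies this condition. Differentiating the closed form $\bCap(\Delta_\rho(\tau))=\rho\HH(\tau/\rho)+\HH(\rho-\tau)$ of Proposition~\ref{prop:run-capMR-L1} and using $\HH'(p)=\log\frac{1-p}{p}$ yields
\[
\frac{\partial\bCap(\Delta_\rho(\tau))}{\partial\tau}=\log\frac{(\rho-\tau)^2}{\tau(1-\rho+\tau)}.
\]
For the total ball, Lemma~\ref{lem:general_ge_mr} provides a four-variable rational generating function in which $w$ carries exponent $2p=2\tau n$ after the symmetry reduction of Proposition~\ref{prop:barpartial}. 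Applying Theorem~\ref{thm:maxlog} with this $w$-index playing the role of $r_\ell=2\tau$ (while $r_1=2$, $r_2=\rho$, $r_3=\delta$ remain fixed) and then using the chain rule, I obtain $\frac{\partial\widetilde{T}(\Delta_\rho(\tau),\delta)}{\partial\tau}=-2\log w^*(\delta)$. Setting the derivative of the GV--MR objective to zero thus reduces to the single scalar stationarity condition
\[
w^*(\delta)=\frac{\tau(1-\rho+\tau)}{(\rho-\tau)^2},
\]
which is the analogue of the $(z_\ell^*)^2=y_m^*$ relation in Corollary~\ref{cor:MRsupport}, adapted to the doubled exponent $2\tau n$ on $w$.

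It then remains to verify this stationarity relation at $(\tau_{\rm opt},\delta(\lambda_1^*))$. Substituting $\tau_{\rm opt}=\rho^2/(1+\rho-\lambda_1^*)$ into the right-hand side, one computes $\rho-\tau_{\rm opt}=\rho(1-\lambda_1^*)/(1+\rho-\lambda_1^*)$ and $1-\rho+\tau_{\rm opt}=(1-\lambda_1^*(1-\rho))/(1+\rho-\lambda_1^*)$, so the RHS collapses cleanly to $(1-\lambda_1^*(1-\rho))/(1-\lambda_1^*)^2$. For the LHS, I substitute both $\tau_{\rm opt}$ and $\delta(\lambda_1^*)=2\lambda_1^*\rho/(1-(\lambda_1^*)^2+\lambda_1^*\rho)$ into the formulas of Lemma~\ref{lem:solveH_L1}. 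The key intermediate simplifications are $2(\rho-\tau_{\rm opt})-\delta(1-\lambda_1^*)=2\rho(1-\lambda_1^*)^2/[(1+\rho-\lambda_1^*)(1-(\lambda_1^*)^2+\lambda_1^*\rho)]$ together with the polynomial identity $(1+\rho-\lambda_1^*)(1+\lambda_1^*)-(1-(\lambda_1^*)^2+\lambda_1^*\rho)=\rho$. These force $\lambda_2^*$ to reduce to $(1-\lambda_1^*(1-\rho))/(1+\rho-\lambda_1^*)$, and after successive cancellations of $(1\pm\lambda_1^*)$ factors the expression for $w^*$ collapses to $(1-\lambda_1^*(1-\rho))/(1-\lambda_1^*)^2$, matching the RHS. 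Combined with the general inequality $\alpha(\Delta_\rho,\delta)\ge R_\textsc{mr}(\Delta_\rho,\delta)$ from \eqref{eq:GVMR}, the claim follows.

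The principal obstacle is the bookkeeping in the last step: the formula for $w^*$ in Lemma~\ref{lem:solveH_L1} nests $\lambda_2^*$ inside a rational expression mixing $\rho-\tau_{\rm opt}$ and $1-\lambda_1^*$ factors, and the cancellations have to line up precisely; the compact identity $A(1+\lambda_1^*)-B=\rho$ with $A=1+\rho-\lambda_1^*$, $B=1-(\lambda_1^*)^2+\lambda_1^*\rho$ is what makes the reduction tractable. A secondary subtlety is verifying that this stationary point is the global maximum of the objective on $[0,\rho]$; this can be handled by a second-derivative check at the critical point, or by observing that the objective is smooth with a unique interior critical point for each $\delta\in(0,2)$ and that the boundary values are dominated. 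Sanity checks at $\lambda_1^*=0$ (where $\tau_{\rm opt}=\rho^2/(1+\rho)$, $\delta=0$, and both sides equal $1$) and at $\lambda_1^*\to 1$ support the parametrization and its range.
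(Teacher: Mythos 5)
Your approach mirrors the paper's own: derive the first-order stationarity condition in $\tau$ for the GV--MR objective and then verify it at the claimed pair $(\tau_{\rm opt},\delta(\lambda_1^*))$. Your stationarity condition $w^*=\tau(1-\rho+\tau)/(\rho-\tau)^2$ is the $w_1^*=w^*$ condition of Appendix~\ref{appen_solve_L1_MR}, and you correctly account for the fact that $w$ carries exponent $2p=2\tau n$ in the ball series versus $p=\tau n$ in the capacity series, so that the relation collapses to $w_1^*=w^*$ rather than $(w_1^*)^2=w^*$ of Corollary~\ref{cor:MRsupport}. Your algebraic reductions---$\rho-\tau_{\rm opt}$, $1-\rho+\tau_{\rm opt}$, $2(\rho-\tau_{\rm opt})-\delta(1-\lambda_1^*)$, the collapse of $\lambda_2^*$, and the key identity $(1+\rho-\lambda_1^*)(1+\lambda_1^*)-(1-(\lambda_1^*)^2+\lambda_1^*\rho)=\rho$---are all correct and do force $w^*$ down to $(1-\lambda_1^*(1-\rho))/(1-\lambda_1^*)^2$, matching the right-hand side.

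There is, however, a genuine gap. The expressions for $\lambda_2^*$, $w^*$, $y^*$ in Lemma~\ref{lem:solveH_L1} are derived from only three of the four ACSV saddle conditions (the $y$-, $z$-, and $w$-derivative equations in \eqref{eq:L1MR-H}, together with $H=0$); they are valid as formulas for any value of $\lambda_1$. The fourth condition---the $x$-derivative equation---is exactly the quintic \eqref{eq:omega1star_L1MR}, and it is what singles out the true saddle value of $\lambda_1$. You substitute $\tau_{\rm opt}$ and $\delta(\lambda_1^*)$ into the formulas treating the free parameter $\lambda_1^*$ as if it were automatically the correct saddle, but you never check that $\lambda_1=\lambda_1^*$ actually satisfies \eqref{eq:omega1star_L1MR} when $\tau=\tau_{\rm opt}(\lambda_1^*)$ and $\delta=\delta(\lambda_1^*)$. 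If it did not, the $w^*$ you computed would not be the $w$-coordinate of the ball-series saddle, both $\widetilde{T}$ and the stationarity condition would be evaluated at the wrong point, and the argument would fail. The paper's proof in Appendix~\ref{appen_solve_L1_MR} avoids this by solving \eqref{eq:condMRL1} and \eqref{eq:omega1star_L1MR} simultaneously as a $2\times 2$ system in $(\tau,\delta)$, so that consistency with the quintic is built in. To close the gap you would need the additional (algebraically heavy but routine) polynomial verification that $\lambda_1^*$ is a root of \eqref{eq:omega1star_L1MR} after substituting $\tau_{\rm opt}$ and $\delta(\lambda_1^*)$. Your remark about global optimality on $[0,\rho]$ is accurate; the paper also leaves that point implicit.
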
 

\begin{proof}
%	The proof of this proposition requires the generating function $$F(x_{1},y_{1},w_{1}) = \sum_{n \ge 0} |\Delta_{n, r}(p)| x_{1}^{n}y_{1}^{r}w_{1}^{p}$$ that compute the size of simplex $\Delta_{(n, \floor{\rho n})}(\floor{\tau n})$. It can be proved easily that 
%	$$F(x_{1},y_{1},w_{1}) = \frac{1-x_{1}}{(1-x_{1})-y_{1}(w_{1}(1-x_{1})+x_{1})}.$$ 
%	Now, we apply the Theorem~\ref{thm:acsv} to obtain the unique solution $w_{1}^* = \frac{\tau(1+\tau-\rho)}{(\rho-\tau)^2}$.
%	Then we use the Corollary~\ref{cor:MRsupport} to find $\tau = \tau_{\rm opt}$ for which $w^* = w_{1}^*$. 
Appendix~\ref{appen_solve_L1_MR}.   
\end{proof}

\begin{comment}
\begin{proof}
	To prove this proposition, we need to find the generating function for $\bCap(\rho,\tau)$ and find the unique solution $w_{1}^* = \frac{\tau(1+\tau-\rho)}{(\rho-\tau)^2}$ that corresponds to $\tau$. We omit the details on generating function for $\bCap(\rho,\tau)$ here and added in appendix~\ref{appen_solve_L1_MR}. 
	Then we use the Corollary~\ref{cor:MRsupport} to find $\tau$ by solving $w^* = w_{1}^*$. 
\end{proof}
\end{comment}

\subsubsection{Numerical Results}

%In this section, we compare the lower bounds in Proposition~\ref{prop:gv-L1}, Proposition~\ref{prop:gvmr-L1} with a sphere-packing bound. 
In this subsection, we plot the GV and GV-MR lower bounds from Theorems~\ref{prop:gv-L1} and \ref{prop:gvmr-L1}.
For comparison purposes, we first derive a sphere-packing bound in Proposition~\ref{prop:sp-L1}.

%In other words, we The GV-MR bound for constant weight binary codes is given in the following proposition.   

\vspace{1mm}

\noindent{\em Sphere-Packing Bound}.
For any $\bu\in \Delta_{n,r}$, consider the set $\bu + \Delta_{b,r} = \{\bu+\bw : \bw \in \Delta_{b,r}\} \subseteq \Delta_{n+b,r}$.
It is easy to see that, if $C\subseteq\Delta_{n,r}$ is a code of minimum distance $d$, the sets $\bu + \Delta_{b,r}$ and $\bv + \Delta_{b,r}$, with $b=\lfloor\frac{d-1}{2}\rfloor$, have to be disjoint for any two distinct codewords $\bu,\bv\in C$.
The cardinality of the largest such code can therefore be upper-bounded by
$$
A(\Delta_{n,r}, d) \le \frac{|\Delta_{n+b,r}|}{|\Delta_{b,r}|} \le  \frac{\binom{n+b+r-1}{r-1}}{\binom{b+r-1}{r-1}} , $$
where $b=\lfloor\frac{d-1}{2}\rfloor$.
The following claim is a direct consequence.

\begin{proposition}
\label{prop:sp-L1}
	For fixed $\rho\ge0$ and $0\le\delta\le2$, we have that $\alpha(\Delta_{\rho},\delta)\le R_{\textsc{sp}}(\Delta_{\rho},\delta)$, where
  $$R_{\textsc{sp}}(\Delta_{\rho},\delta) \triangleq (1+\delta/2+\rho) \HH\Big(\frac{\rho}{1+\delta/2+\rho}\Big) - (\delta/2+\rho) \HH\Big(\frac{\rho}{\delta/2+\rho}\Big) .$$
\end{proposition}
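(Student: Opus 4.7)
The plan is to make the informal sphere-packing sketch in the paragraph immediately preceding the statement fully rigorous and then pass to the asymptotic regime. First I would verify the packing geometry: for any $\bu\in\Delta_{n,r}$ and any $\bw\in\Delta_{b,r}$ with $b=\lfloor(d-1)/2\rfloor$, nonnegativity of $\bu$ and $\bw$ and additivity of the coordinate sum show that $\bu+\bw\in\Delta_{n+b,r}$, so the translate $\bu+\Delta_{b,r}$ is a subset of $\Delta_{n+b,r}$. I would then check disjointness: if $C\subseteq\Delta_{n,r}$ has minimum $L_1$ distance $d$ and some point lies in both $\bu+\Delta_{b,r}$ and $\bv+\Delta_{b,r}$ for distinct $\bu,\bv\in C$, then $\bu-\bv=\bw'-\bw''$ for some $\bw',\bw''\in\Delta_{b,r}$, whence by the triangle inequality $D(\bu,\bv)\le D(\bw',\vzero)+D(\bw'',\vzero)=2b\le d-1$, contradicting minimality.

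The two observations together yield $A(\Delta_{n,r},d)\cdot|\Delta_{b,r}|\le|\Delta_{n+b,r}|$, and inserting the closed-form $|\Delta_{N,R}|=\binom{N+R-1}{R-1}$ gives the finite-length estimate
\[
A(\Delta_{n,r},d)\le \frac{\binom{n+b+r-1}{r-1}}{\binom{b+r-1}{r-1}}
\]
already displayed in the excerpt.

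It then remains to compute the asymptotic behaviour as $n\to\infty$ with $r=\lfloor\rho n\rfloor$ and $d=\lfloor\delta n\rfloor$, so that $b/n\to \delta/2$. Taking $\tfrac{1}{n}\log$ on both sides and applying the entropy approximation for binomial coefficients (already used implicitly in Proposition~\ref{prop:run-cap-L1}), I would obtain
\[
\tfrac{1}{n}\log\binom{n+b+r-1}{r-1}\to (1+\delta/2+\rho)\,\HH\!\Big(\tfrac{\rho}{1+\delta/2+\rho}\Big),
\]
\[
\tfrac{1}{n}\log\binom{b+r-1}{r-1}\to (\delta/2+\rho)\,\HH\!\Big(\tfrac{\rho}{\delta/2+\rho}\Big),
\]
and the difference is exactly $R_{\textsc{sp}}(\Delta_{\rho},\delta)$.

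There is no real obstacle here; the argument is entirely standard Hamming-type packing. The only care-points are (i) checking that $b=\lfloor(d-1)/2\rfloor$ gives the correct triangle-inequality constant so that packings of distinct codewords stay disjoint, and (ii) verifying that the $O(1)$ corrections arising from the floor functions and the $-1$ shifts in the binomial coefficients are absorbed by the continuity of the entropy function, so they do not affect the leading $n$-scale asymptotics.
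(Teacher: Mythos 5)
Your proposal is correct and follows essentially the same sphere-packing argument the paper sketches in the paragraph preceding the proposition: translate by $\Delta_{b,r}$ with $b=\lfloor(d-1)/2\rfloor$, show disjointness via the triangle inequality and translation invariance of $L_1$, bound $A(\Delta_{n,r},d)\le|\Delta_{n+b,r}|/|\Delta_{b,r}|$, and pass to the entropy asymptotics of the binomial coefficients. You have merely filled in the details the paper leaves implicit, so there is no substantive difference in approach.
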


\vspace{1mm}

\noindent{\em Lower Bound from Binary Constant Weight Codes}.
In deriving the GV-MR bound, note that we are optimizing over the parameter $\tau$ where $\tau n$ is the number of zeroes in the constrained codewords. 
In the special case where $\tau = \rho - 1$, we observe that the space $\Delta_{n,r}(p)$ is equivalent to constant weight binary codes of length $r$ and weight $n$.
So, for this instance, the bound corresponds to the GV lower bound for constant weight binary codes of length $\rho n$ and weight $n$ (see for example~\cite{Graham1980}).
For completeness, we derive the bound here.

\begin{comment}

\begin{theorem}
\label{prop:gv-L1}
	For fixed $\rho, \delta \ge 0$, we have $\alpha(\Delta_{\rho},\delta)\ge R_{\textsc{gv}}(\Delta_\rho, \delta)$, where
	$$R_{\textsc{gv}}(\Delta_\rho, \delta) \triangleq 2(1+\rho)\HH\Big(\frac{\rho}{1+\rho}\Big) - \widetilde{T}(\Delta_\rho,\delta) .$$
\end{theorem}	
\end{comment}

\begin{proposition} 
	For fixed $\rho \ge 1$ and $\tau = \rho-1$, set $\delta_{\rm{max}}= 2\rho/(\rho-1)$. When $0\le\delta\le\delta_{\rm{max}}$, we have
 %$$R_\textsc{gv}(\Delta_\rho(\tau),\delta) =  %2\bCap(\Delta_\rho(\tau))-\widetilde{T}(\Delta_\rho(\tau), %\delta ) $$
 %where
 $$%\bCap(\Delta_\rho(\tau))=  \rho\HH\Big(\frac{1}{\rho}\Big), 
 \quad \widetilde{T}(\Delta_\rho(\tau), \delta ) =
  \rho\HH\Big(\frac{1}{\rho}\Big) + \HH\Big(\frac{\delta}{2}\Big) + (\rho-1)\HH\Big(\frac{\delta}{2(\rho-1)}\Big).$$
Otherwise, when  $\delta\ge\delta_{\rm{max}}$, we have $\widetilde{T}(\Delta_\rho(\tau),\delta)=2\rho\HH(\frac{1}{\rho})$.
\end{proposition}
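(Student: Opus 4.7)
The plan is to exploit the special structure forced by $\tau = \rho - 1$. With $r = \rho n$ coordinates and $p = (\rho-1)n$ of them equal to zero, exactly $n$ coordinates are nonzero. Since the coordinates sum to $n$ and each nonzero entry is at least $1$, every nonzero entry must equal exactly $1$. Hence $\Delta_{n,\rho n}((\rho-1)n)$ is in bijection with the set of binary vectors of length $\rho n$ and Hamming weight $n$, and under this bijection the $L_1$ metric on $\Delta_{n,\rho n}$ coincides with the Hamming metric. In particular $|\Delta_{n,\rho n}((\rho-1)n)| = \binom{\rho n}{n}$, consistent with Proposition~\ref{prop:run-capMR-L1}.

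First I would count the ball size explicitly. For a fixed weight-$n$ vector $\bu \in \{0,1\}^{\rho n}$, another weight-$n$ vector $\bv$ lies at $L_1$ distance $s$ from $\bu$ iff $s = 2k$ is even and $\bv$ is obtained by flipping some $k$ of the $n$ ones of $\bu$ together with some $k$ of its $(\rho-1)n$ zeros. This count, $\binom{n}{k}\binom{(\rho-1)n}{k}$, is independent of $\bu$, so
\[
|T(\Delta_{n,\rho n}((\rho-1)n),\, \floor{\delta n})| \;=\; \binom{\rho n}{n}\sum_{k=0}^{\lfloor \delta n/2 \rfloor}\binom{n}{k}\binom{(\rho-1)n}{k}.
\]

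Next I would pass to asymptotics. By Stirling, $\tfrac{1}{n}\log\binom{\rho n}{n}\to\rho\HH(1/\rho)$; writing $k=\alpha n$, we obtain $\tfrac{1}{n}\log\bigl[\binom{n}{k}\binom{(\rho-1)n}{k}\bigr]\to f(\alpha):=\HH(\alpha)+(\rho-1)\HH\!\bigl(\tfrac{\alpha}{\rho-1}\bigr)$. The log of the sum equals $\max_{0\le \alpha\le \delta/2}f(\alpha)$ up to $o(1)$. A derivative calculation gives the unconstrained maximizer $\alpha^* = (\rho-1)/\rho$, on which $f(\alpha^*)=\rho\HH(1/\rho)$ by the symmetry $\HH(x)=\HH(1-x)$, and $f$ is strictly increasing on $[0,\alpha^*]$. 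Thus for $\delta\le 2\alpha^*$ the optimum sits at the endpoint $\alpha=\delta/2$, yielding the stated piecewise formula, whereas for $\delta\ge 2\alpha^*$ the optimum is interior at $\alpha^*$, giving $\rho\HH(1/\rho)+f(\alpha^*)=2\rho\HH(1/\rho)$, which matches the saturation value $2\bCap(\Delta_\rho(\rho-1))$.

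The only routine check is continuity at the threshold $\delta=2\alpha^*=2(\rho-1)/\rho$: substituting gives $\HH(\delta/2) = \HH((\rho-1)/\rho) = \HH(1/\rho)$ and $(\rho-1)\HH(\delta/(2(\rho-1))) = (\rho-1)\HH(1/\rho)$, so the two branches agree at their meeting point. I do not foresee any real obstacle; the proof reduces to the standard Laplace-type argument for a single-index binomial convolution, once the binary constant-weight reduction is observed.
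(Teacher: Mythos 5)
Your proof follows exactly the paper's route: observe that $\tau=\rho-1$ forces every nonzero coordinate to equal $1$, reduce to binary constant-weight vectors of length $\rho n$ and weight $n$, count pairs at Hamming distance $2k$ as $\binom{\rho n}{n}\binom{n}{k}\binom{(\rho-1)n}{k}$, and take a Laplace-type maximum over $k\le \delta n/2$. The argument is correct.

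One thing worth flagging explicitly, which you derive but do not call out: your computation gives the threshold $2\alpha^*=2(\rho-1)/\rho$, which is \emph{not} the value $\delta_{\max}=2\rho/(\rho-1)$ stated in the proposition. Your value is the correct one. Indeed, for $\rho=2$ the space consists of binary vectors of length $2n$ and weight $n$, whose maximum $L_1$ distance is $2n$, so $\delta\le 2$; the stated $\delta_{\max}=4$ is impossible, and moreover the term $\HH(\delta/2)$ in the formula is undefined once $\delta>2$. The unconstrained maximizer of $f(\alpha)=\HH(\alpha)+(\rho-1)\HH(\alpha/(\rho-1))$ is $\alpha^*=(\rho-1)/\rho$, so the saturation point is $\delta=2(\rho-1)/\rho$, exactly as you found. (The paper's internal claim that the optimum is at $\delta_1=\min\bigl(\tfrac{\rho}{\rho-1},\tfrac{\delta}{2}\bigr)$ has the same reciprocal typo; it should read $\min\bigl(\tfrac{\rho-1}{\rho},\tfrac{\delta}{2}\bigr)$.) So your proof is not only sound but quietly repairs a misprint in the statement; you should surface this discrepancy rather than leave the reader to reconcile your threshold with the stated one.
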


\begin{proof}
In case when $\tau = \rho-1$, we have a sum of $\rho n$ terms equal to $n$, while $(\rho-1) n$ terms are $0$. Since every term is nonnegative, these $n$ nonzero terms must all be equal to $1$. Hence, each vector is equivalent to a binary sequence of length $r = \rho n$ and weight $n$. Hence, the total ball size in this case is easily derived as $|T(\Delta_{n,r}, \floor{\delta n})| = \binom{ r}{n}\sum_{i=0}^{\floor{\delta n/2}} \binom{n}{i}\binom{r - n}{i}$ and hence we have that $\widetilde{T}(\Delta_\rho(\tau), \delta ) = \rho\HH\Big(\frac{1}{\rho}\Big) + \max_{0\le \delta_1 \le \delta/2} \HH(\delta_1) + (\rho-1)\HH(\delta_1/(\rho-1))$. It can be easily verified that $\widetilde{T}(\Delta_\rho(\tau), \delta )$ is maximum when $\delta_1 = {\rm min}(\frac{\rho}{\rho-1},\frac{\delta}{2})$.
\begin{align*}
\widetilde{T}(\Delta_\rho(\tau), \delta ) &= 
\begin{cases}
\rho\HH\Big(\frac{1}{\rho}\Big) + \HH\Big(\frac{\delta}{2}\Big) + (\rho-1)\HH\Big(\frac{\delta}{2(\rho-1)}\Big)\,, & \delta\le \delta_{\rm{max}}\,,\\
2\rho\HH(\frac{1}{\rho})\,, & \delta \ge \delta_{\rm{max}}\,.
\end{cases}
\end{align*} 
 %When $s$ is fixed, we have 
	%\begin{align*}
	%	|T(n, r, s)| &= \binom{r}{n} \sum_{i=0}^{s/2}\binom{n}{i}\binom{r - n}{i}\\
	%						&\sim  \binom{r}{n} \sum_{i=0}^{s/2}\frac{n^i}{i!}\frac{(r - n)^i}{i!}\\
	%						&\lesssim \sqrt{\frac{\rho}{\rho-1}}\frac{\rho^{\rho n}}{\sqrt{2\pi n} (\rho-1)^{(\rho-1)n}} \frac{n^{s/2}}{(s/2)!}\frac{(r - n)^{s/2}}{(s/2)!} \frac{s}{2}.
	%\end{align*}
	%When $s$ is fixed, $ \limsup_{n\rightarrow \infty} \frac{\log |T(n,r,s)|}{n} = \rho \log(\rho)- (\rho-1)\log(\rho-1)$. As a result, the GV rate is at least $\rho \log(\rho)- (\rho-1)\log(\rho-1)$.
	%When $s= \delta n$.
%	Therefore, 
%	$R_\textsc{gv}(\tau,\delta) =  2\bCap(\tau)-\widetilde{T}(\tau, \delta )$,
%	where $\bCap(\tau)= \rho \HH(1/\rho)$, and $\widetilde{T}(\tau, \delta )=\HH(\frac{\delta}{2}) + (\rho-1)\HH(\frac{\delta}{2(\rho-1)})$.
\end{proof}

%For this $\tau$, $L_1$-metric channel is equivalent to Sticky Insertion Channel. We already have the results for this case.
%Further, for $\tau = 0$ the space $\Delta_{n,r}$ is equivalent to $\Delta^{+}_{n,r}$ as in subsection~\ref{sec:sticky}, where we present the details of how to compute $\bCap(\tau)$ and  maximize $\widetilde{T}(\tau,\delta)$.

\begin{proposition}\label{prop:LB-CWBC}
    For fixed $\rho \ge 1, \tau = \rho - 1$ and $\delta \ge 0$, we have $R_\textsc{gv}(\Delta_\rho(\tau),\delta) =  2\bCap(\Delta_\rho(\tau))-\widetilde{T}(\Delta_\rho(\tau) ,\delta )$.
\end{proposition}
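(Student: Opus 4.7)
The plan is to recognize Proposition~\ref{prop:LB-CWBC} as the specialization of the general GV bound formula \eqref{eq:gv} to the restricted space $\Delta_\rho(\tau)$ at the distinguished value $\tau = \rho - 1$, combined with an identification of that space with binary constant weight codes. The first step is the identification. When $\tau = \rho - 1$, a vector $\bu \in \Delta_{n, \rho n}$ with exactly $(\rho - 1) n$ zero entries has $n$ remaining entries that are positive integers summing to $n$, so each must equal $1$. Thus $\Delta_{n, \rho n}(\tau n)$ is in bijection with $\{0,1\}$-vectors of length $\rho n$ and Hamming weight $n$, and the $L_1$ distance between two such integer vectors equals twice the Hamming distance between the corresponding binary vectors.

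The second step is to invoke the GV argument itself. Since $\Delta_{n, \rho n}(\tau n) \subseteq \Delta_{n, \rho n}$, any $(\Delta_\rho(\tau), d)$-code is automatically a $(\Delta_\rho, d)$-code, so applying the Gu--Fuja form \eqref{eq:GVtotalball} to the subspace $\Delta_\rho(\tau)$ yields a family of $(\Delta_{n, \rho n}, \lfloor \delta n \rfloor)$-codes whose asymptotic rate approaches $2\bCap(\Delta_\rho(\tau)) - \widetilde{T}(\Delta_\rho(\tau), \delta)$. The equality asserted in the proposition then follows immediately from the definition of $R_\textsc{gv}$ in \eqref{eq:gv} applied to this subspace. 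Substituting $\bCap(\Delta_\rho(\rho - 1)) = \rho \HH(1/\rho) + \HH(1) = \rho \HH(1/\rho)$ from Proposition~\ref{prop:run-capMR-L1} together with the closed form for $\widetilde{T}(\Delta_\rho(\rho - 1), \delta)$ from the preceding proposition produces the classical GV lower bound $\rho\HH(1/\rho) - \HH(\delta/2) - (\rho - 1)\HH(\delta/(2(\rho-1)))$ for binary constant weight codes of length $\rho n$, weight $n$, and minimum Hamming distance $\lfloor \delta n / 2 \rfloor$.

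The main ``obstacle'' is really only bookkeeping: once the bijection with binary constant weight codes is noted and the $L_1$-to-Hamming scaling is accounted for, every asymptotic ingredient has already been supplied by the preceding statements, and no new analytic combinatorics is required. The one verification I would perform carefully is that the preceding total ball size formula, derived for pairs in $\Delta_\rho(\tau) \times \Delta_\rho(\tau)$ with matching zero-count parameters $(p_1, p_2) = (\tau n, \tau n)$, does indeed enumerate the pairs that give rise to the Hamming-metric ball under the binary interpretation; this reduces to checking that both endpoints have weight exactly $n$, which is automatic from the definition of $\Delta_{n, \rho n}(\tau n)$.
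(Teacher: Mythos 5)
Your proposal is correct and matches the paper's approach; in fact the paper provides no separate proof block for this proposition because, as you observe, the asserted identity is just the definition of $R_\textsc{gv}$ from equation \eqref{eq:gv} (and Table~\ref{Notationtable}) specialized to the subspace $\cS = \Delta_\rho(\tau)$, with $\bCap$ and $\widetilde{T}$ already supplied in closed form by the two preceding propositions. Your supporting remarks about the bijection with weight-$n$ binary words of length $\rho n$ and the $L_1$-to-Hamming factor of two merely rederive what the paper established in the proof of the preceding proposition on $\widetilde{T}(\Delta_\rho(\tau),\delta)$, so nothing new is needed and no gap is present.
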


In Figure~\ref{fig:L1}, all these curves are plotted for the case $\rho = 2$. Observe that both GV and GV-MR bounds improve the GV lower bound corresponding to binary constant weight codes.
Note also that the GV bound is positive only for $\delta < 2(1+\rho)/(2+\rho) = 1.5$ (see Proposition~\ref{cor:rate_L1} and Theorem~\ref{prop:gv-L1}), while the GV-MR bound is positive for all $\delta < 2$ (see Theorem~\ref{prop:gvmr-L1}).

\begin{figure}[t!]
	\begin{center}
		\includegraphics[width=0.7\linewidth]{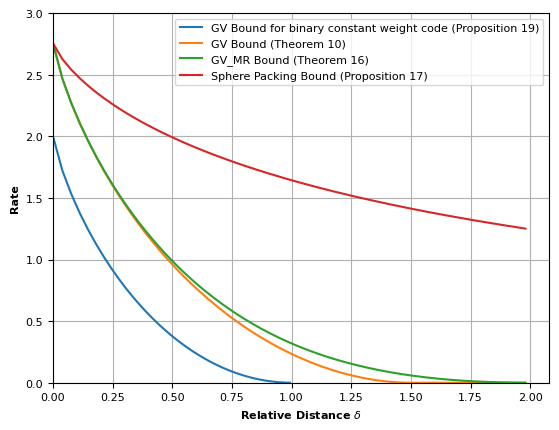}
	\end{center}
	\caption{Bounds on the highest attainable code rate $\alpha(\Delta_{2},\delta)$ for the standard simplex ($\rho=2$).}
	\vspace{-2mm}
	\label{fig:L1}
\end{figure}

%\newpage

%\subsection{Sticky-Insertion Channel}\label{sec:sticky}
\subsection{The Positive Simplex}\label{sec:sticky}

In this subsection, we consider the code space
\begin{equation}
\Delta^{\!+}_{n,r} = \left\{(u_1,u_2, \ldots, u_r)  \in \mathbb{Z}^r: u_i \ge 1, \sum_{i=1}^r u_i = n\right\}
\end{equation}
of cardinality $|\Delta^{\!+}_{n,r}|=\binom{n-1}{r-1}$.
We are again interested in the regime $n\to\infty$, $r=\rho n$, and we denote by $\Delta^{\!+}_{\rho}$ the corresponding family of simplices.
The motivation for studying this space comes from the fact that designing codes correcting a given number of errors in run-preserving channels (i.e., channels that preserve the number of runs in input sequences) such as the sticky-insertion channel is equivalent to designing codes in $\Delta^{\!+}_{n,r}$ under $L_1$ metric.
In this correspondence, the parameter $n$ represents the length of the input sequence, the parameter $r$ the number of runs in that sequence, and $u_i$ the length of the $i$'th run.

\begin{proposition}\label{prop:run-cap}
 %$|\cS(n,r)|=2\binom{n-1}{r-1}$. Furthermore, 
For fixed $0\le \rho\le 1$, we have
$$
\bCap(\Delta_{\rho}^{\!+}) \triangleq 
\lim_{n\to\infty}\frac{\log\big|\Delta^{\!+}_{n,\lfloor\rho n\rfloor}\big|}{n}=\HH(\rho).
$$
\end{proposition}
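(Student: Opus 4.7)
The plan is to reduce this to the standard binomial-entropy asymptotic already verified in Example~\ref{exa:binomial}. The first step is to observe the explicit cardinality $|\Delta^{\!+}_{n,r}| = \binom{n-1}{r-1}$, which follows from the usual stars-and-bars argument: the substitution $u_i' = u_i - 1 \ge 0$ turns a positive composition of $n$ into $r$ parts into a nonnegative composition of $n-r$ into $r$ parts, of which there are $\binom{(n-r)+(r-1)}{r-1} = \binom{n-1}{r-1}$.

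The second step is to set $r = \lfloor \rho n \rfloor$ and compute
\[
\frac{1}{n}\log\binom{n-1}{\lfloor\rho n\rfloor - 1}.
\]
Writing $n' = n-1$ and $k = \lfloor \rho n\rfloor - 1$, we have $k/n' \to \rho$ as $n\to\infty$. Example~\ref{exa:binomial} gives $\frac{1}{n'}\log\binom{n'}{k} \to \HH(\rho)$ whenever $k/n'\to\rho$, by continuity of the binary entropy $\HH$ on $[0,1]$. Since $n'/n\to 1$, multiplying by this factor preserves the limit and yields $\bCap(\Delta_\rho^+) = \HH(\rho)$.

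There is no substantive obstacle here; the only care required is in handling the floor function and the $\pm 1$ shifts in the binomial arguments, but these produce only $O(\log n)$ additive perturbations to $\log\binom{n-1}{\lfloor\rho n\rfloor - 1}$, which vanish after dividing by $n$. Hence the proposition follows immediately from the single application of Example~\ref{exa:binomial} together with the stars-and-bars identity, and no additional multivariate analytic-combinatorics machinery is needed for this particular capacity.
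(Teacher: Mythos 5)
Your proof is correct and follows exactly the route the paper intends: the paper simply records $|\Delta^{\!+}_{n,r}|=\binom{n-1}{r-1}$ and then states the proposition without further argument, treating the binomial-entropy asymptotic as known; your stars-and-bars step plus the appeal to $\frac{1}{n}\log\binom{n}{k}\to\HH(\rho)$ for $k/n\to\rho$ (together with the remark that the $\pm1$ shifts and the floor perturb $\log\binom{\cdot}{\cdot}$ by only $O(\log n)$) fills in precisely those missing details.
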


One immediately notices that $\Delta^{\!+}_{n,r} = (1, \ldots, 1) + \Delta_{n-r,r}$ and, hence, bounds for codes in $\Delta^{\!+}_{n,r}$ can be directly obtained from the bounds for codes in $\Delta_{n,r}$ given in the previous subsection.
Nonetheless, we state some of the results explicitly so that they can be easily accessed by those interested in coding problems in $\Delta^{\!+}_{n,r}$.
Additionally, we perform optimization of the obtained bounds over the parameter $\rho$, which, in the mentioned application, corresponds to deriving bounds on codes for the sticky insertion channel with no restrictions on the number of runs in input sequences.
Namely, if we consider the code space $\Delta^{\!+}_{n} = \bigcup_{r=0}^n \Delta^{\!+}_{n,r}$, and given that the sticky-insertion channel is run-preserving, implying that an optimal code in $\Delta^{\!+}_{n}$ is a union of optimal codes in $\Delta^{\!+}_{n,r}$ over all $r$, it follows that
\begin{equation}
\alpha(\Delta^{\!+}, \delta) = \max_{0\le\rho\le1}  \alpha(\Delta^{\!+}_\rho, \delta)
\end{equation}
for every $\delta$.

\subsubsection{Evaluation of the GV Bound}
%The procedure to obtain the generating function is similar as for the standard simplex (Section~\ref{sec:TotalBallSize}), and hence we omit the details.
or completeness, we state the following lemma to find the unique solution that corresponds to the generating function of the number of pairs $(\bu,\bv)$ with $L_1$ distance $s$, denoted by $N(n_1,n_2,r,s) = |\{(\bu,\bv) \in \Delta^{\!+}_{n_1,r}\times \Delta^{\!+}_{n_2,r}: D(\bu,\bv) =  s\}|$. The generating function of the quantity $N(n_1,n_2,r,s)$ is given in the following lemma.

\begin{lemma}
%\label{lem:general_ge_si}
	$F(x,y,z,w) = \frac{G(x,y,z,w)}{H(x,y,z,w)}$,
	where $G(x,y,z,w)$ is some multivariate polynomial and 
	{%\small 
		\begin{align*}
			H(x,y,z,w) &= (1-x^2)(1-xz) - yx^2(1+xz)\,.  
		\end{align*}
	}
\end{lemma}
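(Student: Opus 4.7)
The plan is to mirror the derivation of Lemma~\ref{lem:general_ge}, adapted to the constraint $u_i, v_i \ge 1$ enforced by the positive simplex. First I would establish a recursion for $N(n_1, n_2, r, s)$ by truncating the last coordinates $u_r$ and $v_r$, but now indexing these coordinates from $1$ (rather than from $0$). Specifically, the cases $u_r = v_r = i$, $u_r = i$ and $v_r = i + j$ with $j \ge 1$, and $u_r = i + j$ and $v_r = i$ with $j \ge 1$, all with $i \ge 1$, give
\[
 N(n_1, n_2, r, s) = \sum_{i \ge 1} N(n_1 - i, n_2 - i, r-1, s) + \sum_{i \ge 1, j \ge 1} \big[ N(n_1 - i, n_2 - i - j, r-1, s - j) + N(n_1 - i - j, n_2 - i, r-1, s - j) \big].
\]

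Next I would multiply this recursion by $x_1^{n_1} x_2^{n_2} y^r z^s$, sum over all indices, and invoke the symmetry of Proposition~\ref{prop:barpartial} to set $x_1 = x_2 = x$. The three sums combine into a closed-form contribution $y F(x,x,y,z) \cdot \frac{x^2}{1 - x^2}\bigl(1 + \frac{2 x z}{1 - x z}\bigr) = y F(x,x,y,z) \cdot \frac{x^2(1 + xz)}{(1-x^2)(1-xz)}$. Adding the contribution $1$ from the empty-vector base case ($r = n = s = 0$) and solving for $F$ yields
\[
F(x,x,y,z) = \frac{(1-x^2)(1-xz)}{(1-x^2)(1-xz) - y x^2 (1 + xz)},
\]
which has precisely the denominator claimed in the statement; the extra $x^2$ factor compared with the standard-simplex case of Lemma~\ref{lem:general_ge} is exactly what the shift from $i \ge 0$ to $i \ge 1$ produces in the geometric sum $\sum_{i\ge 1}(x_1 x_2)^i$.

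A cleaner alternative, which I would use as a sanity check, exploits the translation bijection $\phi\colon \Delta^{\!+}_{n,r} \to \Delta_{n-r,r}$, $(u_1,\ldots,u_r) \mapsto (u_1 - 1, \ldots, u_r - 1)$. Since $\phi$ preserves the $L_1$ distance and shifts the weight by $r$, one has $N^{+}(n_1, n_2, r, s) = N(n_1 - r, n_2 - r, r, s)$, so the generating function of $N^{+}$ is obtained from the standard-simplex generating function $F(x,x,y,z) = \frac{(1+x)^2(1-xz)}{(1-x^2)(1-xz) - y(1+xz)}$ of Lemma~\ref{lem:general_ge} by the substitution $y \mapsto y x^2$. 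This substitution replaces $y(1+xz)$ by $y x^2 (1 + xz)$ in the denominator, immediately reproducing the claimed $H(x,y,z) = (1-x^2)(1-xz) - y x^2 (1+xz)$.

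I do not foresee a real obstacle here; the only thing to be careful about is bookkeeping the boundary terms in the recursion (in particular confirming that the base case $n = r = s = 0$ contributes exactly $1$) and ensuring the geometric sums $\sum_{i \ge 1} x^{2i}$ and $\sum_{j \ge 1}(xz)^j$ are indexed from $1$, so that the leading $x^2$ appears in front of $y$ in $H$. Everything else is routine manipulation of rational generating functions.
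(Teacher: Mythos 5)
Your proposal is correct and follows essentially the same route the paper takes: the recursion on the last coordinate with $i\ge 1$ (mirroring Lemma~\ref{lem:general_ge}) is exactly the derivation the paper intends, and the translation bijection $\Delta^{\!+}_{n,r} = (1,\ldots,1) + \Delta_{n-r,r}$ that you use as a sanity check is the same observation the paper makes at the start of Section~\ref{sec:sticky}. The substitution $y \mapsto x^2 y$ (more precisely $y\mapsto x_1x_2 y$ before symmetrizing) is a clean way to see it, and both of your computations land on the stated $H$; the only small wrinkle is that the lemma as written lists $w$ as an argument of $F$ and $H$, but no such variable appears in the positive-simplex setting without the MR parameter, so this is just a leftover from the surrounding notation and not something your proof needs to address.
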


As before, we are interested in the pairs residing in the simplex with the same weight $n$, dimension $\rho n$, and distance $\delta n$, thus the quantity of interest is  $N(n,n,\rho n, \delta n)$. By Theorem~\ref{thm:acsv} and Proposition~\ref{prop:barpartial}, with $ r= \rho n, s = \delta n$, we need to solve the following system of equations:
\begin{equation}\label{eq:PositiveSimplex-H}
	%\begin{split}
	H = 0 \text{\quad and\quad}
	\frac{xH_x}{2} = \frac{yH_{y}}{\rho} = \frac{zH_{z}}{\delta}, \;\;\text{where}\;\; H= (1-x^2)(1-xz) - yx^2(1+xz) .
	%\end{split}
\end{equation}

\begin{lemma}
\label{lem:sol-SI}
For fixed $0 \le \rho \le 1$, the solution of \eqref{eq:PositiveSimplex-H} with respect to $\delta$ is:
		\begin{align*}
			x^*(\delta) &=  \sqrt{1 - \frac{2\rho}{2-\delta}},\\
			z^*(\delta) &= \frac{\sqrt{\rho^2 + \delta^2} - \rho}{x^*\delta},\\
			y^*(\delta) &= 2\frac{\sqrt{\rho^2 + \delta^2}-\delta}{2-\delta - 2\rho}.  
		\end{align*}
\end{lemma}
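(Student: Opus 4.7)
The plan is to solve the system \eqref{eq:PositiveSimplex-H} directly, mirroring the approach used for the standard simplex in Lemma~\ref{lem:solveH_L1_1}. I would first introduce the auxiliary variables $\lambda_1 = xz$ and $\lambda_2 = x^2$, in which the polynomial takes the compact form $H = (1-\lambda_2)(1-\lambda_1) - y\lambda_2(1+\lambda_1)$; the constraint $H = 0$ then yields the identity $y\lambda_2(1+\lambda_1) = (1-\lambda_1)(1-\lambda_2)$, which is the main tool for eliminating $y$ from the partial-derivative ratios.

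Next I would compute the logarithmic partial derivatives and simplify using this identity. A short calculation gives $yH_y = -(1-\lambda_1)(1-\lambda_2)$ and $zH_z = -2\lambda_1(1-\lambda_2)/(1+\lambda_1)$, while the analogous computation for $xH_x$ is more delicate and, after collecting terms, reduces to $xH_x = -2(1+\lambda_1 - \lambda_1^2 - \lambda_1\lambda_2)/(1+\lambda_1)$.

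With these expressions in hand I would solve the two ratio conditions in sequence. First, $yH_y/\rho = zH_z/\delta$ collapses (assuming $\lambda_2 \neq 1$) to the quadratic $\delta\lambda_1^2 + 2\rho\lambda_1 - \delta = 0$, whose positive root is $\lambda_1^* = (\sqrt{\rho^2+\delta^2}-\rho)/\delta$; this at once yields the claimed formula for $z^*$ in terms of $x^*$. Second, $xH_x/2 = yH_y/\rho$ reduces to $\rho(1+\lambda_1 - \lambda_1^2 - \lambda_1\lambda_2) = (1-\lambda_1^2)(1-\lambda_2)$, into which I would substitute $1-\lambda_1^2 = 2\rho\lambda_1/\delta$ (from the quadratic above) and solve for $\lambda_2$ to obtain $\lambda_2^* = 1 - 2\rho/(2-\delta)$, hence $x^* = \sqrt{\lambda_2^*}$. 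Back-substituting into $y = (1-\lambda_1)(1-\lambda_2)/[\lambda_2(1+\lambda_1)]$ and rationalizing the factor $(1-\lambda_1^*)/(1+\lambda_1^*) = (\sqrt{\rho^2+\delta^2}-\delta)/\rho$ produces the stated expression for $y^*$.

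The main obstacle is the reduction of $xH_x$ to its clean form with denominator $(1+\lambda_1)$: this step requires careful polynomial manipulation to collapse a sum containing both $y\lambda_2$ and $y\lambda_1\lambda_2$ after multiplying through by $(1+\lambda_1)$. A cleaner route that sidesteps this computation is to observe that the positive-simplex polynomial $H$ coincides with the standard-simplex polynomial from Lemma~\ref{lem:solveH_L1_1} evaluated at $(x, yx^2, z)$; the substitution $u = yx^2$ then converts the present system into the standard-simplex system with shifted parameters $\rho' = \rho/(1-\rho)$ and $\delta' = \delta/(1-\rho)$, mirroring on the algebraic side the bijection $\Delta^{\!+}_{n,r} = (1,\ldots,1) + \Delta_{n-r,r}$ already highlighted in the text, and Lemma~\ref{lem:solveH_L1_1} can then be invoked directly to deliver the same formulas.
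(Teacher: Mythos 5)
Your primary argument is correct and matches the paper's approach: the paper's proof of this lemma simply reads ``The proof is the same as that of Lemma~\ref{lem:solveH_L1_1},'' and your direct manipulation via $\lambda_1 = xz$, $\lambda_2 = x^2$ (eliminate $y$ from $H=0$, equate the $yH_y/\rho$ and $zH_z/\delta$ ratios to get the quadratic $\delta\lambda_1^2 + 2\rho\lambda_1 - \delta = 0$, then equate $xH_x/2$ and $yH_y/\rho$ and substitute $1-\lambda_1^2 = 2\rho\lambda_1/\delta$ to solve for $\lambda_2$) is precisely the computation carried out in Appendix~\ref{appen_solve_L1} for the standard simplex, with the only change being the extra factor $x^2$ multiplying $y$ in $H$. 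I verified each intermediate identity you state, including $xH_x = -2(1+\lambda_1-\lambda_1^2-\lambda_1\lambda_2)/(1+\lambda_1)$ and the rationalization $(1-\lambda_1^*)/(1+\lambda_1^*) = (\sqrt{\rho^2+\delta^2}-\delta)/\rho$, and they are correct. Your alternative route via $u = yx^2$ is a genuinely nice addition not made explicit in the paper: since $H^+(x,y,z) = H^\Delta(x,yx^2,z)$, the chain rule gives $yH^+_y = uH^\Delta_u$, $zH^+_z = zH^\Delta_z$, and $xH^+_x = xH^\Delta_x + 2uH^\Delta_u$, so the ratio system for $\Delta^{\!+}$ with weights $(2,\rho,\delta)$ becomes the system for $\Delta$ with weights $(2(1-\rho),\rho,\delta)$, i.e.\ the standard-simplex system at $\rho' = \rho/(1-\rho)$, $\delta' = \delta/(1-\rho)$; pushing the solution of Lemma~\ref{lem:solveH_L1_1} through these shifts and unwinding $y = u/x^2$ reproduces all three formulas, and this algebraically realizes the affine bijection $\Delta^{\!+}_{n,r} = (1,\ldots,1) + \Delta_{n-r,r}$ that the paper invokes only in passing. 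Either route is sound; the second is shorter and more transparent about why the two lemmas must agree.
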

\begin{proof}
The proof is the same as that of Lemma~\ref{lem:solveH_L1_1}. 
\end{proof}

Applying \eqref{eq:asym_approx}, we get
\begin{equation*}
	\lim_{n \to \infty} \frac{\log N(n,n,\rho n,\delta n)}{n} = -2\log x^*(\delta) - \rho \log y^*(\delta) - \delta \log z^*(\delta)\,.  
\end{equation*}
The total ball size with distance $\delta n$ is $|T(\Delta_{\rho}^{\!+},\delta n)| = \sum_{s=0}^{\delta n}N(n,n,\rho n, s)$, implying that
	\begin{align*}
		\widetilde{T}(\Delta_{\rho}^{\!+},\delta) 
		& = \max_{0\le \delta_1 \le \delta} -2\log x^*(\delta_1) - \rho \log y^*(\delta_1) - \delta_1 \log z^*(\delta_1) \\
		& = \begin{cases}
			-2\log x^*(\delta) - \rho \log y^*(\delta) - \delta \log z^*(\delta)\,, &\delta\le \delta_{\rm{max}} \\
			2\HH(\rho)\,, &\delta \ge \delta_{\rm{max}}
		\end{cases}
	\end{align*}
where $\delta_{\rm{max}}=\frac{2(1-\rho)}{2-\rho}$. %Details in Appendix~\ref{appen:Tdelta}.
Consequently, we have the following explicit formula for the asymptotic ball size.
\begin{proposition}
\label{cor:rate_sticky}
	For fixed $0 \le \rho \le 1$, set $\delta_{\rm{max}}=2(1-\rho)/(2-\rho)$.
	When $0\le\delta\le\delta_{\rm{max}}$, we have
	\begin{align*}
		\widetilde{T}(\Delta_{\rho}^{\!+},\delta) =  &-\rho + \delta \log\delta - \rho \log\!\big(\sqrt{\rho^2+\delta^2} - \delta\big) - \delta \log\!\big(\sqrt{\rho^2+\delta^2}-\rho\big) \\
		& +(-1+\rho+\delta/2) \log(2-2\rho-\delta) + (1-\delta/2)\log(2-\delta).
		%\widetilde{T}(n,\rho,2\delta) 
		%				&= 2\delta\log(2\delta) - \rho \log(\sqrt{\rho^2+4\delta^2} - 2\delta) \\
		%				&\hspace{4mm}- 2\delta \log(\sqrt{\rho^2+4\delta^2}-\rho) \\
		%				&\hspace{4mm}+(-1+\rho+\delta) \log(1-\rho-\delta) \\
		%				&\hspace{4mm}+ (1-\delta)\log(1-\delta)
	\end{align*}
	%\begin{align*}
	%	\widetilde{T}(n,\rho,2\delta) &= \delta \log \delta + 
	%	\frac{(2-\delta)}{2} \log (2- \delta)\\
	%	&\hspace{4mm} - \frac{(2-\delta - 2\rho)}{2} \log (2 - \delta - 2\rho)  \\
	%	&\hspace{4mm}-\rho(\log 2(\sqrt{\delta^2 + \rho^2} - \delta)) \\
	%	&\hspace{4mm}-\delta \log (\sqrt{\rho^2 + \delta^2} - \rho).
	%\end{align*}
Otherwise, when  $\delta\ge\delta_{\rm{max}}$, we have $\widetilde{T}(\Delta_{\rho}^{\!+},\delta)=2\HH(\rho)$.
\end{proposition}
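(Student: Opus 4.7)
The plan is to obtain the closed-form expression by direct substitution of the critical point from Lemma~\ref{lem:sol-SI} into the limiting expression
\[
\Phi(\delta) \;=\; -2\log x^*(\delta) - \rho\log y^*(\delta) - \delta\log z^*(\delta),
\]
and then handle the saturation at $\delta_{\max}$ separately. The first step is to note the algebraic identities $(x^*(\delta))^2 = \tfrac{2-\delta-2\rho}{2-\delta}$ and $1-(x^*(\delta))^2 = \tfrac{2\rho}{2-\delta}$, which follow immediately from the formula for $x^*$. Using these, one can expand $-2\log x^*(\delta) = \log(2-\delta) - \log(2-\delta-2\rho)$.

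Next, I would plug in $y^*(\delta) = 2(\sqrt{\rho^2+\delta^2}-\delta)/(2-\delta-2\rho)$ to get $-\rho\log y^*(\delta) = -\rho - \rho\log(\sqrt{\rho^2+\delta^2}-\delta) + \rho\log(2-\delta-2\rho)$, using that $\log 2 = 1$ in the convention of Table~\ref{Notationtable}. Similarly, since $z^*(\delta) = (\sqrt{\rho^2+\delta^2}-\rho)/(x^*\delta)$,
\[
-\delta\log z^*(\delta) = -\delta\log(\sqrt{\rho^2+\delta^2}-\rho) + \tfrac{\delta}{2}\log\tfrac{2-\delta-2\rho}{2-\delta} + \delta\log\delta.
\]
Summing the three contributions and collecting the coefficients of $\log(2-\delta)$ and $\log(2-\delta-2\rho)$, one obtains the coefficients $(1-\delta/2)$ and $(-1+\rho+\delta/2)$, respectively, matching the displayed formula. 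The routine part is therefore just bookkeeping of logarithms.

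For the threshold, I would apply Theorem~\ref{thm:maxlog}: the function $\Phi(\delta_1)$ over $\delta_1\in[0,\delta]$ has an interior critical point exactly where $z^*(\delta_1)=1$. Setting $\sqrt{\rho^2+\delta_1^2}-\rho = x^*(\delta_1)\delta_1$, squaring, and using $1-(x^*)^2 = 2\rho/(2-\delta_1)$ eliminates the square root and yields the linear equation $\delta_1(4-2\rho)=4(1-\rho)$, hence $\delta_1=\delta_{\max}=2(1-\rho)/(2-\rho)$. For $\delta \le \delta_{\max}$ the supremum is attained at $\delta_1=\delta$, giving the stated closed form, while for $\delta\ge\delta_{\max}$ it is attained at $\delta_1=\delta_{\max}$ and the value saturates at $2\bCap(\Delta^{\!+}_\rho)=2\HH(\rho)$ by Proposition~\ref{prop:run-cap} (this saturation is forced since the total ball size cannot exceed $|\Delta^{\!+}_{n,\rho n}|^2$).

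The main obstacle I anticipate is purely algebraic: ensuring the coefficients of each log term combine correctly across the three contributions. A useful sanity check is $\delta\to 0$: the formula should reduce to $\widetilde{T}(\Delta^{\!+}_\rho,0)=\bCap(\Delta^{\!+}_\rho)=\HH(\rho)$, which forces $(1-\delta/2)\log(2-\delta)-\rho+(-1+\rho+\delta/2)\log(2-2\rho-\delta)\big|_{\delta=0} = \log 2 - \rho + (\rho-1)\log(2-2\rho) = \HH(\rho)$ after canceling the indeterminate $0\cdot\log 0$ terms. A second check is the boundary $\delta=\delta_{\max}$: substituting $x^*=\delta_{\max}/(2-\delta_{\max})$ (derived during the threshold computation) should give exactly $2\HH(\rho)$, confirming continuity of the two pieces.
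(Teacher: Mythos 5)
Your proof is correct and follows the same approach as the paper: it substitutes the closed-form critical point from Lemma~\ref{lem:sol-SI} into the asymptotic formula \eqref{eq:asym_approx}, identifies the interior maximizer of $\delta_1\mapsto\Phi(\delta_1)$ via Theorem~\ref{thm:maxlog} (the condition $z^*(\delta_1)=1$), and splits into the two regimes around $\delta_{\max}$. The algebraic collection of log-coefficients and the derivation $\delta_1(4-2\rho)=4(1-\rho)$ both check out, and your $\delta\to0$ sanity check recovering $\HH(\rho)$ confirms the bookkeeping.
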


{%\color{blue}

\begin{theorem}
\label{prop:gv-si}
For fixed $0 \le \rho \le 1$ and $\delta\ge0$, we have $\alpha(\Delta_{\rho}^{\!+},\delta)\ge R_{\textsc{gv}}(\Delta_{\rho}^{\!+},\delta)$, where
$$R_{\textsc{gv}}(\Delta_{\rho}^{\!+},\delta) \triangleq 2\HH(\rho) - \widetilde{T}(\Delta_{\rho}^{\!+},\delta) .$$
\end{theorem}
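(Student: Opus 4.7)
The plan is to directly invoke the asymptotic Gilbert--Varshamov bound in the form established in Section~\ref{sec:prelim}, namely equation~\eqref{eq:gv}, $R_{\textsc{gv}}(\cS,\delta) = 2\bCap(\cS) - \widetilde{T}(\cS,\delta)$, applied to $\cS = \Delta_\rho^{\!+}$. Both ingredients on the right-hand side have already been evaluated earlier in this subsection, so the theorem is essentially a packaging step.

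First, I would invoke the finite-length Gu--Fuja inequality~\eqref{eq:GVtotalball} specialized to $\cS_n = \Delta_{n,\lfloor \rho n \rfloor}^{\!+}$ with distance $d = \lfloor \delta n \rfloor$:
\[
A\bigl(\Delta_{n,\lfloor \rho n \rfloor}^{\!+},\,\lfloor \delta n \rfloor\bigr) \;\ge\; \frac{\bigl|\Delta_{n,\lfloor \rho n \rfloor}^{\!+}\bigr|^{2}}{\bigl|T\bigl(\Delta_{n,\lfloor \rho n \rfloor}^{\!+},\,\lfloor \delta n \rfloor - 1\bigr)\bigr|}.
\]
Taking logarithms, dividing by $n$, and passing to $\limsup_{n\to\infty}$ yields $\alpha(\Delta_\rho^{\!+},\delta) \ge 2\bCap(\Delta_\rho^{\!+}) - \widetilde{T}(\Delta_\rho^{\!+},\delta)$. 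The replacement of $\lfloor \delta n \rfloor - 1$ by $\lfloor \delta n \rfloor$ inside $T$ is asymptotically irrelevant, as it perturbs the count by at most a factor bounded by the ratio of consecutive ball shells, whose contribution to the rate vanishes.

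Second, I would substitute the two computed pieces: the capacity formula $\bCap(\Delta_\rho^{\!+}) = \HH(\rho)$ from Proposition~\ref{prop:run-cap}, and the explicit expression for $\widetilde{T}(\Delta_\rho^{\!+},\delta)$ from Proposition~\ref{cor:rate_sticky}. Together these give exactly $R_{\textsc{gv}}(\Delta_\rho^{\!+},\delta) = 2\HH(\rho) - \widetilde{T}(\Delta_\rho^{\!+},\delta)$, which is the desired inequality.

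No significant obstacle arises here: the heavy machinery was already deployed in Lemma~\ref{lem:sol-SI}, where the critical-point system~\eqref{eq:PositiveSimplex-H} was solved, and in the subsequent passage where the ACSV estimate~\eqref{eq:asym_approx} was combined with the maximization over $0\le\delta_1\le\delta$ to produce the two-branch formula for $\widetilde{T}$. The only mild sanity check worth performing is that the formula behaves correctly at the transition $\delta=\delta_{\max} = 2(1-\rho)/(2-\rho)$; this is automatic because both branches of $\widetilde{T}(\Delta_\rho^{\!+},\delta)$ coincide at that point by the maximization that defines it, so the resulting lower bound is continuous and the statement holds uniformly for all $\delta \ge 0$.
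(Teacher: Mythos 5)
Your proposal is correct and follows exactly the same route as the paper: the theorem is a direct packaging of the asymptotic Gu--Fuja bound~\eqref{eq:gv} with $\bCap(\Delta_\rho^{\!+})=\HH(\rho)$ from Proposition~\ref{prop:run-cap} and the total-ball-size asymptotics $\widetilde{T}(\Delta_\rho^{\!+},\delta)$ from Proposition~\ref{cor:rate_sticky}. The paper presents the theorem without a separate proof precisely because this is the intended argument.
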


We next state the bound obtained by maximizing $R_{\textsc{gv}}(\Delta_{\rho}^{\!+},\delta)$ over $\rho$.
As already mentioned, the resulting function can be directly translated into the GV bound on optimal codes for the sticky insertion channel (having no constraints on the number of runs in input sequences).

\begin{theorem}[GV Bound for Positive Simplex]
\label{prop:gv-si-opt}
For fixed $\delta \ge 0$, we have $\alpha(\Delta^{\!+},\delta)\ge R_{\textsc{gv}}(\Delta^{\!+},\delta)$, where
$$R_{\textsc{gv}}(\Delta^{\!+},\delta) \triangleq 2\HH(\rho) - \widetilde{T}(\Delta^{\!+}_\rho,\delta)$$ and $\rho = \frac{3(2-\delta) - \sqrt{9\delta^2-4\delta+4}}{8} $.
\end{theorem}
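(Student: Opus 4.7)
The plan is to maximize $R_{\textsc{gv}}(\Delta_\rho^{\!+},\delta) = 2\HH(\rho) - \widetilde{T}(\Delta_\rho^{\!+},\delta)$ as a function of $\rho \in (0,1)$ and invoke the identity $\alpha(\Delta^{\!+},\delta) = \max_{0\le\rho\le1}\alpha(\Delta_\rho^{\!+},\delta)$ stated in the paragraph preceding the theorem. By Theorem~\ref{prop:gv-si}, for every fixed $\rho$ in its valid range the right-hand side is an achievable lower bound on $\alpha(\Delta_\rho^{\!+},\delta)$, so the maximum over $\rho$ lower-bounds $\alpha(\Delta^{\!+},\delta)$, and the task reduces to finding this maximizer explicitly.

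To locate the optimizer, I set $\frac{d}{d\rho}R_{\textsc{gv}}(\Delta_\rho^{\!+},\delta) = 0$. Directly, $\HH'(\rho) = \log\frac{1-\rho}{\rho}$. For the total-ball term, viewing $\rho$ as the free exponent attached to the variable $y$ in the generating function underlying Lemma~\ref{lem:sol-SI}, Theorem~\ref{thm:maxlog} yields $\frac{\partial}{\partial\rho}\widetilde{T}(\Delta_\rho^{\!+},\delta) = -\log y^*(\delta)$. Equating $2\HH'(\rho)$ with $\frac{\partial}{\partial\rho}\widetilde{T}$ therefore produces the stationarity condition $y^*(\delta) = \rho^2/(1-\rho)^2$, which is also precisely the instance of Corollary~\ref{cor:MRsupport} corresponding to the parameter $\rho$.

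Substituting the closed form $y^*(\delta) = 2(\sqrt{\rho^2+\delta^2}-\delta)/(2-\delta-2\rho)$ from Lemma~\ref{lem:sol-SI}, clearing denominators, isolating the radical and squaring, should produce a polynomial identity that I expect to factor as
\[
\bigl[(2-\rho)\delta - 2(1-\rho)\bigr]\bigl[(2-3\rho)\delta - 2(1-\rho)(1-2\rho)\bigr] = 0.
\]
The first factor is the curve $\delta = \delta_{\max}(\rho) = 2(1-\rho)/(2-\rho)$, i.e., the boundary of the regime in which the formula for $\widetilde{T}$ being differentiated is valid, and hence not an interior critical point of the maximization. The second factor rearranges to the quadratic $4\rho^2 - (6-3\delta)\rho + 2(1-\delta) = 0$, whose two roots are $\rho = \frac{3(2-\delta)\pm\sqrt{9\delta^2-4\delta+4}}{8}$. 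The minus branch is the correct maximizer: at $\delta=0$ it returns $\rho = 1/2$, the familiar maximizer of $\HH$, whereas the plus branch returns $\rho=1$ and falls outside the interior regime. The main obstacle I anticipate is the bookkeeping for the squaring step, namely factoring cleanly to isolate the spurious boundary factor and confirming that the surviving root stays in the region $\delta < \delta_{\max}(\rho)$, so that Proposition~\ref{cor:rate_sticky}, rather than its saturated form, supplies the expression being differentiated.
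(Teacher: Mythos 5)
Your proposal is correct and essentially coincides with the paper's argument. The paper obtains the stationarity condition $\rho^2/(1-\rho)^2 = y^*$ by a direct appeal to Corollary~\ref{cor:MRsupport} (with $\rho$ in the role of the parameter $\tau$ and $y_1^*=\rho/(1-\rho)$), whereas you re-derive that same condition from $\frac{d}{d\rho}\bigl[2\HH(\rho)-\widetilde{T}(\Delta^{\!+}_\rho,\delta)\bigr]=0$ via Theorem~\ref{thm:maxlog}---which is exactly how the corollary is proved---so the two routes coincide, and the remaining algebra solving for $\rho$ matches (your anticipated factorization after squaring actually carries an extra factor proportional to $\rho^2$, but since that contributes only the extraneous root $\rho=0$, it does not affect the result).
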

\begin{proof}
 Appendix~\ref{appen-optimization}.
\end{proof}

\begin{comment}
\begin{remark}
	In~\cite[Proposition 8]{Keshav2023}, we showed that codes in the positive simplex and $L_1$-metric can used to correct sticky insertions. We note that in order to correct $b = \floor{\beta n}$ sticky insertion errors, codes must have distance at least $\floor{2\delta n}$ and hence we have that $\beta = \delta/2$.  
	Specifically for a suitable choice of $\rho$, Corollary~\ref{cor:rate_sticky} and Proposition~\ref{prop:gv-si} states that for all $\beta<1/2$, there exists a family of $(n,\beta n)$-sticky-insertion codes with a positive rate. 
	Since a code that corrects $b$ sticky-insertions also corrects $b$ run-preserving deletions (see for example, \cite{kovavcevic2018}), we also have a family of $(n,\beta n)$-run-preserving-deletion codes with positive rates whenever $\beta<1/2$. 
\end{remark}
\end{comment}

\begin{comment}
	We emphasize the situation is different for general deletion-correcting codes. Recently, the authors in~\cite{Guruswami2022.deletionthreshold} showed that there exists $\beta^*<1/2$ such the rate of any $(n,\beta^* n)$-deletion-correcting code must be zero.
\end{comment}

\subsubsection{Evaluation of Marcus and Roth's Improvement of the GV Bound}
%Marcus and Roth \cite{marcus1992improved} improve the GV bound (\ref{eq:gv}) by considering certain subsets of the constrained space $\Delta_{n,r}$. Specially, they considered certain subsets $\Delta_{n,r}(p)$ where all vectors in the space depend on parameter $p$. 
Following the approach from \cite{marcus1992improved}, we introduce an additional parameter, which we choose to be the number of vector coordinates with value $1$, then determine the GV bound for this constrained space, and finally optimize the bound over all the values of the new parameter. In particular, we set $\Delta^{\!+}_{n,r}(p) = \{(u_1,u_2,\ldots, u_r) \in \bbZ^{r}: u_i \ge 1, \sum_{i=0}^r u_i=n, |\{i : u_i=1\}|=p\}$. Note that $|\Delta^{\!+}_{n,r}(p)| = \binom{r}{p}\binom{n-r-1}{r-p-1}$. Allowing $p$ to grow linearly with $n$, we denote $p=\tau n$.

\begin{proposition}
\label{prop:run-capMR-si} 
	For fixed $0\le\rho\le1$ and $0 \le \tau \le \rho$, we have $$\bCap(\Delta^{\!+}_\rho(\tau)) \triangleq \lim_{n\to\infty}\frac{\log \big|\Delta^{\!+}_{n,\floor{\rho n}}(\lfloor \tau n \rfloor)\big|}{n}=\rho \HH\Big(\frac{\tau}{\rho}\Big) + (1-\rho)\HH\Big(\frac{\rho-\tau}{1-\rho}\Big).$$
\end{proposition}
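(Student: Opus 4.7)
The plan is to take the cardinality formula $|\Delta^{\!+}_{n,r}(p)| = \binom{r}{p}\binom{n-r-1}{r-p-1}$ stated just before the proposition and apply Stirling's approximation to each binomial factor separately, mirroring the derivation of Proposition~\ref{prop:run-cap} for the unconstrained positive simplex.

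First I would briefly verify the cardinality formula by a direct counting argument: choose which $p$ of the $r$ coordinates are equal to $1$, contributing $\binom{r}{p}$; then distribute the remaining mass among the other $r-p$ coordinates subject to each being at least $2$. After the substitution $v_i = u_i - 2$, this reduces to counting nonnegative-integer solutions to $\sum v_i = n - p - 2(r-p) = n - 2r + p$ in $r-p$ variables, giving $\binom{n-r-1}{r-p-1}$ by stars-and-bars.

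Next I would substitute $r = \lfloor \rho n \rfloor$ and $p = \lfloor \tau n \rfloor$ and apply the uniform asymptotic $\log \binom{m}{l} = m\,\HH(l/m) + o(m)$ (valid as long as $l/m$ stays bounded away from $0$ and $1$) to each factor. The first binomial gives $\rho n\,\HH(\tau/\rho) + o(n)$ and the second gives $(1-\rho)n\,\HH((\rho-\tau)/(1-\rho)) + o(n)$; summing, dividing by $n$, and taking $n \to \infty$ yields the claimed limit.

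The only step requiring any care is handling the boundary cases $\tau \in \{0,\rho\}$ and $\rho \in \{0,1\}$, where one of the binomial coefficients degenerates. Invoking the convention $0 \log 0 = 0$ makes the corresponding entropy term vanish and the formula extends continuously. I do not expect any genuine obstacle: once the cardinality identity is in hand, the proposition is a routine two-step Stirling computation entirely parallel to the argument behind Propositions~\ref{prop:run-cap} and \ref{prop:run-capMR-L1}.
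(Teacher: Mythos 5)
Your proposal is correct and follows essentially the same route as the paper, which simply records the identity $|\Delta^{\!+}_{n,r}(p)| = \binom{r}{p}\binom{n-r-1}{r-p-1}$ in the preceding paragraph and leaves the Stirling/entropy computation implicit (just as it does for Propositions~\ref{prop:run-capMR-L1} and \ref{prop:run-cap}). Your brief stars-and-bars verification of the cardinality formula (fix the $p$ unit coordinates, shift the remaining $r-p$ coordinates down by $2$, count nonnegative solutions to $\sum v_i = n-2r+p$) and your note on the degenerate boundary cases are both accurate supplements to what the paper takes for granted.
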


%Here, we perform the same analysis for total ball size to obtain Marcus and Roth's improvement. We consider balls with center $\bu \in \Delta^{\!+}_{n,r}(p)$ and radius $s$, that is, $V(\bu,s,p) = \{\bv \in \Delta^{\!+}_{n,r}(p): D(\bu,\bv) \le s\}$.

To estimate the total ball size, we consider the number of pairs $(\bu,\bv)$ with $L_1$ distance exactly $s$, denoted by $N(n,n,r,s,p,p) = |\{(\bu,\bv) \in \Delta^{\!+}_{n,r}(p)\times \Delta^{\!+}_{n,r}(p): D(\bu,\bv) = s, |\{i : u_i = 1\}| = p, |\{i : v_i = 1\}| = p\}|$. The generating function of the quantity $N(n,n,r,s,p,p)$ is given in the following lemma.

\begin{lemma}\label{lem:general_ge_si}
	$F(x,y,z,w) = \frac{G(x,y,z,w)}{H(x,y,z,w)}$,
	where $G(x,y,z,w)$ is some multivariate polynomial and 
	{%\small 
		\begin{align*}
			H(x,y,z,w) &= (1-x^2)(1-xz) - yx^2(w^2(1-xz)(1-x^2) + 2wxz(1-x^2) + x^2(1+xz))\,.  
		\end{align*}
	}
\end{lemma}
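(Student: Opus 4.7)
The plan is to prove this lemma by mirroring the strategy used for Lemma~\ref{lem:general_ge_mr}, adapted to the positive simplex. The one substantive difference is that the special marked value is now $1$ (rather than $0$), and every coordinate contributes a factor of at least $x$ to the generating function since $u_i, v_i \ge 1$.

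First, I would establish a recursion for $N(n_1, n_2, r, s, p_1, p_2)$ analogous to Lemma~\ref{lem:general_l1_1}, by truncating the last coordinates $u_r, v_r$ and splitting into six cases: (i) $u_r = v_r = 1$, contributing to both $p$-counts; (ii) $u_r = 1$, $v_r = 1+j$ for $j \ge 1$, contributing to $p_1$ only; (iii) $u_r = 1+j$, $v_r = 1$, contributing to $p_2$ only; (iv) $u_r = v_r = i$ for $i \ge 2$, contributing to neither; (v) $u_r = i$, $v_r = i+j$ for $i \ge 2, j \ge 1$; (vi) symmetric to (v) with roles swapped. Since the constraint $u_r, v_r \ge 1$ forces the last index to consume at least one unit of length from each of $n_1$ and $n_2$, the recursion is structurally identical to Lemma~\ref{lem:general_l1_1} except that each index starts from $1$ (or $2$ in the unmarked cases) rather than $0$.

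Next, I would invoke Proposition~\ref{prop:barpartial} to collapse the six variables $(x_1, x_2, y, z, w_1, w_2)$ to the four variables $(x, y, z, w)$ by exploiting the symmetry $n_1 = n_2 = n$ and $p_1 = p_2 = p$, and write
\begin{equation*}
F(x, y, z, w) = \sum_{n,r,s,p \ge 0} N(n,n,r,s,p,p)\, x^{2n} y^r z^s w^{2p}.
\end{equation*}
Translating each of the six cases into generating-function contributions multiplied by $F$ (corresponding to the truncated vectors of dimension $r-1$), the six cases contribute, respectively, $x^2 y w^2$, $yw x^2 \cdot \frac{xz}{1-xz}$, the symmetric version giving another $yw x^2 \cdot \frac{xz}{1-xz}$, $y \cdot \frac{x^4}{1-x^2}$, $y \cdot \frac{x^4}{1-x^2} \cdot \frac{xz}{1-xz}$, and again the symmetric $y \cdot \frac{x^4}{1-x^2} \cdot \frac{xz}{1-xz}$. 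The base case $r = 0, n = 0, s = 0, p = 0$ contributes the additive constant $1$.

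Solving the resulting linear equation for $F$ and clearing denominators by multiplying through by $(1 - x^2)(1 - xz)$ yields
\begin{equation*}
F(x,y,z,w) = \frac{(1-x^2)(1-xz)}{(1-x^2)(1-xz) - y\bigl[w^2 x^2 (1-x^2)(1-xz) + 2 w x^3 z (1-x^2) + x^4(1 + xz)\bigr]},
\end{equation*}
where I have used $x^4(1-xz) + 2 x^5 z = x^4(1 + xz)$ in the bracketed expression. Factoring $x^2$ out of the entire bracket gives exactly the claimed denominator $H(x,y,z,w)$, with numerator $G(x,y,z,w) = (1-x^2)(1-xz)$, which is clearly a polynomial. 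The main obstacle is purely bookkeeping: tracking the correct powers of $x$ in each case (noting that every nonzero coordinate contributes at least $x$, and values $\ge 2$ contribute geometric sums starting from $x^2$, giving the $\frac{x^4}{1-x^2}$ terms) and verifying that the factor $x^2$ cleanly factors out of the $y$-term. Once this is checked, comparison with the stated $H(x,y,z,w)$ is immediate.
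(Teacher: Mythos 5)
Your derivation is correct and follows the same approach the paper takes for the analogous standard-simplex case (Lemma~\ref{lem:general_ge_mr}), which is also what the paper clearly intends here (it omits the proof of Lemma~\ref{lem:general_ge_si}). You correctly set up the six-case recursion with the special coordinate value shifted from $0$ to $1$, observe that each retained coordinate now contributes an extra factor of $x$ to both sides of a pair (so the ``equal'' case contributes $x^2 y w^2$ rather than $y w^2$, the unequal-nonmarked cases contribute $\tfrac{x^4}{1-x^2}$ rather than $\tfrac{x^2}{1-x^2}$, etc.), collapse the variables via Proposition~\ref{prop:barpartial}, solve the linear equation for $F$, clear denominators with $(1-x^2)(1-xz)$, and use the simplification $x^4(1-xz)+2x^5z = x^4(1+xz)$ before factoring $x^2$ out of the $y$-term; this reproduces the stated $H$ exactly, with the polynomial numerator $G = (1-x^2)(1-xz)$.

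One cosmetic remark: the base-case contribution is indeed just the additive $1$ for the positive simplex, since $\Delta^{\!+}_{n,0}$ is nonempty only when $n=0$. This matches the paper's own (un-MR) positive-simplex treatment, and differs from the paper's standard-simplex derivation which carries extra $\sum_{i\ge 1} x^i$ terms into the numerator; the difference affects only $G$, never $H$, so the lemma's conclusion is unchanged in either convention.
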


As before, the quantity of interest is $N(n,n,\rho n, \delta n, \tau n, \tau n)$. From Theorem~\ref{thm:acsv} and Proposition~\ref{prop:barpartial}, we need to solve the following system of equations.

%From Theorem~\ref{thm:acsv} and Proposition~\ref{prop:barpartial}, we need to solve the following system of equations. %like the system of equation (\ref{eq:partial}).
\begin{equation}\label{eq:SIMR-H}
	%\begin{split}
	H = 0 \text{\quad and\quad}
	\frac{xH_{x}}{2} = \frac{yH_{y}}{\rho} = \frac{zH_{z}}{\delta}  = \frac{wH_{w}}{2\tau}
	%\end{split}
\end{equation}

Let $\lambda_1 = xz$ and $\lambda_2 = x^2$.

\begin{lemma}\label{lem:solveH_SI}
	For fixed $\rho$ and $\tau$, the solution of the \eqref{eq:SIMR-H} with respect to $\delta$ is:
	\begin{align*}
			x^*(\delta)  &=  \sqrt{\lambda_2^*},\\
			z^*(\delta) &= \frac{\lambda_1^*}{x^*},\\
			w^*(\delta) &= \frac{\lambda_2^*(\delta(1-(\lambda_1^*)^2)-2\lambda_1^*(\rho-\tau))}{\lambda_1^*(1-\lambda_2^{*})(2(\rho - \tau) - \delta(1-\lambda_1^*))}                       , \\
			y^*(\delta) &= \frac{(1-\lambda_1^*)(1-\lambda_2^*)}{\lambda_{2}^{*}((w^*)^2(1-\lambda_1^*)(1-\lambda_2^*) + 2w^*\lambda_1^*(1-\lambda_2^*) + \lambda_2^*(1+\lambda_1^*))}  
		\end{align*}
		where $\lambda_1^*$ is the root of the equation 
		\begin{align*}
	 &(1-\lambda_1)(\delta(1-\lambda_1^2)-2\lambda_1(\rho-\tau))^2(2(1-\rho)-2(\rho-\tau)(1+\lambda_1)-\delta\lambda_1^2) \\& \hspace{2mm}+ \lambda_1^2(1+\lambda_1)(2(\rho-\tau)-\delta(1-\lambda_1))^2(\delta(1-\lambda_1^2)-2\lambda_1(\rho-\tau)-2\tau) = 0.
	 \end{align*}
	 and $\lambda_2^*$ is given by
	 \begin{align*}
	 \lambda_2^*(\delta) = 1 - (1+\lambda_1^*)\frac{2(\rho - \tau) - \delta(1-\lambda_1^*)}{(2(1-\rho)-\delta)}.
		\end{align*}
\end{lemma}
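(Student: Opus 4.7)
The plan is to follow the same strategy used in the proof of Lemma~\ref{lem:solveH_L1}, exploiting the fact that the denominator $H(x,y,z,w)$ depends on $x$ and $z$ only through the combinations $xz$ and $x^2$. I would introduce the new variables $\lambda_1 = xz$, $\lambda_2 = x^2$, so that the logarithmic derivatives appearing in \eqref{eq:SIMR-H} simplify via the chain rule to
\[
xH_x = 2\lambda_2 H_{\lambda_2} + \lambda_1 H_{\lambda_1}, \qquad zH_z = \lambda_1 H_{\lambda_1},
\]
and in particular $xH_x - 2zH_z = 2\lambda_2 H_{\lambda_2}$. This reduces the system in four unknowns $(x,y,z,w)$ to an equivalent system in $(\lambda_1,\lambda_2,y,w)$ with the same number of independent relations, and once $\lambda_1^*,\lambda_2^*$ are found the formulas $x^* = \sqrt{\lambda_2^*}$ and $z^* = \lambda_1^*/x^*$ follow immediately from the definitions.

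First I would rewrite $H = 0$ as a linear equation in $y$,
\[
(1-\lambda_1)(1-\lambda_2) = y\lambda_2\bigl(w^2(1-\lambda_1)(1-\lambda_2) + 2w\lambda_1(1-\lambda_2) + \lambda_2(1+\lambda_1)\bigr),
\]
and solve for $y$, recovering the claimed expression for $y^*$. Next, the ratio $\frac{xH_x}{2} = \frac{zH_z}{\delta}$ rewrites as $2\delta\lambda_2 H_{\lambda_2} = (2-\delta)\lambda_1 H_{\lambda_1}$; after substituting the just-obtained form of $y$ and cancelling common factors, I expect this to yield a relation of the form $\lambda_2^* = 1 - (1+\lambda_1^*)\frac{2(\rho-\tau) - \delta(1-\lambda_1^*)}{2(1-\rho)-\delta}$ stated in the lemma (the denominator $2(1-\rho)-\delta$ instead of $2-\delta$ encodes the shift from $u_i\ge 0$ in Lemma~\ref{lem:solveH_L1} to $u_i\ge 1$ here).

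With $\lambda_2^*$ expressed in terms of $\lambda_1^*$, I would then use the ratio $\frac{yH_y}{\rho} = \frac{wH_w}{2\tau}$, which is linear in $w$ after the substitution $y = y(\lambda_1,\lambda_2,w)$, to solve for $w$ and obtain the formula for $w^*$. Finally, the one remaining independent ratio equation among \eqref{eq:SIMR-H}, for instance $\frac{xH_x}{2} = \frac{yH_y}{\rho}$, becomes a single polynomial equation in $\lambda_1^*$ after all other unknowns are eliminated. Clearing denominators and regrouping should reproduce the displayed polynomial, which is naturally organized as the difference of two products of low-degree factors in $\lambda_1$ whose coefficients involve $(1-\lambda_1)$, $(1+\lambda_1)$, $\delta(1-\lambda_1^2)-2\lambda_1(\rho-\tau)$, and $2(\rho-\tau)-\delta(1-\lambda_1)$.

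The main obstacle I anticipate is the algebraic bookkeeping in this last elimination step: several intermediate expressions involve products of three quadratic factors with coefficients polynomial in $\delta, \rho, \tau$, so the simplification must be carried out carefully, and in practice it is worthwhile to verify the final form with a computer algebra system. No new conceptual difficulty beyond what appeared in Lemma~\ref{lem:solveH_L1} should arise, since the structure of the Hessian system and the substitution $\lambda_1=xz, \lambda_2=x^2$ are identical; the additional factor $\lambda_2$ multiplying the $y$-term in $H$ here (owing to the $u_i\ge 1$ constraint) is what changes the $(2-\delta)$ and $(2+2\rho-\delta)$ terms into $(2(1-\rho)-\delta)$ throughout.
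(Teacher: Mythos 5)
Your high-level strategy (introduce $\lambda_1 = xz$, $\lambda_2 = x^2$, use $H=0$ to eliminate $y$, then reduce to a polynomial in $\lambda_1$) matches the paper's, which simply references the proof of Lemma~\ref{lem:solveH_L1}. However, the specific elimination order you describe contains two steps that would fail.

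First, you claim that the single equation $\frac{xH_x}{2} = \frac{zH_z}{\delta}$, i.e.\ $2\delta\lambda_2 H_{\lambda_2} = (2-\delta)\lambda_1 H_{\lambda_1}$, yields the relation $\lambda_2^* = 1 - (1+\lambda_1^*)\frac{2(\rho-\tau)-\delta(1-\lambda_1^*)}{2(1-\rho)-\delta}$ after substituting $y$. It does not: writing $H = (1-\lambda_1)(1-\lambda_2) - y\lambda_2 K$ with
$K = w^2(1-\lambda_1)(1-\lambda_2) + 2w\lambda_1(1-\lambda_2) + \lambda_2(1+\lambda_1)$,
one finds (after substituting $y\lambda_2 K = (1-\lambda_1)(1-\lambda_2)$ and clearing denominators) that the $x$--$z$ ratio equation has a nonvanishing $w^2$ coefficient equal to $-2\delta(1-\lambda_1)^2(1-\lambda_2)^2$. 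So it depends genuinely on $w$ and cannot give $\lambda_2$ as a function of $\lambda_1$ alone. Second, you claim that $\frac{yH_y}{\rho} = \frac{wH_w}{2\tau}$ becomes linear in $w$ after substituting $y$. It does not: both sides carry the common factor $-y\lambda_2$, and dividing it out gives $2\tau K = \rho\, w K_w$, which is quadratic in $w$ with a leading coefficient $2(\tau-\rho)(1-\lambda_1)(1-\lambda_2)$ that is nonzero for $\tau \ne \rho$.

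The fix (which is what Appendix~\ref{appen_solve_L1_MR} actually does for Lemma~\ref{lem:solveH_L1}) is to take two of the ratio equations \emph{jointly}, e.g.\ $\frac{zH_z}{\delta} = \frac{yH_y}{\rho}$ and $\frac{yH_y}{\rho} = \frac{wH_w}{2\tau}$ — both quadratic in $w$ — and eliminate the $w^2$ term by a suitable linear combination. This produces a \emph{linear} equation in $w$ whose solution is exactly the stated $w^*$, and substituting that $w^*$ back into either equation yields the linear $\lambda_1$--$\lambda_2$ relation. Only then does the remaining ratio equation collapse to the displayed polynomial in $\lambda_1$. Your sequential, decoupled elimination is a genuine misstep; no single ratio equation isolates $\lambda_2$ or gives a linear equation in $w$.
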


\begin{proof}
 The proof is similar to the proof of Lemma~\ref{lem:solveH_L1} and hence we skip the details. 	
\end{proof}

Applying \eqref{eq:asym_approx}, we have that
%The asymptotic estimation of total ball size is given below as a direct outcome of equation \ref{eq:asym_approx}.
%The lemmas \ref{lem:recur_stick}, \ref{lem:generateN} and \ref{eq:stickyH} are true for different $n_1, n_2$. 
%This work aims to words of the same length $n$, so from (\ref{eq:asym_approx}) we obtain that 
\begin{equation}\label{eq:SI-asym_ball}
	\lim_{n \to \infty} \frac{\log N(n,n,\rho n,\delta n,\tau n, \tau n)}{n} = -2\log x^*(\delta) - \rho \log y^*(\delta) - \delta \log z^*(\delta) - 2\tau \log w^*(\delta) \,.  
\end{equation}
The total ball size is $|T(\Delta^{\!+}_\rho(\tau), \delta n)| = \sum_{s=0}^{\delta n} N(n,n,\rho n, s,\tau n, \tau n)$, and hence
\begin{align}\label{eq:Ball-PositiveSimplex}
	\widetilde{T}(\Delta^{\!+}_\rho(\tau),\delta)
	& = \max_{0\le \delta_1 \le \delta} -2\log x^*(\delta_1) - \rho \log y^*(\delta_1) - \delta_1 \log z^*(\delta_1) - 2\tau \log w^*(\delta_1) \notag \\
	& =
		-2\log x^*(\delta) - \rho \log y^*(\delta) - \delta \log z^*(\delta) -2\tau \log w^*(\delta) \,.
\end{align}

Finally, we optimize the GV bound over $\tau$ and parameterize $\delta$ as a function of $\lambda_1^*$ to obtain the GV-MR bound given in the following theorem.

\begin{theorem}
\label{prop:gvmr-SI}
	For fixed $0 \le \lambda_1^* \le 1$, we have $\alpha(\Delta_{\rho}^{\!+},\delta(\lambda_1^*))\ge R_{\textsc{mr}}(\Delta^{\!+}_\rho,\delta(\lambda_1^*))$, where
	$$R_{\textsc{mr}}(\Delta^{\!+}_\rho,\delta(\lambda_1^*)) \triangleq 2\bCap(\Delta^{\!+}_\rho(\tau_{\rm opt})) - \widetilde{T}(\Delta^{\!+}_\rho(\tau_{\rm opt}), \delta(\lambda_1^*))$$ and
	\begin{align*}
		\tau_{\rm opt} &= \frac{\rho^2}{1-\lambda_1^*(1-\rho)}, \\
		\delta(\lambda_1^*) &= \frac{2\lambda_1^*\rho(1-\rho)}{\rho\lambda_1^* + (1-\rho)(1-(\lambda_1^*)^2)} \,.
	\end{align*}
\end{theorem}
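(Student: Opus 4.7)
The cleanest route is via the $L_1$-isometric affine bijection $\bu \mapsto \bu - (1,\ldots,1)$ between $\Delta^{\!+}_{n,r}$ and $\Delta_{n-r,r}$, which maps coordinates equal to $1$ to coordinates equal to $0$. Under this bijection $\Delta^{\!+}_{n,\rho n}(\tau n)$ corresponds isometrically to $\Delta_{(1-\rho)n,\rho n}(\tau n)$, so $A(\Delta^{\!+}_{n,\rho n},\lfloor\delta n\rfloor) = A(\Delta_{(1-\rho)n,\rho n},\lfloor\delta n\rfloor)$. Rewriting with block length $n' = (1-\rho)n$ gives the reparameterization $\rho' = \rho/(1-\rho)$, $\delta' = \delta/(1-\rho)$, $\tau' = \tau/(1-\rho)$ with rates scaled by $(1-\rho)$, yielding $R_{\textsc{mr}}(\Delta^{\!+}_\rho,\delta) = (1-\rho)\, R_{\textsc{mr}}(\Delta_{\rho'},\delta')$. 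I would then invoke Theorem~\ref{prop:gvmr-L1} in the primed variables to obtain $\tau'_{\rm opt} = (\rho')^2/(1+\rho'-\lambda_1^*)$ and $\delta'(\lambda_1^*) = 2\lambda_1^*\rho'/(1-(\lambda_1^*)^2+\lambda_1^*\rho')$, and undo the scaling via $\tau_{\rm opt} = (1-\rho)\tau'_{\rm opt}$ and $\delta = (1-\rho)\delta'$; substitution reduces the claim to a short algebraic simplification that exactly reproduces the stated formulas.

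An alternative route mirrors the derivation of Theorem~\ref{prop:gvmr-L1} directly within the ACSV framework. Apply Corollary~\ref{cor:MRsupport}, noting the mild discrepancy that the total-ball generating function of Lemma~\ref{lem:general_ge_si} carries $w^{2p}$ whereas the natural generating function $\sum |\Delta^{\!+}_{n,r}(p)|\, x^n y^r w^p$ for the constrained space carries $w^p$. Theorem~\ref{thm:maxlog} then yields $\partial_\tau \bCap(\Delta^{\!+}_\rho(\tau)) = -\log w^*_\cS$ and $\partial_\tau \widetilde{T}(\Delta^{\!+}_\rho(\tau),\delta) = -2\log w^*_T$, so the first-order condition $2\partial_\tau \bCap = \partial_\tau \widetilde{T}$ collapses to $w^*_\cS = w^*_T$. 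Differentiating Proposition~\ref{prop:run-capMR-si} yields
\[
w^*_\cS \;=\; \frac{\tau(1+\tau-2\rho)}{(\rho-\tau)^2},
\]
and equating this with the expression for $w^*(\delta)$ in Lemma~\ref{lem:solveH_SI}, together with the formula there for $\lambda_2^*$, produces an algebraic identity whose unique admissible solution is $\tau_{\rm opt} = \rho^2/(1-\lambda_1^*(1-\rho))$. Substituting $\tau_{\rm opt}$ into $\lambda_2^*(\delta)$ and eliminating $\lambda_2^*$ against the polynomial equation for $\lambda_1^*$ in Lemma~\ref{lem:solveH_SI} then furnishes the explicit parameterization $\delta(\lambda_1^*) = 2\lambda_1^*\rho(1-\rho)/[\rho\lambda_1^* + (1-\rho)(1-(\lambda_1^*)^2)]$.

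The main obstacle in either route is purely algebraic: verifying that the rational expressions for $w^*_T$ and $\lambda_2^*$ genuinely collapse to the stated closed-form $\tau_{\rm opt}$, and that back-substitution into the high-degree polynomial satisfied by $\lambda_1^*$ is consistent with the parameterization of $\delta$. I would organize the computation by introducing the auxiliary quantities $\lambda_1^* = x^*z^*$ and $\lambda_2^* = (x^*)^2$ (as in Lemma~\ref{lem:solveH_SI}), expressing both $w^*_T$ and the substituted $w^*_\cS$ in terms of them, and checking the identity at the end. Consistency between the bijection route and the direct ACSV route serves as a useful internal check on the algebra.
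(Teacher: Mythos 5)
Your second route is exactly the paper's argument (Appendix~\ref{appen_solve_sticky}): the paper forms the generating function $F(x_1,y_1,w_1)=\sum_{n\ge 0}|\Delta^{\!+}_{n,r}(p)|x_1^n y_1^r w_1^p$, reads off $w_1^*=\tau(1+\tau-2\rho)/(\rho-\tau)^2$ from Theorem~\ref{thm:acsv}, equates $w_1^*=w^*$ via Corollary~\ref{cor:MRsupport}, substitutes the expression for $\lambda_2^*$ from Lemma~\ref{lem:solveH_SI}, and solves the resulting system together with the quintic for $\lambda_1^*$ to land on $\tau_{\rm opt}$ and $\delta(\lambda_1^*)$. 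Your remark about the factor-of-two discrepancy between the $w^{2p}$ in the total-ball generating function and the $w^p$ in the constrained-space generating function is the right explanation of why the stationarity condition collapses to $w_1^*=w^*$ rather than $(w_1^*)^2=w^*$ as in the literal statement of Corollary~\ref{cor:MRsupport}: the paper's $w^{2p}$ already encodes the pair $(p_1,p_2)=(p,p)$ via Proposition~\ref{prop:barpartial}, so $\partial_\tau\widetilde{T}=-2\log w^*$ while $\partial_\tau\bCap=-\log w_1^*$, and the net condition is $2\log w_1^*=2\log w^*$.

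Your first route is genuinely different from what the paper does in its proof, though entirely in the spirit of the paper's own remark that $\Delta^{\!+}_{n,r}=(1,\ldots,1)+\Delta_{n-r,r}$. The translation $\bu\mapsto\bu-(1,\ldots,1)$ is an $L_1$-isometry sending the coordinates equal to $1$ to coordinates equal to $0$, hence $\Delta^{\!+}_{n,\rho n}(\tau n)\cong\Delta_{(1-\rho)n,\rho n}(\tau n)$, and the block-length rescaling $n'=(1-\rho)n$, $\rho'=\rho/(1-\rho)$, $\tau'=\tau/(1-\rho)$, $\delta'=\delta/(1-\rho)$ gives $R_{\textsc{mr}}(\Delta^{\!+}_\rho,\delta)=(1-\rho)R_{\textsc{mr}}(\Delta_{\rho'},\delta')$. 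Substituting the formulas of Theorem~\ref{prop:gvmr-L1} for $\tau'_{\rm opt}$ and $\delta'(\lambda_1^*)$ and clearing denominators does indeed reproduce the stated $\tau_{\rm opt}=\rho^2/(1-\lambda_1^*(1-\rho))$ and $\delta(\lambda_1^*)=2\lambda_1^*\rho(1-\rho)/[\rho\lambda_1^*+(1-\rho)(1-(\lambda_1^*)^2)]$; I checked the algebra and it is consistent. This route is preferable pedagogically because it reuses Theorem~\ref{prop:gvmr-L1} wholesale and bypasses the repeated ACSV bookkeeping for $\Delta^{\!+}$; it also provides an independent sanity check on the paper's Appendix~\ref{appen_solve_sticky} computations. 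The only care needed is to confirm that the free parameter $\lambda_1^*$ in the target curve and in the $\Delta_{\rho'}$ curve are the same object after the change of variables; this follows because the substitution $y\mapsto x^2 y$ that transforms $H_\Delta$ into $H_{\Delta^+}$ does not touch $x$ or $z$, so $\lambda_1=xz$ is invariant under the reduction.
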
 
\begin{proof}
%The proof of this proposition is similar to the proof of the Proposition~\ref{prop:gvmr-L1}, and requires the generating function $F(x_{1},y_{1},w_{1}) = \sum_{n \ge 0} | \Delta^{\!+}_{n,r}(p) | x_{1}^{n}y_{1}^{r}w_{1}^{p} = \frac{1-x_{1}}{(1-x_{1})-x_{1}y_{1}(w_{1}(1-x_{1})+x_{1})}$. 
%Now, we apply Theorem~\ref{thm:acsv} to obtain the unique solution $w_{1}^* = \frac{\tau(1+\tau-2\rho)}{(\rho-\tau)^2}$.
%Then we use the Corollary~\ref{cor:MRsupport} to find $\tau = \tau_{\rm opt}$ for which $w^* = w_{1}^*$. 
Appendix~\ref{appen_solve_sticky}.   
\end{proof}

As with the GV bound, we state below the bound obtained by maximizing $R_{\textsc{mr}}(\Delta^{\!+}_\rho,\delta)$ over $\rho$.

\begin{theorem}[GV Bound for Positive Simplex]
\label{prop:gv-is-mr}
	For fixed $0 \le \lambda_1^* \le 1$, we have $\alpha(\Delta^{\!+},\delta(\lambda_1^*))\ge R_{\textsc{mr}}(\Delta^{\!+},\delta(\lambda_1^*))$, where
	$$R_{\textsc{mr}}(\Delta^{\!+}, \delta(\lambda_1^*)) \triangleq 2\bCap(\Delta^{\!+}_{\rho_{\rm opt}}(\tau_{\rm opt})) - \widetilde{T}(\Delta^{\!+}_{\rho_{\rm opt}}(\tau_{\rm opt}), \delta(\lambda_1^*))$$ and
	\begin{align*}
		\rho_{\rm opt} &= \frac{2\sqrt{1-\lambda_{1}^{*}}}{3\sqrt{1-\lambda_{1}^{*}} + \sqrt{1+3\lambda_{1}^{*}}} ,\\
		\tau_{\rm opt} &= \frac{\rho_{\rm opt}^2}{1-\lambda_1^*(1-\rho_{\rm opt})} ,\\
		\delta(\lambda_1^*) &= \frac{2\lambda_1^*\rho_{\rm opt}(1-\rho_{\rm opt})}{\rho_{\rm opt}\lambda_1^* + (1-\rho_{\rm opt})(1-(\lambda_1^*)^2)} \,.
	\end{align*}
\end{theorem}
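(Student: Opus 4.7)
The plan is to further optimise the GV--MR bound of Theorem~\ref{prop:gvmr-SI} over the run-density parameter $\rho$, exploiting the identity $\alpha(\Delta^{\!+},\delta) = \max_{0\le\rho\le 1}\alpha(\Delta^{\!+}_\rho,\delta)$ already noted at the beginning of Section~\ref{sec:sticky}. So for each $\lambda_1^*\in[0,1]$ I would view the bound of Theorem~\ref{prop:gvmr-SI} as a function of $\rho$ (with $\tau$ fixed at $\tau_{\rm opt}(\rho,\lambda_1^*)$ and $\delta$ at $\delta(\lambda_1^*,\rho)$) and look for the critical point $\rho_{\rm opt}(\lambda_1^*)$ that maximises it.

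Since $\tau$ has already been optimised inside Theorem~\ref{prop:gvmr-SI}, the envelope theorem says the derivative with respect to $\rho$ (at $\delta$ held fixed) equals the partial derivative at $\tau$ also held fixed. I would then apply Theorem~\ref{thm:maxlog} separately to $\bCap(\Delta^{\!+}_\rho(\tau))$ and to $\widetilde{T}(\Delta^{\!+}_\rho(\tau),\delta)$: letting $y^*_{\bCap}$ and $y^*_{\widetilde{T}}$ denote the ACSV variables corresponding to the exponent $r=\rho n$ in the respective generating functions, one has $\partial\bCap/\partial\rho = -\log y^*_{\bCap}$ and $\partial\widetilde{T}/\partial\rho = -\log y^*_{\widetilde{T}}$. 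Setting the derivative of $2\bCap-\widetilde{T}$ to zero yields the optimality condition $(y^*_{\bCap})^2 = y^*_{\widetilde{T}}$, which is the natural $\rho$-analogue of Corollary~\ref{cor:MRsupport}.

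To make this condition concrete, $y^*_{\bCap}$ can be extracted either from the explicit capacity formula in Proposition~\ref{prop:run-capMR-si} or by writing the generating function for $|\Delta^{\!+}_{n,r}(p)|=\binom{r}{p}\binom{n-r-1}{r-p-1}$ and applying Theorem~\ref{thm:acsv}; meanwhile $y^*_{\widetilde{T}}$ is given directly by Lemma~\ref{lem:solveH_SI}. Substituting the closed forms for $\tau_{\rm opt}$, $\delta$, and $\lambda_2^*$ from Theorem~\ref{prop:gvmr-SI} and Lemma~\ref{lem:solveH_SI} turns $(y^*_{\bCap})^2 = y^*_{\widetilde{T}}$ into a single algebraic equation in $\rho$ and $\lambda_1^*$. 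Solving for $\rho$ produces $\rho_{\rm opt}(\lambda_1^*)$, and back-substitution yields the corresponding $\tau_{\rm opt}$ and $\delta(\lambda_1^*)$ as stated.

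The main obstacle will be the algebra. Both sides of $(y^*_{\bCap})^2 = y^*_{\widetilde{T}}$ are non-trivial rational functions of $\rho$ and $\lambda_1^*$, and reducing the condition to a form whose positive root has the stated closed expression $\rho_{\rm opt}=2\sqrt{1-\lambda_1^*}/(3\sqrt{1-\lambda_1^*}+\sqrt{1+3\lambda_1^*})$ will require careful elimination; the appearance of $\sqrt{1-\lambda_1^*}$ and $\sqrt{1+3\lambda_1^*}$ strongly suggests that the reduced equation is quadratic in a suitable radical. I would also verify that this is the correct branch (and a global maximiser, not a saddle) by checking the limiting behaviour: at $\lambda_1^*=0$ one should recover $\rho_{\rm opt}=1/2$ and $\delta=0$, giving $R=2\HH(1/2)$, i.e., the full capacity of $\Delta^{\!+}$; at $\lambda_1^*=1$ one should get $\rho_{\rm opt}=0$ and $\delta=2$, the trivial zero-rate endpoint.
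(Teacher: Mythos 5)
Your proposal matches the paper's approach: the appendix proof of Theorem~\ref{prop:gv-is-mr} does exactly what you describe---it writes the generating function $F(x_1,y_1,w_1)=\sum|\Delta^{\!+}_{n,r}(p)|x_1^ny_1^rw_1^p$ for the capacity, extracts the critical value $y_1^*=\frac{(1-\rho)(\rho-\tau)^2}{\rho(1+\tau-2\rho)^2}$ via Theorem~\ref{thm:acsv}, equates it to the ball-GF root $y^*$ from Lemma~\ref{lem:solveH_SI} via Corollary~\ref{cor:MRsupport}, and then substitutes the closed forms for $\lambda_2^*$, $\delta$, and $\tau_{\rm opt}$ already established in Theorem~\ref{prop:gvmr-SI} to isolate $\rho_{\rm opt}$. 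Your explicit appeal to the envelope theorem to justify treating $\tau=\tau_{\rm opt}(\rho)$ and $\delta$ as held fixed when differentiating in $\rho$ is the right justification, which the paper leaves implicit. One small point in your favour: your first-order condition $(y^*_{\bCap})^2=y^*_{\widetilde T}$ is the form dictated by Corollary~\ref{cor:MRsupport} (because of the factor 2 in $2\bCap-\widetilde T$, while both $y$-exponents in the two generating functions equal $r=\rho n$), and it matches the parallel proof of Theorem~\ref{prop:gv-si-opt}, which explicitly uses $(y_1^*)^2=y^*$; the paper's displayed equation \eqref{eq:condMRIS} in the proof of Theorem~\ref{prop:gv-is-mr} appears to omit the square, which looks like a typographical slip rather than a different argument. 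Your endpoint sanity checks ($\lambda_1^*=0$ giving $\rho_{\rm opt}=1/2$, $\delta=0$; $\lambda_1^*=1$ giving $\delta=2$) are consistent with the statement and with the $\lambda_1^*\to1$ analysis in the proof of Theorem~\ref{prop:gvmr-SI}.
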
 
\begin{proof}
Appendix~\ref{appen_solve_sticky}.
\end{proof}

\subsection{The Inverted Simplex}
\label{sec:nabla}

In this subsection, we consider the code space
\begin{equation}
\nabla_{n,r} = \big\{\bu= (u_1,u_2,\ldots, u_r) \in \bbZ^r : 1 \le u_1 < u_2 < \cdots < u_r \le n \big\}
\end{equation}
consisting of vectors whose components are positive, strictly increasing, and not exceeding $n$.
The motivation for studying this space comes from the bit-shift channel, as well as some types of timing channels.
Namely, codes correcting $t$ shifts of $1$'s are equivalently described as codes in $\nabla_{n,r}$ having minimum $L_1$ distance $>\!2t$.
In this correspondence, the parameter $n$ represents the length of the input binary sequence, the parameter $r$ its Hamming weight, and $u_i$ the position of the $i$'th $1$ in that sequence.
As the bit-shift channel does not alter the Hamming weight of the input sequence, one may without loss of generality consider codes for each weight $r$ separately.
After deriving the lower bounds for $\nabla_{n,r}$, one can easily obtain the corresponding bounds for the case with no weight constraints by performing maximization of the bounds over all possible values of $r$.

The set $\nabla_{n,r}$ is also an $r$-dimensional simplex of cardinality $|\nabla_{n,r}| = \binom{n}{r}$. As before, we are interested in the asymptotic regime $n\to\infty$, $r=\rho n$, and we denote by $\nabla_\rho$ the family of simplices satisfying this relation.

\begin{proposition}
\label{prop:run-cap-IS}
	For fixed $0\le\rho\le1$, $$\bCap(\nabla_\rho) \triangleq \lim_{n\to\infty}\frac{\log |\nabla_{n,\floor{\rho n}}|}{n}=\HH(\rho).$$
\end{proposition}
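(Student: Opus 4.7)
The plan is to observe that the cardinality $|\nabla_{n,r}|$ is a binomial coefficient and then quote the asymptotic from Example~\ref{exa:binomial}. Specifically, a tuple $(u_1,\ldots,u_r)$ with $1\le u_1<u_2<\cdots<u_r\le n$ is exactly an $r$-subset of $\{1,2,\ldots,n\}$, so $|\nabla_{n,r}|=\binom{n}{r}$. Substituting $r=\floor{\rho n}$ gives
\begin{equation*}
\bCap(\nabla_\rho)=\lim_{n\to\infty}\frac{1}{n}\log\binom{n}{\floor{\rho n}}.
\end{equation*}

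Next, I would invoke the calculation already carried out in Example~\ref{exa:binomial}, where the bivariate generating function $F(z_1,z_2)=1/(1-z_1-z_1z_2)$ of $a_{n,k}=\binom{n}{k}$ was analyzed via Theorem~\ref{thm:acsv} and shown to satisfy $\lim_{n\to\infty}\frac{\log a_{n,rn}}{n}=\HH(r)$ for fixed $r\in(0,1)$. Applying this with $r=\rho$ immediately yields the claimed value $\HH(\rho)$. Boundary cases $\rho=0$ and $\rho=1$ follow by continuity of $\HH$, since $\binom{n}{0}=\binom{n}{n}=1$ gives capacity $0=\HH(0)=\HH(1)$.

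There is no real obstacle here: the proof reduces to an identification of $|\nabla_{n,r}|$ with a binomial coefficient followed by the well-known Stirling-type asymptotic $\frac{1}{n}\log\binom{n}{\floor{\rho n}}\to\HH(\rho)$, which the paper has already derived as a worked example of Theorem~\ref{thm:acsv}. Thus the proof can be written in just a couple of lines, consisting of the combinatorial identification and a pointer to Example~\ref{exa:binomial}.
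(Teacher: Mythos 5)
Your proposal is correct and matches the paper's (implicit) argument: the paper simply records $|\nabla_{n,r}| = \binom{n}{r}$ in the text immediately preceding the proposition and then states the capacity without further proof, relying on the standard binomial asymptotic $\frac{1}{n}\log\binom{n}{\floor{\rho n}}\to\HH(\rho)$, which is exactly what you invoke (via Example~\ref{exa:binomial}). Your explicit handling of the endpoints $\rho\in\{0,1\}$ is a harmless addition.
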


%\subsubsection{Evaluation of the GV Bound}
To estimate the total ball size, consider the number of pairs $(\bu,\bv)$ with $L_1$ distance exactly $s$, denoted by $N(n_1,n_2,r,s) = |\{(\bu,\bv) \in \nabla_{n_1,r}\times \nabla_{n_2,r}: D(\bu,\bv) = s\}|$, for which the following recursive relation holds.
 
\begin{lemma}
\label{lem:general_IS}
	\begin{align*}
		N(n_1,n_2,r,s) =\  &\sum_{i \ge 1}N(n_1-i,n_2-i,r-1,s-|n_2-n_1|) \\ 
		&+ \sum_{i \ge 1}\sum_{j \ge 1}N(n_1-i,n_2-i-j,r-1,s-|n_1-n_2 + j|) \\
		&+ \sum_{i \ge 1}\sum_{j \ge 1}N(n_1-i-j,n_2-i,r-1,s-|n_1-n_2 - j|)\,.  
	\end{align*}
\end{lemma}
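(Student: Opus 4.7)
The plan is to mirror the truncation argument used for the recursions in the previous subsections (Lemma~\ref{lem:general_l1} and Lemma~\ref{lem:general_l1_1}), but adapted to the structure of $\nabla_{n,r}$, where strict monotonicity and an upper bound $n$ replace the summation constraint. I would fix $\bu=(u_1,\ldots,u_r)\in\nabla_{n_1,r}$ and $\bv=(v_1,\ldots,v_r)\in\nabla_{n_2,r}$, and classify pairs by the values of the largest coordinates $u_r$ and $v_r$. Concretely, write $u_r=n_1-a+1$ and $v_r=n_2-b+1$ with $a,b\ge 1$, so that $a,b$ index the ``gap'' from the top of the allowed range.

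First I would establish the key structural observation: strict monotonicity $u_1<u_2<\cdots<u_{r-1}<u_r$ together with $u_r=n_1-a+1$ implies that the truncated vector $(u_1,\ldots,u_{r-1})$ ranges precisely over $\nabla_{u_r-1,r-1}=\nabla_{n_1-a,r-1}$, and analogously for $\bv$. Hence truncation gives a bijection between pairs $(\bu,\bv)$ with prescribed last coordinates and pairs in $\nabla_{n_1-a,r-1}\times\nabla_{n_2-b,r-1}$. The contribution of the last coordinates to the $L_1$ distance is
\[
|u_r-v_r| \;=\; \bigl|(n_1-a)-(n_2-b)\bigr| \;=\; \bigl|(n_1-n_2)-(a-b)\bigr|,
\]
so the truncated pair must have $L_1$ distance $s-|(n_1-n_2)-(a-b)|$.

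Next I would partition the index set $\{(a,b):a,b\ge 1\}$ into three cases according to the sign of $a-b$, which aligns exactly with the three sums in the statement:
(i) $a=b=i$, contributing $\sum_{i\ge 1} N(n_1-i,n_2-i,r-1,s-|n_2-n_1|)$;
(ii) $a<b$, parametrized by $a=i,\,b=i+j$ with $i,j\ge 1$, where the final-coordinate contribution becomes $|n_1-n_2+j|$, giving the second sum;
(iii) $a>b$, parametrized by $b=i,\,a=i+j$ with $i,j\ge 1$, yielding the third sum with shift $|n_1-n_2-j|$.
Collecting these three cases via the bijection above gives the claimed identity.

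There is no serious obstacle; the argument is essentially a careful bookkeeping of the truncation. The only mildly delicate point is confirming that the parametrization covers each pair $(\bu,\bv)$ exactly once and that the induced distance shift is handled with the correct absolute-value expression in each regime. Terms with nonsensical arguments (e.g., negative $n_1-i$ or negative distance after subtraction) vanish by the convention that $N(\cdot)=0$ whenever any argument is negative, so the sums can be taken formally over $i,j\ge 1$ without further restriction.
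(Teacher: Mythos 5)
Your proof is correct and follows essentially the same route as the paper's: truncate the last coordinates, parametrize them from the top as $u_r=n_1+1-a$, $v_r=n_2+1-b$, observe that the truncated vectors range over $\nabla_{n_1-a,r-1}\times\nabla_{n_2-b,r-1}$, and split into the three cases $a=b$, $a<b$, $a>b$ to obtain the three sums. Your explicit $a,b$ parametrization is a slightly cleaner bookkeeping (and in fact fixes a small index slip in the paper's phrasing of the third case), but the underlying argument is the same.
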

\begin{proof}
	Let $\bu= (u_1,\ldots, u_r) \in \nabla_{n_1,r}$ and $\bv=(v_1,\ldots,v_r) \in \nabla_{n_2,r}$. We consider truncating the last coordinate $u_r$ and $v_r$. If $u_r=n_1 +1-i,v_r=n_2 +1-i$, we get the first sum where the distance is $s-|u_r-v_r| = s-|n_2-n_1|$. Otherwise, suppose $u_r = n_1 +1-i$ and $v_r = n_2 +1-i-j$, for $i \ge 1$ and $j \ge 1$. Here, the maximum value in any coordinate of the truncated vectors is $n_1-i$ and $n_2 -i-j $, respectively, and their distance is $s-|u_r-v_r|=s - |n_1 - n_2 + j|$. Hence, we get the second term. The last term is obtained similarly when $u_r = n_1-i-j$ and $v_r = n_2-i$.
\end{proof}

Since the total ball size is $|T(\nabla_{n, \rho n}, \delta n)| = \sum_{s=0}^{\delta n} N(n,n,\rho n, s)$, the recursive function of total ball size for $\nabla_{n,r}$ is the same as for $\Delta^{\!+}_{n,r}$.
Therefore, we have the same generating function and the same asymptotic total ball size.
Since the capacity expression for $\nabla_{n,r}$ is also the same as for $\Delta^{\!+}_{n,r}$, the resulting GV and GV-MR bounds are the same as well (Theorems \ref{prop:gv-si-opt} and \ref{prop:gv-is-mr}).

In Figure~\ref{fig:invertedsimplex} we plot the GV and GV-MR lower bounds from Theorems~\ref{prop:gv-si-opt} and~\ref{prop:gv-is-mr}, respectively, and compare them against the lower bound on bit-shift error correcting codes given by Kolesnik and Krachkovsky in~\cite{Kolesnik1994}.

\begin{theorem}[{\cite[Theorem 2]{Kolesnik1994}}]\label{prop:GV-KK}
	For fixed $0\le\delta\le 1/2$, we have that $\alpha(\nabla,\delta) \ge R_{\textsc{kk}}(\nabla,\delta)$, where
	$$R_{\textsc{kk}}(\nabla,\delta) \triangleq 1-\HH(\delta) .$$
\end{theorem}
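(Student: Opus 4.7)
The plan is to invoke the Gilbert--Varshamov sphere-covering argument on the ambient hypercube $\nabla_n = \bbZ_2^n$ equipped with the bit-shift metric. Since bit-shifts preserve Hamming weight, we may declare the distance between different-weight binary sequences to be infinite, so that the metric reduces to the $L_1$ distance between $1$-position vectors within each $\nabla_{n,r}$. Writing $B(\bx, r)$ for the bit-shift ball of radius $r$ and $M(r) = \max_{\bx} |B(\bx, r)|$, the greedy GV argument gives $A(\nabla_n, d) \ge 2^n / M(d-1)$, so, since $\bCap(\nabla) = 1$, it suffices to establish the ball-size bound
\[
\limsup_n \tfrac{1}{n}\log M(\floor{\delta n}) \;\le\; \HH(\delta)
\]
in order to deduce $\alpha(\nabla, \delta) \ge 1 - \HH(\delta)$.

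For the ball-size estimate, the natural starting point is the prefix-sum identity
\[
D(\bu, \bv) \;=\; \sum_{i=1}^{n} \bigl|S_\bx(i) - S_\by(i)\bigr|, \qquad S_\bx(i) = \sum_{j=1}^{i} x_j,
\]
where $\bu, \bv$ are the $1$-position vectors of $\bx, \by$; this identity expresses the bit-shift distance as the $L_1$ distance between cumulative-sum vectors and is proved by recognizing the left-hand side as the $L_1$ earth-mover distance between $\bu$ and $\bv$ and evaluating it as an area between the step functions $F_\bu$ and $F_\bv$. Setting $f_i = S_\by(i) - S_\bx(i)$, each $\by \in B(\bx, r)$ corresponds to a lattice path $(0, f_1, \ldots, f_{n-1}, 0)$ whose increments $y_i - x_i$ are constrained to $\{-x_i, 1-x_i\}$ and whose $L_1$ area $\sum_i |f_i|$ is at most $r$. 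The support $T = \{i : f_i \ne 0\}$ has cardinality at most $r$ (because $|f_i|\ge 1$ on $T$), giving at most $\binom{n}{\le r}$ candidate supports; one then bounds the number of valid paths compatible with any fixed $(\bx, T)$ using the monotonicity of $f$ (non-decreasing on $0$-runs of $\bx$, non-increasing on $1$-runs) together with the boundary conditions $f_{a-1} = 0 = f_{b+1}$ at the endpoints of each maximal support interval $[a, b]$. Either by generating-function methods in the spirit of Section~\ref{sec:sticky} or by a direct combinatorial identity, this yields $M(\floor{\delta n}) = 2^{\HH(\delta) n (1+o(1))}$ for $\delta \le 1/2$, and the GV inequality then delivers $\alpha(\nabla, \delta) \ge 1 - \HH(\delta) = R_\textsc{kk}(\nabla, \delta)$.

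The main obstacle is the sharp counting of area-constrained lattice paths with increments coupled to $\bx$: the map $\by \mapsto T$ is not injective in general (for instance, with $\bx = 0011$ and $T = \{1,2,3\}$ both $\by = 1010$ and $\by = 1100$ arise, with $f$-values $(1,1,1)$ and $(1,2,1)$ respectively), so the subtle part is showing that the extra multiplicity is absorbed into sub-exponential terms. A naive injection into a Hamming ball of radius $2d$ (using $d_H \le 2D$) only yields $1 - \HH(2\delta)$, which is strictly weaker than the claim, so a finer area-based count is essential. The restriction $\delta \le 1/2$ is natural because $\binom{n}{\le \delta n}$ is a useful upper bound only in that regime.
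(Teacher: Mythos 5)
This statement is not proved in the paper at all: it is a verbatim citation of \cite[Theorem 2]{Kolesnik1994}, included in Section~\ref{sec:nabla} only so the authors can compare it against their own sharper bounds (Theorems~\ref{prop:gv-si-opt} and \ref{prop:gv-is-mr}) in Figure~\ref{fig:invertedsimplex}. There is consequently no ``paper proof'' to match your argument against, and your attempt to reconstruct a proof should be judged on its own merits. On those merits it has a genuine gap which you yourself flag as ``the main obstacle'': the entire argument rests on the ball-size estimate
$\limsup_n \tfrac1n\log M(\lfloor\delta n\rfloor)\le\HH(\delta)$,
and that estimate is never established. All the surrounding machinery is in order --- the prefix-sum identity $D(\bu,\bv)=\sum_i|S_\bx(i)-S_\by(i)|$, the correspondence $\by\leftrightarrow f$ with increments in $\{-x_i,1-x_i\}$, the observation $|\supp f|\le r$, and the implication $A(\nabla_n,d)\ge 2^n/M(d-1)$ --- but the load-bearing claim is precisely the one left to ``generating-function methods or a direct combinatorial identity.''

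The reason this cannot be waved away is that the map $T\mapsto\{\text{admissible }f\text{ with }\supp f=T\}$ really does have exponential multiplicity once the area budget is large enough, so the bare count $\binom{n}{\le r}$ of supports is not by itself the answer; you must show that supports carrying many $f$'s are rare enough. For a concrete instance of the difficulty, take $\bx$ alternating and a support $T$ that is a single interval of length $L$. Writing $g_k=f_{2k}$, each $g$ evolves as a Motzkin-type walk with steps $-1,0,+1$ of multiplicities $1,2,1$, constrained away from zero and with area about $r/2$; for $L=\Theta(r)=\Theta(n)$ the number of such walks is already exponential in $n$, and one has to check carefully that when you trade interval length against height (area used) against number of intervals, the total count still lands at $2^{\HH(\delta)n(1+o(1))}$. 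Your sketch identifies all the right ingredients but does not carry out this balance, and neither the monotonicity-on-runs observation nor the boundary conditions $f_{a-1}=f_{b+1}=0$ by themselves close it. Until that counting is done (or replaced by a generating-function computation of the worst-case ball, as Kolesnik and Krachkovsky actually do via transfer matrices for $(d,k)$-constrained paths), the proposal is a plan rather than a proof.
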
 

\begin{figure}[t!]
	\begin{center}
		\includegraphics[width=0.7\linewidth]{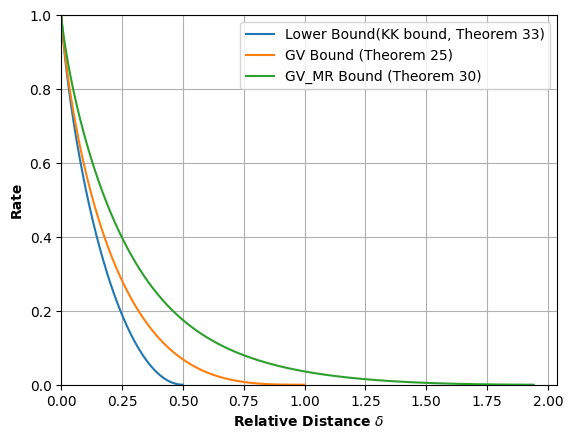}
	\end{center}
	\caption{Lower bounds on the highest attainable code rate for the positive simplex and the inverted simplex.}
	\vspace{-2mm}
	\label{fig:invertedsimplex}
\end{figure}

\begin{comment}
\noindent{\em Sphere-Packing Bound}. For any $\bu\in \nabla_{(n,r)}$ and fixed $b = \lfloor \frac{\delta n}{2} \rfloor > 0$, it can be observed easily that the minimum ball size is at least $\big|\{\bw:\bw = \bu + \bv \in \nabla_{(n+ b,r)} \text{ and } \bv \in \nabla_{(b,r)}\}\big|$. 

Thus, the sphere-packing bound is $$A(\Delta_n,d) \le \frac{\binom{n+b+r-1}{r-1}}{\binom{r+b-1}{r-1}}.$$
%\begin{equation}
%	A_{SI}(n,d) \le \frac{\binom{n+d-1}{r-1}}{\binom{r+d-1}{r-1}}
%\end{equation}
Asymptotically, we have the following proposition.

\begin{proposition}\label{prop:sp-is}
	For fixed $\beta > 0$, we have that $\alpha_{\rho}(\Delta,\delta)\le R^{\Delta}_{\rm{SP}}(\Delta)$, where
	$R^{\Delta}_{\rm{SP}}(\delta) \triangleq (1+\delta/2+\rho)\HH(1+\delta/2+\rho) - (\delta/2+\rho)\HH(\delta/2+\rho)$.% and $\rho = 1/2$.
\end{proposition}
\end{comment}

\section{GV Bound for the Hypercube}
\label{sec:hypercube}
% in the $L_1$-metric}\label{sec:hypercube}
In this section, we consider the space
\begin{equation}
\bbZ_{q}^{n} = \big\{\bu= (u_1,u_2,\ldots, u_n) \in \bbZ^n: 0 \le u_i \le q-1 \big\} ,
\end{equation}
a discrete hypercube of dimension $n$ and cardinality $|\bbZ_{q}^{n}| = q^n$.
Clearly, $\bCap(\bbZ_{q}) \triangleq \lim_{n\to\infty}\frac{1}{n}\log |\bbZ_{q}^{n}| = \log q$ .

\subsection{Evaluation of the GV Bound}\label{sec:TotalBallSizeHypercube}

To estimate the ball size, we first consider the number of pairs $(\bu,\bv)$ with $L_1$ distance exactly $s$, denoted by $N(n,s) = |\{(\bu,\bv) \in \bbZ_{q}^{n} \times \bbZ_{q}^{n}: D(\bu,\bv) = s\}|$.

\begin{lemma}
\label{lem:general_hypercube}
The following recursion holds:
	\begin{align*}
		N(n,s) &= qN(n-1,s) + 2\sum_{j = 1}^{q-1} (q-j)N(n-1,s-j) \,. 
	\end{align*}
\end{lemma}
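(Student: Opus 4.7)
The plan is a direct counting argument: I would partition the pairs $(\bu,\bv)\in\bbZ_q^n\times\bbZ_q^n$ with $D(\bu,\bv)=s$ according to the last coordinate pair $(u_n,v_n)$, and then use the fact that the remaining coordinates form an arbitrary pair in $\bbZ_q^{n-1}\times\bbZ_q^{n-1}$ with distance $s-|u_n-v_n|$. Concretely, given any such pair $(\bu,\bv)$, write $\bu=(\bu',u_n)$ and $\bv=(\bv',v_n)$ with $\bu',\bv'\in\bbZ_q^{n-1}$. Since the $L_1$ metric decomposes coordinatewise, we have $D(\bu,\bv)=D(\bu',\bv')+|u_n-v_n|$, and conversely every choice of last coordinates together with a pair $(\bu',\bv')\in\bbZ_q^{n-1}\times\bbZ_q^{n-1}$ of appropriate distance yields a valid pair in $\bbZ_q^n\times\bbZ_q^n$.

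First I would handle the case $u_n=v_n$: there are exactly $q$ choices for the common value, and the remaining coordinates must satisfy $D(\bu',\bv')=s$, contributing $qN(n-1,s)$. Next, for each $j\in\{1,\ldots,q-1\}$, I would count the ordered pairs $(u_n,v_n)\in\{0,1,\ldots,q-1\}^2$ with $|u_n-v_n|=j$. The pairs with $v_n-u_n=j$ are $(0,j),(1,j+1),\ldots,(q-1-j,q-1)$, giving $q-j$ pairs; symmetrically, there are $q-j$ pairs with $u_n-v_n=j$. In total there are $2(q-j)$ ordered pairs with $|u_n-v_n|=j$, and the remaining coordinates must have distance $s-j$, so this case contributes $2(q-j)N(n-1,s-j)$. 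Summing over $j$ yields the second term $2\sum_{j=1}^{q-1}(q-j)N(n-1,s-j)$.

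Combining both cases (which are evidently disjoint and exhaustive) gives the claimed recursion. The only place a subtlety arises is the factor $2(q-j)$, which requires care with ordered versus unordered pairs; since $N(n,s)$ counts ordered pairs $(\bu,\bv)$ this factor is correct. There is no real obstacle here; the identity follows from the additivity of $L_1$ distance across coordinates together with an elementary count. No boundary conventions are needed beyond the natural one that $N(n-1,s-j)=0$ whenever $s-j<0$.
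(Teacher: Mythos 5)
Your proof is correct and follows essentially the same decomposition as the paper: truncate the last coordinate, split into the case $u_n=v_n$ and the cases $|u_n-v_n|=j$, and count the $q$ (respectively $2(q-j)$) ordered last-coordinate pairs. No differences worth noting.
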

\begin{proof}
	Let $\bu= (u_1,\ldots, u_n)$ and $\bv=(v_1,\ldots,v_n) \in N(n,s)$. We consider truncating the last coordinate $u_n$ and $v_n$. If $u_n=v_n=i$ for $i\in \{0,1,\ldots,q-1\}$, we get the first sum where the distance remains the same. Otherwise, $u_n = i$ and $v_n = i+j$ or $u_n = i+j$ and $v_n = i$ for $j \in \{1,2,\ldots,q-1 \}$ and $i \in \{0,1,\ldots,q-j-1\} $. Here, the distance decreases by $|u_n-v_n|=j$. Since there are exactly $(q-j)$ possible pairs with distance $j$, we get the second term.
\end{proof}

The generating function of the bivariate sequence $N(n,s)$, namely $F(x,y)=\sum_{n,s \ge 0}N(n,s)x^{n}y^s$, is given in the following lemma. 
\begin{lemma}\label{lem:general_ge_hypercube}
	$F(x,y) = \frac{1}{H(x,y)}$, where
		\begin{align*}
			H(x,y) &= 1-x \bigg(q+2\sum_{j=1}^{q-1}(q-j)y^j \bigg) \,.  
		\end{align*}
\end{lemma}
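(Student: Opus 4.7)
The plan is to derive the generating function directly from the recursion in Lemma~\ref{lem:general_hypercube}, following the same pattern used in Lemmas~\ref{lem:general_ge}, \ref{lem:general_ge_mr}, and \ref{lem:general_ge_si}. First I would multiply both sides of the recursion $N(n,s) = qN(n-1,s) + 2\sum_{j=1}^{q-1}(q-j)N(n-1,s-j)$ by $x^{n}y^{s}$ and sum over $n,s \ge 0$. On the left-hand side this produces $F(x,y)$, and on the right-hand side a shift of indices $n \mapsto n+1$ in each summand will pull out a factor of $x$, while the shift $s \mapsto s+j$ in the second sum will pull out a factor of $y^{j}$.

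Next I would handle the boundary: the recursion is valid for $n \ge 1$, so I need to account for the base case $n=0$. Since $N(0,0) = 1$ (the only pair in $\bbZ_{q}^{0} \times \bbZ_{q}^{0}$ is the pair of empty tuples, at distance $0$) and $N(0,s) = 0$ for $s \ge 1$, the contribution of $n=0$ to $F(x,y)$ is exactly $1$. Collecting terms yields the functional equation
\begin{equation*}
F(x,y) = 1 + x F(x,y)\bigg(q + 2\sum_{j=1}^{q-1}(q-j)y^{j}\bigg),
\end{equation*}
from which solving for $F(x,y)$ gives $F(x,y) = 1/H(x,y)$ with $H$ as stated.

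I do not expect any genuine obstacle here; the only bookkeeping point requiring care is verifying the base case and checking that the double-sum interchange is justified (which is standard for formal power series in two variables with non-negative coefficients). The claim $G(x,y) = 1$ is immediate from this computation, so no separate polynomial numerator argument is needed, unlike in Lemmas~\ref{lem:general_ge} and \ref{lem:general_ge_mr} where boundary contributions from single-coordinate vectors produced a nontrivial numerator.
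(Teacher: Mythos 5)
Your proposal is correct and follows essentially the same route as the paper's own proof: multiply the recursion by $x^n y^s$, sum, use index shifts to factor out $x$ and $y^j$, and isolate the base case $N(0,0)=1$ to produce the functional equation $F = 1 + xF\big(q + 2\sum_{j=1}^{q-1}(q-j)y^j\big)$. You are somewhat more explicit than the paper about why the numerator is $1$ (the paper simply writes the ``$1$'' without comment), but the argument is otherwise identical.
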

\begin{proof}
We have
		\begin{align*}
			F(x,y)
			&= \sum_{n,s \ge 0}N(n,s)x^ny^s \\ 
			&= \sum_{n,s \ge 0}qN(n-1,s)x^ny^s
			   + \sum_{n,s \ge 0}\sum_{j = 1}^{q-1}2(q-j) N(n-1,s-j)x^ny^s \\ 
			&=  1 + xF(x,y) \bigg(q + 2\sum_{j=1}^{q-1}(q-j)y^j \bigg) .
		\end{align*}
%and hence
		%\begin{align*}F(x,y)
			%&= \frac{1}{1-x\big(q+2\sum_{j=1}^{q-1}(q-j)y^j\big)} \,.
		%\end{align*}
\end{proof}

As before, the quantity of interest is $N(n,\delta n)$.
From Theorem~\ref{thm:acsv}, we need to solve the following system of equations:. %like the system of equation (\ref{eq:partial}).
\begin{equation}\label{eq:Hypercube-H}
	%\begin{split}
	H = 0 \text{\quad and\quad}
	xH_{x}  = \frac{yH_{y}}{\delta}. 
	%\end{split}
\end{equation}
%The partial derivates $\frac{\partial H}{\partial x}$ is denoted as $H_x$. The function $H(x,y)$ is simply written as $H$.
%where $H = (1-x_1x_2)(1-x_1z)(1-x_2z) - yx_1x_2(1-x_1x_2z^2)$.

\begin{lemma}\label{lem:solveH_Hypercube_1}
	The solution of the \eqref{eq:Hypercube-H} with respect to $\delta$ is
	{	\begin{align*}
		x^*(\delta) &= \frac{1}{q+2\sum_{j=1}^{q-1}(q-j)(y^*)^j} \,,
		\end{align*}
	and $y^*(\delta)$ is the root of the equation
	\begin{equation}\label{eq:y_hypercube}
		2\sum_{j=1}^{q-1} (q-j)(j-\delta)y^j = q\delta \,.
	\end{equation}
	}
\end{lemma}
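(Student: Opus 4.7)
The plan is to verify the claimed solution by directly differentiating $H(x,y)$ and substituting into the two defining equations \eqref{eq:Hypercube-H}. Since $H$ is linear in $x$ and polynomial in $y$, the calculation is elementary; the main task is simply to manipulate the resulting identity into the form asserted in the lemma.

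First I would write $H(x,y) = 1 - x P(y)$ where $P(y) \triangleq q + 2\sum_{j=1}^{q-1}(q-j)y^j$. The equation $H=0$ then gives immediately $x^* = 1/P(y^*)$, matching the claimed formula for $x^*(\delta)$. Computing the partial derivatives, I obtain $H_x = -P(y)$ and $H_y = -x P'(y)$, where $P'(y) = 2\sum_{j=1}^{q-1}(q-j)j\,y^{j-1}$. Substituting $H=0$ gives $xH_x = -xP(y) = -1$, and similarly $yH_y = -xyP'(y) = -2x\sum_{j=1}^{q-1}(q-j)j\,y^j$.

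Next I would plug these into the condition $xH_x = yH_y/\delta$, which becomes
\begin{equation*}
-1 = -\frac{2x}{\delta}\sum_{j=1}^{q-1}(q-j)j\,y^j,
\end{equation*}
and, after replacing $x$ with $1/P(y)$ and clearing denominators, rewrites as
\begin{equation*}
\delta\bigg(q + 2\sum_{j=1}^{q-1}(q-j)y^j\bigg) = 2\sum_{j=1}^{q-1}(q-j)j\,y^j.
\end{equation*}
Collecting terms on one side yields precisely $2\sum_{j=1}^{q-1}(q-j)(j-\delta)y^j = q\delta$, which is the defining equation \eqref{eq:y_hypercube} for $y^*$. This establishes that the pair $(x^*,y^*)$ described in the lemma satisfies \eqref{eq:Hypercube-H}.

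The only nontrivial point is uniqueness of the positive root $y^*$ of \eqref{eq:y_hypercube} in the range of $\delta$ of interest, which is guaranteed abstractly by Theorem~\ref{thm:acsv} (the positivity of the coefficients $N(n,s)$ ensures a unique positive solution to the system \eqref{eq:partial}). Hence no further argument is required beyond the algebraic verification above. I do not anticipate any genuine obstacle; the calculation is short and purely formal once $H$ is written in the factored form $1 - xP(y)$.
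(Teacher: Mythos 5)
Your proof is correct and follows essentially the same route as the paper's: solve $H=0$ for $x^*$, note that this forces $xH_x=-1$, and then substitute into $xH_x = yH_y/\delta$ to isolate the polynomial condition on $y^*$. The paper's proof is identical in content, just written out without the abbreviation $P(y)$.
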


\begin{proof}
	From Lemma~\ref{lem:general_ge_hypercube}, we have that $H = 1-x\big(q+2\sum_{j=1}^{q-1}(q-j)y^j\big)$. Thus, differentiating $H$ with respect to $x$ and $y$ and multiplying by $x$ and $y$ respectively, we have that
	\begin{align*}
		xH_x &= -x\bigg(q+2\sum_{j=1}^{q-1}(q-j)y^j\bigg) \\
		yH_y &= -x\bigg(2\sum_{j=1}^{q-1}(q-j)jy^j\bigg) \,.
	\end{align*}
Since $H = 1-x\big(q+2\sum_{j=1}^{q-1}(q-j)y^j\big) = 0$, we have
\begin{align*}
	x(\delta) &= \frac{1}{q+2\sum_{j=1}^{q-1}(q-j)y^j} \,.
\end{align*}
and $xH_x = -1$.
Furthermore, since $\delta xH_x = yH_y$, we have
\begin{align}\label{eq:L1solvehyper}
	\delta = 2x\bigg(\sum_{j=1}^{q-1}(q-j)jy^j\bigg) \,.
\end{align}
By substituting $x$ in \eqref{eq:L1solvehyper}, we obtain the required expression.
\end{proof}

Applying \eqref{eq:asym_approx}, we have that

\begin{equation}\label{eq:asym_ball_hypercube}
	\lim_{n \to \infty} \frac{\log N(n,\delta n)}{n} = -\log x^*(\delta) - \delta \log y^*(\delta)\,.  
\end{equation}
The total ball size is given by $|T(\bbZ^n_q,\delta n)| = \sum_{s=0}^{\delta n} N(n,s)$. Hence, we have that
\begin{align*}
	\widetilde{T}(\bbZ_q, \delta) &= \max_{0\le \delta_1 \le \delta} -\log x^*(\delta_1) - \delta_1 \log y^*(\delta_1) \\
	& = \begin{cases}
		-\log x^*(\delta) - \delta \log y^*(\delta)\,, &\delta\le \delta_{\rm{max}}\,,\\
		2\log q \,, &\delta \ge \delta_{\rm{max}}\,.
	\end{cases}
\end{align*}
Here
\begin{align}
\label{eq:deltamax_hypercube_gv}
\delta_{\rm{max}}=\frac{q^2-1}{3q} ,
\end{align}
which is obtained by setting $y = 1$ in \eqref{eq:y_hypercube}.
We can now state the GV bound.

\begin{theorem}[GV bound for Hypercube]
\label{prop:gv-hypercube}
	For fixed $\delta \ge 0$, we have $\alpha(\bbZ_{q},\delta) \ge R_{\textsc{gv}}(\bbZ_q, \delta)$, where
	$$R_{\textsc{gv}}(\bbZ_q, \delta) \triangleq 2\log q - \widetilde{T}(\bbZ_q, \delta) .$$
\end{theorem}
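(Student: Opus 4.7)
The plan is to apply the general Gilbert--Varshamov template from Section~\ref{sec:prelim} to the specific space $\bbZ_q^n$, using the asymptotic estimate of the total ball size derived just above the theorem. Concretely, the inequality \eqref{eq:GVtotalball} gives
\[
A(\bbZ_q^n, \floor{\delta n}) \;\ge\; \frac{|\bbZ_q^n|^2}{|T(\bbZ_q^n, \floor{\delta n}-1)|},
\]
so taking $\log$, dividing by $n$, and passing to the limit superior yields
\[
\alpha(\bbZ_q,\delta) \;\ge\; 2\bCap(\bbZ_q) - \limsup_{n\to\infty}\frac{\log |T(\bbZ_q^n, \floor{\delta n}-1)|}{n}.
\]
Since $|\bbZ_q^n|=q^n$ gives $\bCap(\bbZ_q)=\log q$, it remains to identify the second term with $\widetilde{T}(\bbZ_q,\delta)$ and to verify that replacing $\floor{\delta n}-1$ by $\floor{\delta n}$ does not affect the asymptotic exponent.

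The next step is to invoke the combinatorial work already in place. By Lemma~\ref{lem:general_ge_hypercube}, the generating function $F(x,y)=\sum_{n,s}N(n,s)x^n y^s$ is rational with denominator $H(x,y)=1-x\bigl(q+2\sum_{j=1}^{q-1}(q-j)y^j\bigr)$. Applying Theorem~\ref{thm:acsv} to this bivariate case (with $\ell=2$) and using the unique positive critical point $(x^*(\delta),y^*(\delta))$ identified in Lemma~\ref{lem:solveH_Hypercube_1}, the asymptotic formula \eqref{eq:asym_approx} gives
\[
\lim_{n\to\infty}\frac{\log N(n,\floor{\delta n})}{n} \;=\; -\log x^*(\delta) - \delta\log y^*(\delta),
\]
which is exactly equation \eqref{eq:asym_ball_hypercube}.

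From the pair count $N(n,s)$ we must pass to the total ball $|T(\bbZ_q^n,\floor{\delta n})|=\sum_{s=0}^{\floor{\delta n}} N(n,s)$. A cumulative sum of $\Theta(n)$ exponentially growing terms has the same exponential rate as its largest summand, so
\[
\limsup_{n\to\infty}\frac{\log |T(\bbZ_q^n,\floor{\delta n})|}{n}
=\max_{0\le \delta_1\le \delta}\bigl(-\log x^*(\delta_1)-\delta_1\log y^*(\delta_1)\bigr).
\]
By Theorem~\ref{thm:maxlog}, the derivative of the inner quantity in $\delta_1$ equals $-\log y^*(\delta_1)$, so it is increasing exactly while $y^*(\delta_1)\le 1$; substituting $y=1$ into \eqref{eq:y_hypercube} identifies the turning point as $\delta_{\max}=(q^2-1)/(3q)$, matching \eqref{eq:deltamax_hypercube_gv}. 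This gives precisely the piecewise definition of $\widetilde{T}(\bbZ_q,\delta)$ stated just before the theorem, and combining everything yields $\alpha(\bbZ_q,\delta)\ge 2\log q - \widetilde{T}(\bbZ_q,\delta)$ as claimed.

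The only nonroutine point is the passage from the pair count $N(n,s)$ to the cumulative total ball size and the justification that maximizing over $\delta_1\in[0,\delta]$ is equivalent to evaluating at $\delta$ below $\delta_{\max}$ and saturating at $2\log q$ above it; this is a direct application of Theorem~\ref{thm:maxlog} and requires no new machinery. Everything else is a bookkeeping exercise of substituting the capacity $\log q$ and the formula for $\widetilde{T}(\bbZ_q,\delta)$ into \eqref{eq:gv}.
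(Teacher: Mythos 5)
Your proposal is correct and follows essentially the same route the paper intends: the theorem is a direct assembly of the general GV inequality \eqref{eq:GVtotalball}, the capacity $\bCap(\bbZ_q)=\log q$, and the expression for $\widetilde{T}(\bbZ_q,\delta)$ derived immediately before it via Lemmas~\ref{lem:general_ge_hypercube}, \ref{lem:solveH_Hypercube_1} and Theorem~\ref{thm:maxlog}. The paper omits an explicit proof block for exactly this reason, and your reconstruction (including the observation that the $\floor{\delta n}-1$ versus $\floor{\delta n}$ shift is immaterial for the exponential rate, and that the cumulative sum has the rate of its largest term) fills in precisely the implicit steps.
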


\subsection{Evaluation of Marcus and Roth's Improvement of the GV Bound}

In this subsection, we introduce an additional parameter $p$ representing the number of components with value $0$. In particular, we set $\bbZ_{q}^{n}(p) = \{(u_1,u_2,\ldots, u_n) \in \bbZ^n: 0 \le u_i \le q-1, |\{i : u_i=0\}|=p\}$.
Allowing $p$ to grow linearly with $n$ we set $p=\tau n$, and we denote by $\bbZ_q(\tau)$ the corresponding family of spaces.
Note that $|\bbZ_{q}^{n}(p)| = \binom{n}{p}(q-1)^{(n-p)}$, and hence the capacity $\bCap(\bbZ_{q}(\tau))$ is given by the following closed formula.

\begin{proposition}\label{prop:run-capMR-hypercube}
	%$|\cS(n,r)|=2\binom{n-1}{r-1}$. Furthermore, 
	For fixed $ 0 \le \tau \le 1$, we have that $$\bCap(\bbZ_{q}(\tau)) \triangleq \lim_{n\to\infty}\frac{\log |\bbZ_{q}^{n}(\lfloor \tau n \rfloor)|}{n}=(1-\tau)\log(q-1) +  \HH(\tau).$$
\end{proposition}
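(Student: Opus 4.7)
The plan is to establish this proposition through a direct combinatorial counting argument followed by a standard asymptotic estimate, since this capacity-type result does not require the machinery of analytic combinatorics in several variables developed earlier. The statement is really a straightforward consequence of the explicit cardinality formula for $\bbZ_q^n(p)$ that was already noted in the paragraph immediately preceding it.

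First I would prove the counting identity $|\bbZ_q^n(p)| = \binom{n}{p}(q-1)^{n-p}$ by a simple combinatorial argument: to specify a vector $\bu \in \bbZ_q^n$ with exactly $p$ coordinates equal to $0$, choose the subset of $p$ positions where the zeros appear (this gives the factor $\binom{n}{p}$), and then independently assign to each of the remaining $n-p$ positions one of the $q-1$ nonzero values in $\{1,2,\ldots,q-1\}$ (giving $(q-1)^{n-p}$). These choices are in bijection with the elements of $\bbZ_q^n(p)$.

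Next, I would substitute $p = \lfloor \tau n \rfloor$ and compute the logarithm:
\begin{equation*}
\frac{\log|\bbZ_q^n(\lfloor \tau n\rfloor)|}{n} = \frac{1}{n}\log\binom{n}{\lfloor \tau n\rfloor} + \frac{n-\lfloor \tau n\rfloor}{n}\log(q-1).
\end{equation*}
The first term converges to the binary entropy $\HH(\tau)$ as $n\to\infty$ by the classical Stirling approximation $\frac{1}{n}\log\binom{n}{\lfloor \tau n\rfloor} \to \HH(\tau)$ (this follows, for instance, from Example~\ref{exa:binomial} by taking $r=\tau$, which gives exactly $\HH(\tau)$; alternatively apply Stirling's formula directly to $n!$, $\lfloor \tau n\rfloor!$, and $(n-\lfloor \tau n\rfloor)!$ to handle the floor). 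The second term converges to $(1-\tau)\log(q-1)$ since $\lfloor \tau n\rfloor/n \to \tau$. Summing the two limits yields the claimed formula.

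The only subtlety, which is entirely routine, is keeping track of the floor function in $\lfloor \tau n \rfloor$; this is handled because $\lfloor \tau n\rfloor/n \to \tau$ and both $\frac{1}{n}\log\binom{n}{k}$ and the linear term are continuous (in the asymptotic sense) in the ratio $k/n$. There is no substantive obstacle to overcome here: the proposition is an immediate corollary of the closed-form cardinality and Stirling's formula, and is stated explicitly only to parallel Propositions~\ref{prop:run-cap-L1}, \ref{prop:run-capMR-L1}, \ref{prop:run-cap}, and \ref{prop:run-capMR-si}, which play the role of capacity inputs to the subsequent GV and GV-MR bound computations for the hypercube.
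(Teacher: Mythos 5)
Your proof is correct and follows essentially the same route as the paper, which simply notes the cardinality formula $|\bbZ_{q}^{n}(p)| = \binom{n}{p}(q-1)^{n-p}$ in the paragraph preceding the proposition and leaves the Stirling-based asymptotics implicit. You have merely spelled out the combinatorial derivation of that count and the routine passage to the limit via $\frac{1}{n}\log\binom{n}{\lfloor\tau n\rfloor}\to\HH(\tau)$, which is exactly what the paper intends.
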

%\subsubsection{Total Ball Size for Marcus and Roth's Improvement}\label{sec:TotalBallSizeMRHypercube}

We perform the same analysis for total ball size to obtain Marcus and Roth's improvement.
%Namely, we consider balls with center $\bu \in \bbZ_{q}^{n}(p)$ and radius $s$, that is, $V(\bu,s,p) = \{\bv \in \bbZ_{q}^{n}(p): D(\bu,\bv) \le s\}$.
%To estimate the ball size,
Namely, we first consider the number of pairs $(\bu,\bv)$ with $L_1$ distance exactly $s$, denoted by $N(n,s,p_1,p_2) = |\{(\bu,\bv) \in \bbZ_{q}^n(p_1) \times \bbZ_{q}^n(p_2): D(\bu,\bv) = s, |\{i : u_i = 0\}| = p_1, |\{i : v_i = 0\}| = p_2\}|$.

\begin{lemma}
\label{lem:general_hypercube_mr}
The following recursion holds:
	\begin{align*}
		N(n,s,p_1,p_2) =\ &N(n-1,s,p_1-1,p_2-1) + \sum_{j = 1}^{q-1} N(n-1,s-j,p_1-1,p_2) \\
		& + \sum_{j = 1}^{q-1} N(n-1,s-j,p_1,p_2-1) + (q-1)N(n-1,s,p_1,p_2) \\
		& + 2\sum_{j=1}^{q-1}(q-1-j)N(n-1,s-j,p_1,p_2)\,.
	\end{align*}
\end{lemma}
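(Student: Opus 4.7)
The plan is to mimic the truncation argument used in the proof of Lemma~\ref{lem:general_hypercube} (and the analogous Lemma~\ref{lem:general_l1_1}), but with additional bookkeeping for the zero-count constraints $|\{i: u_i=0\}|=p_1$ and $|\{i: v_i=0\}|=p_2$. Given a pair $(\bu,\bv)\in\bbZ_q^n(p_1)\times\bbZ_q^n(p_2)$ with $D(\bu,\bv)=s$, I would split on the values of the last coordinates $(u_n,v_n)$ into five disjoint cases and track how $s$, $p_1$, and $p_2$ transform when the last coordinate is deleted, then sum the five contributions.

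First I would handle the three cases in which at least one of $u_n, v_n$ is zero. If $u_n=v_n=0$, both zero counts drop by one while the distance is preserved, giving the term $N(n-1,s,p_1-1,p_2-1)$. If $u_n=0$ and $v_n=j\in\{1,\dots,q-1\}$, only $p_1$ drops by one and the distance drops by $j$, and summing over $j$ gives $\sum_{j=1}^{q-1}N(n-1,s-j,p_1-1,p_2)$; the symmetric case with $u_n$ nonzero and $v_n=0$ contributes $\sum_{j=1}^{q-1}N(n-1,s-j,p_1,p_2-1)$. Then I would handle the two cases where both $u_n$ and $v_n$ are nonzero: if $u_n=v_n=i$ for some $i\in\{1,\dots,q-1\}$, the $q-1$ choices of $i$ contribute $(q-1)N(n-1,s,p_1,p_2)$; finally, if both are nonzero and distinct with $|u_n-v_n|=j$, then for each $j\in\{1,\dots,q-1\}$ the smaller value can be any element of $\{1,\dots,q-1-j\}$ and a factor of $2$ accounts for which coordinate is larger, yielding the last term $2\sum_{j=1}^{q-1}(q-1-j)N(n-1,s-j,p_1,p_2)$.

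The only real subtlety is the combinatorial count in this last case. In the capacity-level recursion of Lemma~\ref{lem:general_hypercube}, ordered pairs of distinct entries at distance exactly $j$ number $2(q-j)$, since either entry is allowed to be $0$. Here that count must be reduced to $2(q-1-j)$, because the three preceding cases have already accounted for every pair with at least one zero coordinate; the coefficient $q-1-j$ is precisely the number of admissible smaller values in $\{1,\dots,q-1-j\}$. Verifying that the five cases form a partition of the pairs counted by $N(n,s,p_1,p_2)$, which follows immediately from the case descriptions, then completes the argument.
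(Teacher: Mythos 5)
Your proof is correct and follows exactly the same truncation-and-case-split argument as the paper's proof: partitioning on whether each of $u_n,v_n$ is zero, equal-nonzero, or distinct-nonzero, and tracking the resulting shifts in $s,p_1,p_2$. The verification that the coefficient in the last case drops from $2(q-j)$ (in Lemma~\ref{lem:general_hypercube}) to $2(q-1-j)$ because zero entries are now accounted for separately is a nice sanity check, though the paper states this count without elaboration.
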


\begin{proof}
	Let $\bu= (u_1,\ldots, u_n)$ and $\bv=(v_1,\ldots,v_n) \in N(n,s)$. We consider truncating the last coordinate $u_n$ and $v_n$. If $u_n = v_n = 0$, we get the first term. If $u_n=v_n=i$ for $i\in \{1,\ldots,q-1\}$, we get the fourth term where the distance remains the same. If $u_n = 0$ and $v_n = j$, or $u_n = j$ and $v_n = 0$, for $j \in \{1,2,\ldots,q-1 \}$, the distance decreases by $|u_n-v_n|=j$ and we get the second and the third term, respectively. Finally, consider $u_n = i$ and $v_n = i+j$, or $u_n = i+j$ and $v_n = i$, for $j \in \{1,2,\ldots,q-1 \}$ and $i \in \{1,2,\ldots,q-j-1\} $. In these cases, the distance decreases by $|u_n-v_n|=j$. Since there are exactly $(q-1-j)$ such pairs with distance $j$, we get the fifth term.
\end{proof}

As before, the quantity of interest is  $N(n, s, p, p)$. Therefore, we define the generating function $F(x,y,w)=\sum_{n,s,p \ge 0}N(n,s,p,p)x^{n}y^sw^{2p}$ with three variables (see Proposition~\ref{prop:barpartial}).

\begin{lemma}\label{lem:general_ge_hypercube_mr}
	$F(x,y,w) = \frac{1}{H(x,y,w)}$,
	where
		\begin{align*}
			H(x,y) &= 1-x\bigg(w^2+(q-1) + 2w\sum_{j=1}^{q-1} y^j +2\sum_{j=1}^{q-1}(q-1-j) y^j\bigg)\,.  
		\end{align*}
\end{lemma}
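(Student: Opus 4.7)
The plan mirrors the proof of Lemma~\ref{lem:general_ge_hypercube}, now keeping track of the extra zero-count variable $w$. First I would consider the full four-variable generating function $F^{\mathrm{full}}(x,y,w_1,w_2) = \sum N(n,s,p_1,p_2)\, x^n y^s w_1^{p_1} w_2^{p_2}$, derive its functional equation by applying the recursion of Lemma~\ref{lem:general_hypercube_mr}, and then invoke Proposition~\ref{prop:barpartial} (together with the obvious symmetry $N(n,s,p_1,p_2)=N(n,s,p_2,p_1)$) to collapse $w_1=w_2=w$ and arrive at the generating function displayed in the lemma.

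Multiplying each term on the right-hand side of the recursion by $x^n y^s w_1^{p_1} w_2^{p_2}$ and summing over $n,s,p_1,p_2 \ge 0$, the five summands transform into factors of $F^{\mathrm{full}}$ as follows: $N(n-1,s,p_1-1,p_2-1)$ contributes $xw_1w_2$ (both tails are zero); $\sum_{j=1}^{q-1} N(n-1,s-j,p_1-1,p_2)$ contributes $xw_1 \sum_{j=1}^{q-1} y^j$ and the mirror $(p_1,p_2-1)$ term contributes $xw_2 \sum_{j=1}^{q-1} y^j$ (only one tail is zero, hence a single $w_i$); $(q-1)N(n-1,s,p_1,p_2)$ contributes $x(q-1)$ (tails equal and nonzero); and $2\sum_{j=1}^{q-1}(q-1-j) N(n-1,s-j,p_1,p_2)$ contributes $2x \sum_{j=1}^{q-1}(q-1-j) y^j$ (tails distinct and nonzero, with the factor $2$ accounting for the two orderings). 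Adding the base contribution $1$ from $N(0,0,0,0)=1$ and isolating $F^{\mathrm{full}}$ gives
\[
F^{\mathrm{full}}(x,y,w_1,w_2) = \frac{1}{1 - x\bigl[w_1w_2 + (q-1) + (w_1+w_2)\sum_{j=1}^{q-1} y^j + 2\sum_{j=1}^{q-1}(q-1-j)\,y^j\bigr]}.
\]
Specializing to $w_1=w_2=w$ yields exactly $F(x,y,w) = 1/H(x,y,w)$ with $H$ as stated.

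No substantive obstacle arises: the computation is pure bookkeeping of how each of the five cases in Lemma~\ref{lem:general_hypercube_mr} shifts the indices $(n,s,p_1,p_2)$, and the base case $N(0,\cdot,\cdot,\cdot)=\delta_{(0,0,0)}$ supplies the constant term. The only conceptual point is the collapse $w_1=w_2=w$, which is precisely the symmetry handled by Proposition~\ref{prop:barpartial} and which follows the convention already used in the proofs of Lemmas~\ref{lem:general_ge_mr} and~\ref{lem:general_ge_si}.
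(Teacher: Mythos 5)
Your proof is correct and follows essentially the same derivation as the paper: apply the recursion of Lemma~\ref{lem:general_hypercube_mr} termwise to a generating function, collect the five contributions, and collapse the two zero-count variables. You are in fact a bit more careful than the paper's proof, which manipulates the ``diagonal'' series $\sum N(n,s,p,p)x^n y^s w^p$ directly even though the recursion produces off-diagonal terms $N(n-1,s-j,p-1,p)$; your route through the genuine four-variable series $F^{\mathrm{full}}(x,y,w_1,w_2)$ followed by the specialization $w_1=w_2=w$ (justified by the $p_1\leftrightarrow p_2$ symmetry) makes that bookkeeping rigorous.
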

\begin{proof}
		\begin{align*}
			F(x,y,w)
			&= \sum_{n,s,p \ge 0}N(n,s,p,p)x^ny^sw^p \\ 
			&= \sum_{n,s,p \ge 0}N(n-1,s,p-1,p-1)x^ny^sw^p + \sum_{n,s,p \ge 0}\sum_{j=1}^{q-1}N(n-1,s-j,p-1,p)x^ny^sw^p\\ 
			&\hspace{2mm} + \sum_{n,s,p \ge 0}\sum_{j=1}^{q-1}N(n-1,s-j,p,p-1)x^ny^sw^p + \sum_{n,s,p \ge 0}(q-1)N(n-1,s,p,p)x^ny^sw^p \\ 
			& + \sum_{n,s,p \ge 0} 2 \sum_{j = 1}^{q-1}(q-1-j) N(n-1,s-j,p,p)x^ny^sw^p \\ 
		&=  1 + xF(x,y)\bigg(w^2 + (q-1) + 2w\sum_{j=1}^{q-1}y^j +2\sum_{j=1}^{q-1}(q-1-j)y^j\bigg)\,.
		\end{align*}
	%Hence,
	%\begin{align*}
		%\hspace{-1mm} F(x,y)
		%&= \frac{1}{1-x\big(w^2+(q-1) + 2w\sum_{j=1}^{q-1} y^j + 2\sum_{j=1}^{q-1}(q-1-j) y^j\big)}\,.
	%\end{align*}
\end{proof}

Here, we are interested in $N(n,\floor{\delta n}, \floor{\tau n}, \floor{\tau n})$. From Theorem~\ref{thm:acsv} and Proposition~\ref{prop:barpartial}, we need to solve the following system of equations.
\begin{equation}
\label{eq:HypercubeMR-H}
	%\begin{split}
	H = 0 \text{\quad and\quad}
	xH_{x}  = \frac{yH_{y}}{\delta} = \frac{wH_{w}}{2\tau}.  
	%\end{split}
\end{equation}

%The partial derivates $\frac{\partial H}{\partial x}$ is denoted as $H_x$. The function $H(x,y,w)$ is simply written as $H$.
%where $H = (1-x_1x_2)(1-x_1z)(1-x_2z) - yx_1x_2(1-x_1x_2z^2)$.

\begin{lemma}\label{lem:solveH_HypercubeMR_1}
The solution of the \eqref{eq:HypercubeMR-H} with respect to $\delta$ is
	\begin{align*}
			x^*(\delta,\tau) &= \frac{1}{w^2 + q-1 + 2w\sum_{j=1}^{q-1}y^j +2\sum_{j=1}^{q-1}(q-1-j)y^j} ,\\ 
			 w^*(\delta,\tau) &= \frac{\delta(q-1) + 2\sum_{j=1}^{q-1}(\delta-j(1-\tau))(q-1-j)y^j}{\sum_{j=1}^{q-1}(2j(1-\tau)-\delta)y^j}\,. 
		\end{align*}
and $y^*(\delta,\tau)$ is the root of the equation
\begin{comment}	
\begin{equation}\label{eq:y_hypercubemr}
	2(1-\tau)^2\sum_{j=1}^{q-1}j^2y^j - 2(1-\tau)(\delta+q-1)\sum_{j=1}^{q-1}jy^j + \delta(q-1)(1-\tau + (2-\tau)\sum_{j=1}^{q-1}y^j) = 0
		\end{equation}
\end{comment}
%
\begin{equation}
\label{eq:y_hypercubemr}
	\delta w^2 + 2w\sum_{j=1}^{q-1} (\delta-j)y^j+\delta (q-1) + 2\sum_{j=1}^{q-1}(q-1-j)(\delta-j) y^j  = 0\,.
\end{equation}
\end{lemma}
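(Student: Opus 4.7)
The plan is to exploit the convenient structure $H = 1 - xP$, where $P(w,y)$ denotes the polynomial
\[
P = w^2 + (q-1) + 2w\sum_{j=1}^{q-1} y^j + 2\sum_{j=1}^{q-1}(q-1-j) y^j .
\]
The equation $H=0$ immediately yields $x^* = 1/P$, which is the asserted formula for $x^*(\delta,\tau)$. The three logarithmic derivatives simplify cleanly:
\[
xH_x = -xP = -1, \qquad yH_y = -2x\Bigl(w\!\sum jy^j + \sum(q-1-j)jy^j\Bigr), \qquad wH_w = -2xw\Bigl(w + \sum y^j\Bigr),
\]
where the first identity uses $H=0$.

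Next I would convert the two ratio conditions $\delta\, xH_x = yH_y$ and $2\tau\, xH_x = wH_w$ into polynomial identities by clearing the factor $x$ using $xP=1$. This gives
\[
(\mathrm{I})\quad \delta P = 2w\sum jy^j + 2\sum(q-1-j)jy^j, \qquad (\mathrm{II})\quad \tau P = w^2 + w\sum y^j .
\]
Expanding $P$ in (I) and collecting terms produces
\[
\delta w^2 + 2w\sum(\delta-j)y^j + \delta(q-1) + 2\sum(q-1-j)(\delta-j)y^j = 0,
\]
which is exactly equation \eqref{eq:y_hypercubemr}. This takes care of the stated constraint on $y^*$.

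The final step is to extract the closed form for $w^*$. Equation (II), rewritten as $(\tau-1)w^2 + (2\tau-1)w\sum y^j + \tau(q-1) + 2\tau\sum(q-1-j)y^j = 0$, together with (I) expanded, forms a pair of quadratics in $w$ whose $w^2$ coefficients can be eliminated by the linear combination $(1-\tau)\cdot(\mathrm{I}) + \delta\cdot(\mathrm{II})$. The resulting equation is linear in $w$, and after regrouping the coefficients of $w$, of the constant term $\delta(q-1)$, and of $\sum(q-1-j)y^j$, solving for $w$ yields the advertised formula for $w^*(\delta,\tau)$.

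The only real obstacle is bookkeeping: verifying that the coefficients of $w\sum y^j$ and of $\sum(q-1-j)y^j$ collapse to the stated $\delta - 2j(1-\tau)$ and $2(\delta - j(1-\tau))$ respectively. This is a straightforward but tedious expansion of the combination $(1-\tau)\cdot 2(\delta - j) + \delta\cdot (2\tau - 1)$ and its analogue for the $(q-1-j)$-weighted sum; all cross terms cancel as required. No analytic subtlety arises beyond this, since the algebraic structure of Theorem~\ref{thm:acsv} combined with Proposition~\ref{prop:barpartial} guarantees the uniqueness of the positive solution $(x^*,y^*,w^*)$.
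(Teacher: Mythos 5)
Your proof is correct and takes essentially the same route as the paper: both obtain $x^*=1/P$ from $H=0$, convert the ratio conditions into the two polynomial equations in $w,y$ (your (I) and (II), the paper's (iv) and (v)), observe that (I) is exactly \eqref{eq:y_hypercubemr}, and eliminate the $w^2$ term by a linear combination to extract $w^*$. The paper simply states ``solving (iv) and (v) together''; your explicit combination $(1-\tau)\cdot(\mathrm{I})+\delta\cdot(\mathrm{II})$ is precisely the intended elimination.
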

\begin{proof}
Appendix~\ref{appen_solve_hypercube}.
\end{proof}

Applying \eqref{eq:asym_approx}, we have that
\begin{equation}\label{eq:asym_ball_hypercubeMR}
	\lim_{n \to \infty} \frac{\log N(n,\floor{\delta n}, \floor{\tau n}, \floor{\tau n})}{n} = -\log x^*(\delta,\tau) - \delta \log y^*(\delta,\tau) - 2\tau \log w^*(\delta,\tau)\,.  
\end{equation}
The total ball size is $|T(\bbZ^n_q(\floor{\tau n}),\floor{\delta n})| = \sum_{s=0}^{\floor{\delta n}} N(n,s,\floor{\tau n},\floor{\tau n})$. Hence, we have that
\begin{align*}
	\widetilde{T}(\bbZ_q(\tau), \delta) 
	& = \max_{0\le \delta_1 \le \delta} -\log x^*(\delta_1,\tau) - \delta_1 \log y^*(\delta_1,\tau) - 2\tau \log w^*(\delta_1,\tau) \\
	& = \begin{cases}
		-\log x^*(\delta,\tau) - \delta \log y^*(\delta,\tau) - 2\tau \log w^*(\delta,\tau)\,, &\delta\le \delta_{\rm{max}} \\
		2\bCap(\bbZ_q(\tau)) \,, &\delta \ge \delta_{\rm{max}}
	\end{cases}
\end{align*}
where
\begin{align}
\label{eq:deltamax_hypercube}
\delta_{\rm{max}}=q(1-\tau)\left(1-\frac{(1-\tau)(2q-1)}{3(q-1)}\right)\,.
\end{align}
We have obtained $\delta_{\rm{max}}$ by setting $y = 1$ in \eqref{eq:y_hypercubemr}.
We can now state the GV-MR bound.

\begin{theorem}[GV-MR bound for Hypercube]
\label{prop:gvmr-hypercube}
	For fixed $\delta \ge 0$, we have $\alpha(\bbZ_{q},\delta)\ge R_{\textsc{mr}}(\bbZ_{q}, \delta)$, where
	$$R_{\textsc{mr}}(\bbZ_{q}, \delta) \triangleq 2\bCap(\bbZ_{q}(\tau)) - \widetilde{T}(\bbZ_{q}(\tau), \delta)$$ and $\tau$ is the solution of the equation 
	\begin{equation}\label{eq:tau_hypercubemr}
		\tau (q-1)^2 - (1-\tau)(q-1) - (q-1)\sum_{j=1}^{q-1}(y^*)^j + 2(1-\tau)\sum_{j=1}^{q-1} j(y^*)^j = 0 \, .
	\end{equation}	
\end{theorem}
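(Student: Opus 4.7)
The plan is to follow the same two-step strategy as in Theorems~\ref{prop:gvmr-L1} and~\ref{prop:gvmr-SI}. The inequality $\alpha(\bbZ_q,\delta) \ge R_\textsc{mr}(\bbZ_q,\delta)$ is a direct consequence of the general GV--MR framework~\eqref{eq:GVMR} applied to the constrained family $\bbZ_q(\tau)$, so the entire content of the theorem lies in identifying the $\tau$ that achieves the maximum in~\eqref{eq:GVMR}.

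For the optimality analysis, I would first compute the derivative $\frac{\partial \bCap(\bbZ_q(\tau))}{\partial \tau}$ directly from the closed-form expression in Proposition~\ref{prop:run-capMR-hypercube}, which yields $-\log\!\bigl(\frac{(q-1)\tau}{1-\tau}\bigr)$. Next, I would imitate the proof of Theorem~\ref{thm:maxlog} on the reduced total-ball generating function $F(x,y,w)$ from Lemma~\ref{lem:general_ge_hypercube_mr}. Because Proposition~\ref{prop:barpartial} folds the two symmetric variables tracking zeros into a single variable $w$ whose exponent is $2p=2\tau n$, implicit differentiation of $H(x^*,y^*,w^*)=0$ combined with the relations $xH_x = yH_y/\delta = wH_w/(2\tau)$ in~\eqref{eq:HypercubeMR-H} produces $\frac{\partial \widetilde{T}(\bbZ_q(\tau),\delta)}{\partial \tau} = -2\log w^*$. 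Setting $2\frac{\partial \bCap}{\partial \tau} = \frac{\partial \widetilde{T}}{\partial \tau}$ then delivers the clean critical-point condition $w^* = \frac{(q-1)\tau}{1-\tau}$. Note that the factor of two on the right absorbs what would have been the square in Corollary~\ref{cor:MRsupport}, reflecting that in the reduced form the $\tau$-variable has exponent $2\tau n$ rather than $\tau n$.

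The last step is purely algebraic: I would convert $w^* = (q-1)\tau/(1-\tau)$ into an equation in $\tau$ and $y^*$ only. The key identity is $\tau A = (w^*)^2 + w^* \sum_{j=1}^{q-1} (y^*)^j$, which follows from $xH_x = wH_w/(2\tau)$ combined with $x^* = 1/A$, where $A = w^2+(q-1)+2w\sum_{j=1}^{q-1} y^j + 2\sum_{j=1}^{q-1}(q-1-j) y^j$. Substituting $w^* = (q-1)\tau/(1-\tau)$ into this identity, multiplying through by $(1-\tau)/\tau$, and expanding $\sum_{j=1}^{q-1}(q-1-j)(y^*)^j = (q-1)\sum_{j=1}^{q-1} (y^*)^j - \sum_{j=1}^{q-1} j(y^*)^j$, the coefficient of $\sum_{j=1}^{q-1} (y^*)^j$ telescopes via the identity $(2\tau-1) + 2(1-\tau) = 1$ to exactly $(q-1)$, yielding~\eqref{eq:tau_hypercubemr}. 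The main obstacle is careful bookkeeping of signs and coefficients during this simplification; in particular, the explicit $\delta$-dependence must cancel, which it does precisely because $\delta$ enters the system only implicitly through $y^*$, as already exploited in the analogous derivations for the positive and standard simplices.
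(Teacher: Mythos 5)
Your proposal is correct and essentially mirrors the paper's proof: both establish the critical-point condition $w^* = w_1^* = \frac{(q-1)\tau}{1-\tau}$ and then substitute into the relation $\tau A = (w^*)^2 + w^*\sum_{j=1}^{q-1}(y^*)^j$ (equation (v) in the paper's Appendix~\ref{appen_solve_hypercube}) to arrive at \eqref{eq:tau_hypercubemr}. The only cosmetic difference is that you derive $\partial\bCap/\partial\tau$ by differentiating the closed form in Proposition~\ref{prop:run-capMR-hypercube} and reconstruct the factor-of-two adjustment to Corollary~\ref{cor:MRsupport} from first principles (arising because the reduced total-ball variable $w$ tracks $2\tau n$ rather than $\tau n$), whereas the paper obtains $w_1^*$ from the generating function $F(x_1,w_1)=\frac{1}{1-x_1(q-1+w_1)}$ via Theorem~\ref{thm:acsv} and invokes the (implicitly adapted) corollary; your treatment of that factor of two is arguably the more careful of the two.
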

\begin{proof}
%The proof of this proposition requires the generating function $F(x_{1},w_{1}) = \sum_{n \ge 0} |\bbZ_{q}^{n}(\floor{\tau n})| x_{1}^{n}w_{1}^{\floor{\tau n}}$ that compute the size of hypercube $\bbZ_{q}^{n}(\floor{\tau n})$. It can be proved easily that 
%$$F(x_{1},w_{1}) = \frac{1}{1-x_{1}(q-1+w_{1})}.$$ 
%Now, we apply the Theorem~\ref{thm:acsv} to obtain the unique solution $w_{1}^* = \frac{\tau(q-1)}{1-\tau}$.
%Then we use the Corollary~\ref{cor:MRsupport} to find $\tau = \tau_{\rm opt}$ for which $w^* = w_{1}^*$.
%
Appendix~\ref{appen_solve_hypercube}. 
\end{proof}

The bound obtained in Theorem~\ref{prop:gvmr-hypercube} is positive for $\delta<\delta_{\max}$, where
\begin{equation}
\label{eq:delta_max_mr_hypercube}
\delta_{\max} = \frac{3q(q-1)}{4(2q-1)} .
\end{equation}
This is seen by setting $y = 1$ in \eqref{eq:tau_hypercubemr}, which gives $\tau_{\rm opt} = (q+1)/(4q-2)$, and then substituting $\tau_{\rm opt}$ in \eqref{eq:deltamax_hypercube}.
We note that the standard GV bound (Theorem~\ref{prop:gv-hypercube}) is positive for
$\delta<\frac{q^2-1}{3q}$ (see \eqref{eq:deltamax_hypercube_gv}).
It is easy to show that the expression in \eqref{eq:delta_max_mr_hypercube} is strictly larger than $\frac{q^2-1}{3q}$ for every $q>2$.

\begin{comment}
		To prove this proposition, we need to find the generating function for $\bCap(\tau)$ and find the unique solution $w_{1}^* = \frac{\tau(q-1)}{1-\tau}$ that corresponds to $\tau$. We omit the details on generating function for $\bCap(\tau)$ here and added in appendix~\ref{}\todo{add appendix}. 
	Then we use the Corollary~\ref{cor:MRsupport} to find $\tau$ by solving $w^* = w_{1}^*$. 
\end{comment}

\subsection{Numerical Results}

\begin{figure}[t!]
	\begin{center}
		\includegraphics[width=0.7\linewidth]{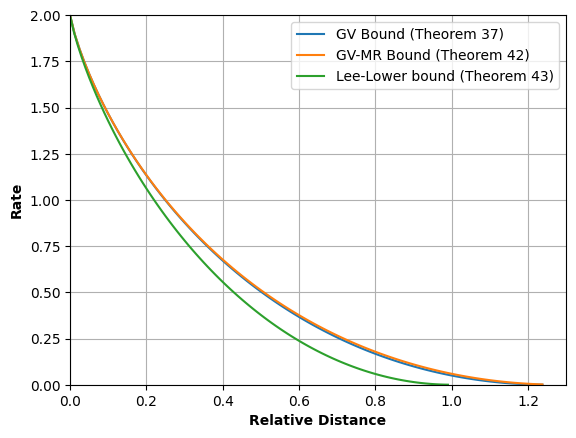}
	\end{center}
	\caption{Lower bounds on the highest attainable code rate $\alpha(\bbZ_{4},\delta)$ for the family of hypercubes $\{\bbZ^n_{4}\}_n$.}
	\vspace{-2mm}
	\label{fig:hypercube}
\end{figure}

In Figure~\ref{fig:hypercube}, we plot the bounds obtained in Theorems~\ref{prop:gv-hypercube} and \ref{prop:gvmr-hypercube} for quaternary alphabet ($q=4$).
We note that, for the GV bound, $\delta_{\max} = (q^2-1)/3q = 5/4$ (see \eqref{eq:deltamax_hypercube_gv}), while for the GV-MR bound, $\delta_{\max} = 3q(q-1)/4(2q-1) = 9/7$ (see \eqref{eq:delta_max_mr_hypercube}).
For comparison purposes, we also plot the GV lower bound for codes in the related {\em Lee distance}.

\subsubsection*{Lower Bound using Lee distance}
A simple upper bound on the total ball size $|T(\bbZ^n_q,d)|$ can be obtained by upper-bounding the volume of an $L_1$-sphere around $\bu$ by the volume of the Lee-sphere around $\bu$.
Namely, the Lee metric is a ``modular version'' of the $L_1$ metric, and a sphere with respect to the former always contains the sphere of the same radius with respect to the latter.
Furthermore, the size of a Lee-sphere is independent of its center, and deriving the corresponding GV bound is straightforward.
Here we give an example for $q=4$ that is used in Figure~\ref{fig:hypercube}.

\begin{theorem}[Lower Bound using Lee-metric]
\label{prop:lb-lee}
	For fixed $0\le\delta\le1$, we have $$\alpha(\bbZ_{4},\delta)\ge (2-\delta) \log (2-\delta) + \delta \log \delta .$$
\end{theorem}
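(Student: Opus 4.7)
The plan is to reduce the problem to the well-understood Lee-metric setting. Recall that the Lee distance between $u, v \in \bbZ_q$ is $d_L(u,v) = \min(|u-v|, q-|u-v|)$, extended additively to $\bbZ_q^n$. First I would observe that $d_L(\bu,\bv) \le D(\bu,\bv)$ coordinatewise, hence for every pair of vectors in $\bbZ_q^n$ the Lee distance is a lower bound on the $L_1$ distance. Consequently, any code with minimum Lee distance $\ge d$ is automatically a code in $\bbZ_q^n$ with minimum $L_1$ distance $\ge d$, so $A(\bbZ_q^n, d) \ge A_L(\bbZ_q^n, d)$, where $A_L$ denotes the maximum cardinality under the Lee metric.

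Next, I would invoke the standard (uniform) Gilbert--Varshamov bound in the Lee setting. The Lee metric is translation-invariant, so balls have a common size $V_L(n,r) = |\{\bv \in \bbZ_q^n : d_L(\bzero,\bv) \le r\}|$ and the classical GV argument gives $A_L(\bbZ_q^n, d) \ge q^n / V_L(n,d-1)$.

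For the specific case $q=4$, the single-coordinate Lee weight enumerator is $1 + y + y^2 + y = 1 + 2y + y^2 = (1+y)^2$, so the Lee weight enumerator on $\bbZ_4^n$ is $(1+y)^{2n}$ and the ball volume simplifies to
\begin{equation*}
V_L(n, r) = \sum_{i=0}^{r} \binom{2n}{i}.
\end{equation*}
A routine entropy estimate for partial binomial sums then yields $\limsup_{n\to\infty} \tfrac{1}{n}\log V_L(n, \lfloor \delta n\rfloor) = 2\HH(\delta/2)$ for $0\le\delta\le 1$.

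Putting the pieces together, $\alpha(\bbZ_4,\delta) \ge \log 4 - 2\HH(\delta/2) = 2 - 2\HH(\delta/2)$. A short algebraic simplification, using $\HH(\delta/2) = -\tfrac{\delta}{2}\log\tfrac{\delta}{2} - (1-\tfrac{\delta}{2})\log(1-\tfrac{\delta}{2})$, rewrites $2 - 2\HH(\delta/2)$ as $\delta \log \delta + (2-\delta)\log(2-\delta)$, matching the claimed bound. The only genuinely nontrivial point is the containment of the $L_1$-ball in the Lee-ball (equivalently, the pointwise inequality of metrics), which makes the Lee-metric GV bound valid for the $L_1$ problem; the rest is a direct calculation.
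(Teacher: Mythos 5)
Your argument is correct and follows the paper's route: reduce the $L_1$ problem to the Lee-metric GV bound and exploit the pointwise inequality $d_L\le D$. The only stylistic differences are that you pass through the (equivalent, and arguably cleaner) code-inclusion inequality $A(\bbZ_4^n,d)\ge A_L(\bbZ_4^n,d)$ rather than the paper's ball-containment observation, and you derive the Lee-metric asymptotics $2-2\HH(\delta/2)=(2-\delta)\log(2-\delta)+\delta\log\delta$ from the weight enumerator $(1+y)^{2n}$ explicitly, where the paper simply cites \cite[Theorem 6]{Gardy1992}.
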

\begin{proof}
The GV bound for Lee-metric codes in $\bbZ^n_q$ is well-known.
Its asymptotic form for $q=4$, namely $(2-\delta) \log (2-\delta) + \delta \log \delta$, was given in \cite[Theorem 6]{Gardy1992}.
By what was said above, the highest attainable rate of $L_1$-metric codes in $\bbZ^n_4$ must also be lower-bounded by this function.
\end{proof}

\section{Conclusion}
Obtaining nontrivial lower bounds on the rates of optimal codes in nonuniform spaces is generally a difficult problem.
We have illustrated how the problem can be approached by using the tools of multivariate analytic combinatorics.
In particular, we have derived the general Gilbert--Varshamov bound \cite{gu1993generalized} and its improvements \cite{marcus1992improved} for $L_1$-metric codes in several spaces in $\bbZ^n$.
The study was motivated by concrete communication models, such as the sticky-insertion/tandem-duplication channel, the permutation channel, the adjacent transposition/bit-shift channel, digital communication channels with multilevel baseband transmission, etc.
In addition to being relevant for these and possibly other applications, we hope the paper will inspire further work on the applications of multivariate analytic combinatorics in asymptotic coding theory.

\newpage

% Generated by IEEEtran.bst, version: 1.14 (2015/08/26)

\newpage
\appendices
\section{Proof of Lemma~\ref{lem:solveH_L1_1}}
\label{appen_solve_L1}

\begin{proof}
We need to find the positive solution of
\begin{equation}
\label{eq:stickyH_L1}
	\begin{split}
		H &= 0 \\
	\frac{xH_x}{2} &= \frac{yH_{y}}{\rho} = \frac{zH_{z}}{\delta}.
	\end{split}
\end{equation}
where $H = (1-x^2)(1-xz) - y(1+xz)$.

Take the partial derivatives as
		\begin{align*}
			\frac{xH_{x}}{2} &= \frac{-\big(2x^2(1-xz)+xz(1-x^2)\big)  - yxz}{2}, \tag{i}\\
			\frac{zH_z}{\delta} &= \frac{-xz(1-x^2) - yxz}{\delta}, \tag{ii}\\ 
			\frac{yH_y}{\rho} &= \frac{-y(1+xz)}{\rho}  \tag{iii}.
		\end{align*}
	Since $H=0$, we substitute $y$ with
	\begin{equation}\label{eq:Ystar_L1}
		y = \frac{(1-x^2)(1-xz)}{(1+xz)}.
	\end{equation}
Therefore, (i)--(iii) simplifies to
		\begin{align*}
			\frac{xH_{x}}{2} &= \frac{-(1+xz)(2x^2+xz-3x^3 z) -(1-x^2)(1-xz)(xz)}{2(1+xz)}\\
			&= -\frac{1}{2(1+xz)}\Big[ (2x^2+xz-3x^3z)(1+xz) - (1-x^2)(x^2z^2)\Big]\\
			&= -\frac{1}{1+xz}\Big[x^2(1-x^2z^2)+xz(1-x^2) \Big], \tag{iv} \\
			\frac{zH_z}{\delta} &= \frac{-xz(1-x^2)}{\delta} -  \frac{xz(1-x^2)(1-xz)}{\delta(1+xz)} \\
			&= -\frac{2xz(1-x^2)}{\delta(1+xz)}, \tag{v} \\
			yH_y &= -\frac{(1-x^2)(1-x^2z^2)}{\rho(1+xz)} \tag{vi}. 
		\end{align*}
	By equating (v) and (vi), we have 
	\begin{align*}
	%	\frac{2xz(1-x^2)(1-xz)}{\delta} &=  \frac{(1-x^2)(1-xz)(1-x^2z^2)}{\rho}, \\
	%	\text{which is equivalent to }
		\frac{2xz(1-x^2)}{\delta} &=  \frac{(1-x^2z^2)(1-x^2)}{\rho} .%\text{as in (\ref{eq:Ystar_L1}) $y > 0$} 
	\end{align*}
	Since $x$ is positive, we have
	\begin{equation}\label{eq:Zstar_L1}
		\frac{2xz}{\delta} =  \frac{1-x^2z^2}{\rho} .
	%	xz %= \frac{-2\rho \pm \sqrt{4\rho^2 + 4\delta^2}}{2\delta} 
	%	= \frac{-\rho + \sqrt{\rho^2 + \delta^2}}{\delta}.
	\end{equation}
From \eqref{eq:Zstar_L1}, we have $xz = \frac{-\rho + \sqrt{\rho^2 + \delta^2}}{\delta}$.	
	Lastly, we equate (iv) and (vi) to get 
	\begin{align*}
		x^2(1-x^2z^2)+xz(1-x^2) &= \frac{(1-x^2)(1-x^2z^2)}{\rho}.
	\end{align*}
Recall from \eqref{eq:Zstar_L1} that,
	\begin{align*}
		1-x^2z^2 &= \frac{2\rho}{\delta} xz.
	\end{align*}
Therefore, we have
	\begin{align*}	
		\frac{2\rho}{\delta} (xz)(x^2) + xz(1-x^2) = \frac{2xz(1-x^2)}{\delta} \,.
	\end{align*}	
%		\text{which is equivalent to } \frac{2\rho}{\delta}(x^2) + (1-x^2) &= %     \frac{2(1-x^2)}{\delta}  \text{as } xz > 0,\\
%		\text{which is equivalent to } 2\rho(x^2)  &= (2-\delta)(1-x^2) \\ 
Solving this equation we get %$1-x^2 &= \frac{2\rho}{2+2\rho-\delta}$.
\begin{equation}
 \label{eq:Xstar_L1}
		x = \sqrt{1-  \frac{2\rho}{2+2\rho-\delta}}.
\end{equation}
	Therefore, Lemma \ref{lem:solveH_L1_1} results from \eqref{eq:Ystar_L1}, \eqref{eq:Zstar_L1}, and \eqref{eq:Xstar_L1}.
\end{proof}

When $\delta \le \delta_{\max}$, $\widetilde{T}(\Delta_\rho,\delta) = -2\log x - \rho \log y - \delta \log z$, where
\begin{equation*}
	x = \sqrt{1-  \frac{2\rho}{2+2\rho-\delta}}, \quad y = 2\frac{\sqrt{\rho^2+\delta^2}-\delta}{2+2\rho - \delta}, \quad z= \frac{\sqrt{\rho^2+\delta^2}-\rho}{x\delta}.
\end{equation*}
Note that $\delta = \delta_{\rm max}$ corresponds to $z^* = 1$. Therefore, we have that
\begin{align*}
 \delta^2(2-\delta) &= (2+2\rho-\delta)\big(\sqrt{\rho^2+\delta^2}-\rho\big)^2. 
\end{align*}
As $\rho \ge 0$, $0 \le \delta \le 2$, the unique solution is
\begin{equation}
\delta_{\max} = \frac{2(1+\rho)}{2+\rho} .
\end{equation}

\section{Proof of Lemma~\ref{lem:solveH_L1} and Theorem~\ref{prop:gvmr-L1}}
\label{appen_solve_L1_MR} 

\begin{proof}[{Proof of Lemma \ref{lem:solveH_L1}}]
We need to find the positive solution of
\begin{equation}\label{eq:L1MR_H}
	\begin{split}
		H &= 0 \\
		\frac{xH_{x}}{2} = \frac{yH_{y}}{\rho} &= \frac{zH_{z}}{\delta}  = \frac{wH_{w}}{2\tau}
	\end{split}
\end{equation}
where $H = (1-x^2)(1-xz) - y\big(w^2(1-x^2)(1-xz) + 2wxz(1-x^2) + x^2(1+xz)\big)$.

Computing the partial derivatives gives
	{\footnotesize
		\begin{align*}
			H &= (1-x^2)(1-xz) - y\big(w^2(1-x^2)(1-xz) + 2wxz(1-x^2) + x^2(1+xz)\big),  \tag{i} \\
			\frac{xH_{x}}{2} &= \frac{-2x^2(1-xz) - xz(1-x^2) -y\big(w^2((1-x^2)(-xz)-2x^2(1-xz)) +2w(xz(1-x^2)-2x^3z)+2x^2(1+xz)+x^3z \big)}{2}, \tag{ii}\\
			\frac{zH_z}{\delta} &= \frac{-xz(1-x^2) - y\big(-w^2xz(1-x^2)+2wxz(1-x^2)+x^3z \big)}{\delta}, \tag{iii}\\ 
			\frac{yH_y}{\rho} &= \frac{-y\big(w^2(1-x^2)(1-xz) + 2wxz(1-x^2) + x^2(1+xz) \big)}{\rho}  \tag{iv} \\
			\frac{wH_w}{2\tau} &= \frac{-y\big(w^2(1-x^2)(1-xz) + wxz(1-x^2) \big)}{\tau}  \tag{v} \\
		\end{align*}}
Set $\lambda_1 = xz$, $\lambda_2 = x^2$, which transforms the set of equations to
	{\footnotesize
		\begin{align*}
			\frac{xH_{x}}{2} &= \frac{-2\lambda_2(1-\lambda_1) - \lambda_1(1-\lambda_2) -y\big(\lambda_2((1-\lambda_2)(-\lambda_1)-2\lambda_2(1-\lambda_1)) +2w(\lambda_1(1-\lambda_2)-2\lambda_1\lambda_2)+2\lambda_2(1+\lambda_1)+\lambda_1\lambda_2 \big)}{2}, \tag{vi}\\
			\frac{zH_z}{\delta} &= \frac{-\lambda_1(1-\lambda_2) - y\big(-w^2\lambda_1(1-\lambda_2)+2w\lambda_1(1-\lambda_2)+\lambda_1\lambda_2 \big)}{\delta}, \tag{vii}\\ 
			\frac{yH_y}{\rho} &= \frac{-y\big(w^2(1-\lambda_2)(1-\lambda_1) + 2w\lambda_1(1-\lambda_2) + \lambda_2(1+\lambda_1) \big)}{\rho},  \tag{viii} \\
			\frac{wH_w}{2\tau} &= \frac{-y\big(w^2(1-\lambda_2)(1-\lambda_1) + w\lambda_1(1-\lambda_2) \big)}{\tau} \,.  \tag{ix}
		\end{align*}}
Then, since $H=0$, we substitute $y$ with	\begin{equation}\label{eq:Ystar_L1MR}
		y = \frac{(1-\lambda_1)(1-\lambda_2)}{w^2(1-\lambda_1)(1-\lambda_2) + 2w\lambda_1(1-\lambda_2) + \lambda_2(1+\lambda_1)}.
	\end{equation}
By equating (vii), (viii) and (ix), we have
\begin{align}
\label{eq:omega2star_L1MR}
		\lambda_2 &= 1 - (1+\lambda_1)\frac{2(\rho-\tau)-\delta(1-\lambda_1)}{2-\delta}
\end{align}
and
\begin{align}
\label{eq:wstar_L1MR}	
		w &= \frac{\lambda_2(\delta(1-\lambda_1^2)-2\lambda_1(\rho-\tau))}{\lambda_1(1-\lambda_2)(2(\rho-\tau)-\delta(1-\lambda_1))}.
\end{align}
By equating (vi), (vii) and substituting $y,\lambda_2$ and $w$, we have that $\lambda_1^*$ is the root of the equation
{\small
\begin{align}\label{eq:omega1star_L1MR}
	& \delta^2(1+\rho)\lambda_1^5 + (-\delta^2(1+\tau)+4\delta(\rho-\tau)(1+\rho))\lambda_1^4 + (-\delta^2(\rho+2)+2\delta(\rho-\tau)(\rho-\tau-2)+4(\rho-\tau)^2(1+\rho))\lambda_1^3 \notag \\ & + (-\delta^2(\rho-2\tau-2)-2\delta(\rho-\tau)(\rho+\tau+2)-4(\rho-\tau)^2(1-\rho))\lambda_1^2 
	+ (\delta^2 - 4\delta(\rho-\tau)(\rho-\tau-1))\lambda_1 + \delta^2(\rho-\tau-1) = 0 \,.
\end{align}}
Therefore, Lemma \ref{lem:solveH_L1} results from \eqref{eq:Ystar_L1MR}, \eqref{eq:omega2star_L1MR}, \eqref{eq:wstar_L1MR}, and \eqref{eq:omega1star_L1MR}.
\end{proof}

\vspace*{2mm}

\begin{comment}
	the generating function $F(x_{1},y_{1},w_{1}) = \sum_{n \ge 0} |\Delta_{(n, \floor{\rho n})}(\floor{\tau n})| x_{1}^{n}y_{1}^{\floor{\rho n}}w_{1}^{\floor{\tau n}}$ that compute the size of simplex $\Delta_{(n, \floor{\rho n})}(\floor{\tau n})$. It can be proved easily that 
	$$F(x_{1},y_{1},w_{1}) = \frac{1-x_{1}}{(1-x_{1})-y_{1}(w_{1}(1-x_{1})+x_{1})}.$$ 
	Now, we apply the Theorem~\ref{thm:acsv} to obtain the unique solution $w_{1}^* = \frac{\tau(1+\tau-\rho)}{(\rho-\tau)^2}$.
	Then we use the Corollary~\ref{cor:MRsupport} to find $\tau = \tau_{opt}$ for which $w^* = w_{1}^*$. 
\end{comment}

\begin{proof}[{Proof of Theorem \ref{prop:gvmr-L1}}]
Note that we already have the generating function $$F(x_{1},y_{1},w_{1}) = \sum_{n \ge 0} |\Delta_{n, r}(p)| x_{1}^{n}y_{1}^{r}w_{1}^{p} = \frac{1-x_{1}}{(1-x_{1})-y_{1}(w_{1}(1-x_{1})+x_{1})}$$ that counts the number of words in $\Delta_{n,r}(p)$. Further, we have the unique solution $w_{1}^* = \frac{\tau(1+\tau-\rho)}{(\rho-\tau)^2}$ from Theorem~\ref{thm:acsv}. Thus, we apply Corollary~\ref{cor:MRsupport} to obtain $\tau_{\rm opt}$, that is $w_1^* = w^*$.

Therefore, we have that
\begin{equation}
\label{eq:condMRL1}
\frac{\rho(1+\tau-\rho)}{(\rho-\tau)^2} = \frac{\lambda_2(\delta(1-\lambda_1^2)-2\lambda_1(\rho-\tau))}{\lambda_1(1-\lambda_2)(2(\rho-\tau)-\delta(1-\lambda_1))}.	
\end{equation}
We substitute $\lambda_2$ from \eqref{eq:omega2star_L1MR} in \eqref{eq:condMRL1} and solve it together with \eqref{eq:omega1star_L1MR} to get, 
\begin{align*}
	\tau_{\rm opt} &= \frac{\rho^2}{1+\rho-\lambda_1^*}, \notag \\ 
	\delta &=  \frac{2\rho\lambda_1^*}{1+\rho\lambda_1^*-(\lambda_1^*)^2}.
\end{align*} 

Note that $\delta = \delta_{\rm max}$ corresponds to $z^* = 1$, which implies that $\lambda_2^* = (\lambda_1^*)^2$. Thus from \eqref{eq:omega2star_L1MR}, we have that
$1 - (1+\lambda_1^*)\frac{2(\rho-\tau)-\delta(1-\lambda_1^*)}{2-\delta} = (\lambda_1^*)^2$, solving which we get
\begin{align*}
 \tau &= \rho + \lambda_1^* - 1 .  
\end{align*}
Equating it to $\tau_{\rm opt}$, we have that $\lambda_1^* = 1$ and therefore $\delta_{\rm max} = 2$.
\end{proof}

\section{Proof of Theorem~\ref{prop:gv-si-opt}}\label{appen-optimization}

It is easy to show that $F(x_{1},y_{1}) = \sum_{n \ge 0} |\Delta^{+}_{n, r}| x_{1}^{n}y_{1}^{r} = \frac{1-x_{1}}{1-x_{1}-x_{1}y_{1}}$. Further, we have the unique solution $y_{1}^* = \frac{\rho}{1-\rho}$ from Theorem~\ref{thm:acsv}. Thus, we apply Corollary~\ref{cor:MRsupport} to obtain $\rho_{\rm opt}$, that is $(y_1^*)^2 = y^*$, where $y^* = 2\frac{\sqrt{\rho^2 + \delta^2}-\delta}{2-\delta - 2\rho} $ from Lemma~\ref{lem:sol-SI}. Therefore, we have that
\begin{align*}
	\frac{\rho^2}{(1-\rho)^2} &= 2\frac{\sqrt{\rho^2 + \delta^2}-\delta}{2-\delta - 2\rho}.
\end{align*}
We solve this equation to get
\begin{align*}
\rho = \frac{3(2-\delta) - \sqrt{9\delta^2-4\delta+4}}{8} .
\end{align*}

{%\color{blue}
\section{Proof of Theorem~\ref{prop:gvmr-SI} and Theorem~\ref{prop:gv-is-mr} }\label{appen_solve_sticky}

\begin{proof}[{Proof of Theorem \ref{prop:gvmr-SI}}]
 The following generating function is easy to derive
 \begin{equation}
 \label{eq:genf_deltanrp}
 F(x_{1},y_{1},w_{1}) = \sum_{n \ge 0} |\Delta^{\!+}_{n, r}(p)| x_{1}^{n}y_{1}^{r}w_{1}^{p} = \frac{1-x_{1}}{1-x_{1}-x_{1}y_{1}(w_{1}(1-x_{1})+x_{1})} .
 \end{equation}
 Furthermore, we have the unique solution $w_{1}^* = \frac{\tau(1+\tau-2\rho)}{(\rho-\tau)^2}$ from Theorem~\ref{thm:acsv}. Thus, we apply Corollary~\ref{cor:MRsupport} to obtain $\tau_{\rm opt}$, that is $w_1^* = w^*$.

Note that from Lemma~\ref{lem:solveH_SI}, we have
\begin{align*}
w^*(\delta) &= \frac{\lambda_2^*(\delta(1-(\lambda_1^*)^2)-2\lambda_1^*(\rho-\tau))}{\lambda_1^*(1-\lambda_2^{*})(2(\rho - \tau) - \delta(1-\lambda_1^*))}, \tag{i}\\
y^*(\delta) &= \frac{(1-\lambda_1^*)(1-\lambda_2^*)}{\lambda_2^*((w^*)^2(1-\lambda_1^*)(1-\lambda_2^*) + 2w^*\lambda_1^*(1-\lambda_2^*) + \lambda_2^*(1+\lambda_1^*))},\tag{ii} \\
\lambda_2^*(\delta) &= 1 - (1+\lambda_1^*)\frac{2(\rho - \tau) - \delta(1-\lambda_1^*)}{(2(1-\rho)-\delta)} \tag{iii} 
\end{align*}
and
\begin{align*}
	&(1-\lambda_1)(\delta(1-\lambda_1^2)-2\lambda_1(\rho-\tau))^2(2(1-\rho)-2(\rho-\tau)(1+\lambda_1)-\delta\lambda_1^2) \notag \\& \hspace{2mm}+ \lambda_1^2(1+\lambda_1)(2(\rho-\tau)-\delta(1-\lambda_1))^2(\delta(1-\lambda_1^2)-2\lambda_1(\rho-\tau)-2\tau) = 0. \tag{iv} 
\end{align*}
Therefore, we have
\begin{align}\label{eq:condMRSI-w}
	\frac{\tau(1+\tau-2\rho)}{(\rho-\tau)^2} &= \frac{\lambda_2^*(\delta(1-(\lambda_1^*)^2)-2\lambda_1^*(\rho-\tau))}{\lambda_1^*(1-\lambda_2^{*})(2(\rho - \tau) - \delta(1-\lambda_1^*))}.
\end{align}
We substitute $\lambda_2^*$ from (iii) in \eqref{eq:condMRSI-w} and then solve it together with (iv) to get,
\begin{align*}
\tau_{\rm opt} &= \frac{\rho^2}{1-\lambda_1^*(1-\rho)}, \tag{v}\\ 
\delta(\lambda_1^*) &= \frac{2\lambda_1^*\rho(1-\rho)}{\rho\lambda_1^* + (1-\rho)(1-(\lambda_1^*)^2)}. \tag{vi} 
\end{align*}
Note that we have $\delta = \delta_{\rm max}$ corresponds to $z^* = 1$, which implies that $\lambda_2^* = (\lambda_1^*)^2$. Thus from (iii), we have that
$1 - (1+\lambda_1^*)\frac{2(\rho-\tau)-\delta(1-\lambda_1^*)}{2(1-\rho)-\delta} = (\lambda_1^*)^2$. Solving this, we get
\begin{align*}
	\tau &= \rho - (1-\lambda_1^*)(1-\rho).  
\end{align*}
and thus equating it by $\tau_{\rm opt}$, we have that $\lambda_{1}^{*} = 1$ and therefore $\delta_{\rm max} = 2$.
%This completes the proof for Proposition~\ref{prop:gvmr-SI}.
\end{proof}

\begin{proof}[{Proof of Theorem \ref{prop:gv-is-mr}}]
We need to optimize the result obtained in Theorem~\ref{prop:gvmr-SI} using Theorem~\ref{thm:acsv} and Corollary~\ref{cor:MRsupport}.
Recall the generating function \eqref{eq:genf_deltanrp}.
We have the unique solution $y_{1}^* = \frac{(1-\rho)(\rho-\tau)^2}{\rho(1+\tau-2\rho)^2}$ from Theorem~\ref{thm:acsv}. Thus, we apply Corollary~\ref{cor:MRsupport} to obtain $\rho_{\rm opt}$, that is $y_1^* = y^*$.
We have 
\begin{equation}\label{eq:condMRIS}
	\frac{(1-\rho)(\rho-\tau)^2}{\rho(1+\tau-2\rho)^2} = \frac{(1-\lambda_1^*)(1-\lambda_2^*)}{\lambda_2^*((w^*)^2(1-\lambda_1^*)(1-\lambda_2^*) + 2w^*\lambda_1^*(1-\lambda_2^*) + \lambda_2^*(1+\lambda_1^*))}.
\end{equation}
Now, we first substitute $\lambda_2^*$ from (iii) in \eqref{eq:condMRIS} and then we solve it by substituting $\delta$ and $\tau$ from (v) and (vi) to get,
\begin{align*}
 \rho_{\rm opt} &= \frac{2\sqrt{1-\lambda_{1}^{*}}}{3\sqrt{1-\lambda_{1}^{*}} + \sqrt{1+3\lambda_{1}^{*}}}. \tag{vi}
\end{align*}      
\end{proof}

}

\section{Proof of Lemma~\ref{lem:solveH_HypercubeMR_1} and Theorem~\ref{prop:gvmr-hypercube}}
\label{appen_solve_hypercube}

\begin{proof}[{Proof of Lemma \ref{lem:solveH_HypercubeMR_1}}]
We need to find the positive solution of
\begin{equation}
\label{eq:stickyH_Hypercube}
	\begin{split}
		H &= 0 \\
		xH_x &= \frac{yH_{y}}{\delta} = \frac{wH_{w}}{2\tau}
	\end{split}
\end{equation}
where $H = 1-x\Big(w^2+(q-1) + 2w\sum_{j=1}^{q-1} y^j+2\sum_{j=1}^{q-1}(q-1-j) y^j\Big)$.

Take the partial derivatives as
		\begin{align*}
			xH_{x} &= -x\bigg(w^2+(q-1) + 2w\sum_{j=1}^{q-1} y^j+2\sum_{j=1}^{q-1}(q-1-j) y^j\bigg) = -1, \tag{i}\\
			\frac{yH_y}{\delta} &= \frac{-x}{\delta}\bigg(2w\sum_{j=1}^{q-1} jy^j+2\sum_{j=1}^{q-1}(q-1-j)j y^j\bigg), \tag{ii}\\ 
			\frac{wH_w}{2\tau} &= \frac{-x}{\tau}\bigg(w^2 + w\sum_{j=1}^{q-1} y^j\bigg)  \tag{iii}.
		\end{align*}
Since $H = 0$, we substitute $x$ with	
\begin{equation}
\label{eq:x_hyper}
		x = \frac{1}{w^2+(q-1) + 2w\sum_{j=1}^{q-1} y^j+2\sum_{j=1}^{q-1}(q-1-j) y^j}.
\end{equation}
By equating (i), (ii) and (iii), we have 
	\begin{align*}
		2w\sum_{j=1}^{q-1} jy^j+2\sum_{j=1}^{q-1}(q-1-j)j y^j &= 
		\delta\bigg(w^2+(q-1) + 2w\sum_{j=1}^{q-1} y^j+2\sum_{j=1}^{q-1}(q-1-j) y^j\bigg), \\
		w^2 + w\sum_{j=1}^{q-1} y^j &= \tau\bigg(w^2+(q-1) + 2w\sum_{j=1}^{q-1} y^j+2\sum_{j=1}^{q-1}(q-1-j) y^j\bigg) ,
	\end{align*}
which simplifies to
		\begin{align*} 
		 \delta w^2 + 2w\sum_{j=1}^{q-1} (\delta-j)y^j+\delta (q-1) + 2\sum_{j=1}^{q-1}(q-1-j)(\delta-j) y^j & = 0, \tag{iv}\\
		 (1-\tau) w^2  + (1-2\tau)w\sum_{j=1}^{q-1} y^j -\tau (q-1) -2\tau \sum_{j=1}^{q-1}(q-1-j) y^j & = 0. \tag{v}	
		\end{align*}
Solving (iv) and (v) together, we have
\begin{equation}
\label{eq:wstar_hyper} 
	 	w  = \frac{\delta (q-1) + 2\sum_{j=1}^{q-1}(q-1-j)(\delta - j(1-\tau)) y^j }{\sum_{j=1}^{q-1} (2j(1-\tau)-\delta)y^j}.   
\end{equation}
\begin{comment}	 
	Lastly, we substitute $w$ from equation~\ref{eq:wstar_hyper} in equation (iv) to get
	\begin{equation}\label{eq:ystar_hyper}
	2(1-\tau)^2\sum_{j=1}^{q-1}j^2y^j - 2(1-\tau)(\delta+q-1)\sum_{j=1}^{q-1}jy^j + \delta(q-1)(1-\tau + (2-\tau)\sum_{j=1}^{q-1}y^j) = 0	
	\end{equation}
\end{comment}
Therefore, Lemma \ref{lem:solveH_HypercubeMR_1} results from (iv), \eqref{eq:x_hyper} and \eqref{eq:wstar_hyper}.
\end{proof}

\begin{proof}[{Proof of Theorem~\ref{prop:gvmr-hypercube}}]
It is easy to obtain the generating function that counts the number of words in $\bbZ_{q}^{n}(p)$, namely
$$F(x_{1},w_{1}) = \sum_{n \ge 0} |\bbZ_{q}^{n}(p)| x_{1}^{n}w_{1}^{p} = \frac{1}{1-x_{1}(q-1+w_{1})} .$$ Further, we have the unique solution $w_{1}^* = \frac{\tau(q-1)}{1-\tau}$ from Theorem~\ref{thm:acsv}. Thus, we apply Corollary~\ref{cor:MRsupport} to obtain $\tau_{\rm opt}$, that is $w_1^* = w^*$.
We replace $w^*$ by $w_1^*$ in (v) to get, 
\begin{equation}
\label{eq:condMRhyper}
	\tau (q-1)^2 - (1-\tau)(q-1) - (q-1)\sum_{j=1}^{q-1}(y^*)^j + 2(1-\tau)\sum_{j=1}^{q-1} j(y^*)^j = 0 .
\end{equation}
\end{proof}


\begin{thebibliography}{10}
\providecommand{\url}[1]{#1}
\csname url@samestyle\endcsname
\providecommand{\newblock}{\relax}
\providecommand{\bibinfo}[2]{#2}
\providecommand{\BIBentrySTDinterwordspacing}{\spaceskip=0pt\relax}
\providecommand{\BIBentryALTinterwordstretchfactor}{4}
\providecommand{\BIBentryALTinterwordspacing}{\spaceskip=\fontdimen2\font plus
\BIBentryALTinterwordstretchfactor\fontdimen3\font minus
\fontdimen4\font\relax}
\providecommand{\BIBforeignlanguage}[2]{{%
\expandafter\ifx\csname l@#1\endcsname\relax
\typeout{** WARNING: IEEEtran.bst: No hyphenation pattern has been}%
\typeout{** loaded for the language `#1'. Using the pattern for}%
\typeout{** the default language instead.}%
\else
\language=\csname l@#1\endcsname
\fi
#2}}
\providecommand{\BIBdecl}{\relax}
\BIBdecl

\bibitem{bitouze2010}
N. Bitouz\'e, A. Graell i Amat, and E. Rosnes,
``Error correcting coding for a nonsymmetric ternary channel,''
\emph{IEEE Trans. Inf. Theory}, vol.~56, no.~11, pp. 5715--5729, 2010.

%\bibitem{Conway1999}
%J.~H.~Conway and N.~J.~A.~Sloane,
%\emph{Sphere Packings, Lattices and Groups},
%3rd ed., Springer, 1999.

\bibitem{dolecek2010repetition}
L.~Dolecek and V.~Anantharam,
``Repetition error correcting sets: Explicit constructions and prefixing methods,''
\emph{SIAM J. Discrete Math.}, vol.~23, no.~4, pp. 2120--2146, 2010.

\bibitem{Gardy1992}
D.~Gardy and P.~Sol\'e,
``Saddle point techniques in asymptotic coding theory,''
in \emph{Lecture Notes in Computer Science, Algebraic Coding}, vol.~573, pp. 75--81, 1992.

\bibitem{Gilbert1952}
E.~N.~Gilbert,
``A comparison of signaling alphabets,''
\emph{Bell Syst. Tech. J.}, vol.~31, no.~3, pp.~504--522, 1952.

\bibitem{keshav2022evaluating}
K.~Goyal and H.~M.~Kiah,
``Evaluating the {G}ilbert--{V}arshamov bound for constrained systems,''
in \emph{Proc. IEEE Int. Symp. Inf. Theory (ISIT)}, pp. 1348--1353, Espoo, Finland, June 2022.

\bibitem{Keshav2023}
K.~Goyal, D.~T.~Dao, H.~M.~Kiah, and M.~Kova{\v{c}}evi{\'c},
``Evaluation of the {G}ilbert--{V}arshamov bound using multivariate analytic combinatorics,''
in \emph{Proc. IEEE Int. Symp. Inf. Theory (ISIT)}, pp. 2458--2463, Taipei, Taiwan, June 2023.

\bibitem{Graham1980}
R.~L.~Graham and N.~J.~A.~Sloane,
``Lower bounds for constant weight codes,''
\emph{IEEE Trans. Inf. Theory}, vol.~26, no.~1, pp. 2658--2668, 1980.

\bibitem{gu1993generalized}
J.~Gu and T.~Fuja,
``A generalized {G}ilbert--{V}arshamov bound derived via analysis of a code-search algorithm,''
\emph{IEEE Trans. Inf. Theory}, vol.~39, no.~3, pp.~1089--1093, 1993.

%\bibitem{Guruswami2022.deletionthreshold}
%V.~Guruswami, X.~He, and R.~Li,
%``The zero-rate threshold for adversarial bit-deletions is less than 1/2,''
%\emph{IEEE Trans. Inf. Theory}, vol.~69, no.~4, pp.~2218--2239, 2023.

\bibitem{jain2017}
S. Jain, F. Farnoud, M. Schwartz, and J. Bruck,
``Duplication-correcting codes for data storage in the {DNA} of living organisms,''
\emph{IEEE Trans. Inf. Theory}, vol.~63, no.~8, pp.~4996--5010, 2017.

\bibitem{kolesnik1991generating}
V.~D.~Kolesnik and V.~Y.~Krachkovsky,
``Generating functions and lower bounds on rates for limited error-correcting codes,''
\emph{IEEE Trans. Inf. Theory}, vol.~37, no.~3, pp. 778--788, 1991.

\bibitem{Kolesnik1994}
V.~D.~Kolesnik and V.~Y.~Krachkovsky,
``Lower bounds on achievable rates for limited bitshift correcting codes,''
\emph{IEEE Trans. Inf. Theory}, vol.~40, no.~5, pp. 1443--1458, 1994.

\bibitem{kovavcevic2019RLL}
M.~Kova{\v{c}}evi{\'c},
``Runlength-limited sequences and shift-correcting codes: Asymptotic analysis,''
\emph{IEEE Trans. Inf. Theory}, vol.~65, no.~8, pp.~4804--4814, 2019.

\bibitem{kovavcevic2018multi}
M.~Kova{\v{c}}evi{\'c} and V.~Y.~F.~Tan,
``Codes in the space of multisets---Coding for permutation channels with impairments,''
\emph{IEEE Trans. Inf. Theory}, vol.~64, no.~7, pp. 5156--5169, 2018.

\bibitem{kovavcevic2018}
M.~Kova{\v{c}}evi{\'c} and V.~Y.~F.~Tan,
``Asymptotically optimal codes correcting fixed-length duplication errors in {DNA} storage systems,''
\emph{IEEE Commun. Lett.}, vol.~22, no.~11, pp. 2194--2197, 2018.

\bibitem{kovavcevic2015}
M.~Kova{\v{c}}evi{\'c} and D.~Vukobratovi{\'c},
``Perfect codes in the discrete simplex,''
\emph{Des. Codes Cryptogr.}, vol.~75, no.~1, pp.~81--95, 2015.

\bibitem{kovavcevic2021asymptotic}
M.~Kova{\v{c}}evi{\'c} and D.~Vukobratovi{\'c},
``Asymptotic behavior and typicality properties of runlength-limited sequences,''
\emph{IEEE Trans. Inf. Theory}, vol.~68, no.~3, pp.~1638--1650, 2021.

\bibitem{Kuznetsov1991}
A. V. Kuznetsov and A. J. H. Vinck,
``A coding scheme for single peakshift correction in $(d,k)$-constrained channels,''
\emph{IEEE Trans. Inf. Theory}, vol.~39, no.~4, pp.~1444--1450, 1993.

%\bibitem{Yaakobi2017}
%M.~Langberg, M.~Schwartz, and E.~Yaakobi,
%``Coding for the $L_{\infty}$-limited permutation channel,''
%\emph{IEEE Trans. Inf. Theory}, vol.~63, no.~12, pp. 7676--7686, 2017.

\bibitem{lenz2018}
A. Lenz, N. J\"{u}nger, and A. Wachter-Zeh,
``Bounds and constructions for multi-symbol duplication error correcting codes,''
preprint \href{https://arxiv.org/abs/1807.02874v1}{arXiv:1807.02874v1}, Jul. 2018.

\bibitem{lenz2021multivariate}
A.~Lenz, S.~Melczer, C.~Rashtchian, and P.~H.~Siegel,
``Multivariate analytic combinatorics for cost constrained channels and subsequence enumeration,''
preprint {\tt arXiv:2111.06105}, 2021.

\bibitem{levenshtein1965}
V. I. Levenshtein,
``Binary codes correcting deletions and insertions of the symbol $ 1 $'' (in Russian),
\emph{Probl. Peredachi Inf.}, vol.~1, no.~1, pp.~12--25, 1965.

\bibitem{levenshtein1966}
V.~I.~Levenshtein,
``Binary codes capable of correcting deletions, insertions, and reversals,''
\emph{Soviet Physics Doklady}, vol.~10, no.~8, pp.~707--710, 1966.

\bibitem{levenshtein1993}
V. I. Levenshtein and A. J. H. Vinck,
``Perfect $(d,k)$-codes capable of correcting single peak-shifts,''
\emph{IEEE Trans. Inf. Theory}, vol.~39, no.~2, pp.~656--662, 1993.

\bibitem{mahdavifar2017}
H.~Mahdavifar and A.~Vardy,
``Asymptotically optimal sticky-insertion-correcting codes with efficient encoding and decoding,''
in \emph{Proc. IEEE Int. Symp. Inf. Theory (ISIT)}, pp.~2683--2687, Aachen, Germany, June 2017.

\bibitem{marcus1992improved}
B.~H.~Marcus and R.~M.~Roth,
``Improved {G}ilbert--{V}arshamov bound for constrained systems,''
\emph{IEEE Trans. Inf. Theory}, vol.~38, no.~4, pp.~1213--1221, 1992.

\bibitem{Melczer2021}
S.~Melczer,
\emph{An Invitation to Analytic Combinatorics: From One to Several Variables},
Texts $\&$ Monographs in Symbolic Computation, Springer International Publishing, 2021.

\bibitem{pemantle2008}
R.~Pemantle and M.~C.~Wilson,
``Twenty combinatorial examples of asymptotics derived from multivariate generating functions,''
\emph{SIAM Review}, vol.~50, no.~2, pp. 199--272, 2008.

\bibitem{shamai1991}
S. Shamai (Shitz) and E. Zehavi,
``Bounds on the capacity of the bit-shift magnetic recording channel,''
\emph{IEEE Trans. Inf. Theory}, vol.~37, no.~3, pp.~863--872, 1991.

\bibitem{Sole2018}
L.~Sok, J.~C.~Belfiore,  P.~Sol\'{e}, and A.~Tchamkerten,
``Lattice codes for deletion and repetition channels,''
\emph{IEEE Trans. Inf. Theory}, vol.~64, no.~3, pp. 1595--1603, 2018.

\bibitem{tallini2010}
L. G. Tallini, N. Elarief, and B. Bose,
``On efficient repetition error correcting codes,''
in \emph{Proc. IEEE Int. Symp. Inf. Theory (ISIT)}, pp.~1012--1016, Austin, TX, USA, June 2010.

\bibitem{Varshamov1957}
R.~R.~Varshamov,
``Estimate of the number of signals in error correcting codes,''
\emph{Doklady Akad. Nauk, SSSR}, vol.~117, pp.~739--741, 1957.

\bibitem{Vasilev1990}
P.~I.~Vasil'ev,
``On $(d,k)$-constrained bitshift correcting block codes,''
in \emph{Proc. 2nd Int. Workshop on Algebraic and Combinatorial Coding Theory}, pp. 215--219, 1990.

%\bibitem{yazdi2015dna}
%S.~H.~T. Yazdi, H.~M. Kiah, E.~Garcia-Ruiz, J.~Ma, H.~Zhao, and O.~Milenkovic,
%``Dna-based storage: Trends and methods,'' 
%\emph{IEEE Trans. Molec., Biological, Multi-Scale Commun.}, vol.~1, no.~3, pp. 230--248, 2015.

%\bibitem{yonatan2019}
%Y.~Yehezkeally and M.~Schwartz,
%``Reconstruction codes for {DNA} sequences with uniform tandem-%duplication errors''
%\emph{IEEE Trans. Inform. Theory}, vol.~66, no.~5, pp.~2658--2668, 2019.




\end{thebibliography}
\end{document}